\gdef\@runningfoot{\relax}
\gdef\@copyrightpermission{\relax}
\def\runningfoot#1{\relax}
\newenvironment{enumlist}{%
\leftmargini32pt
   \begin{itemize}
}{%
   \end{itemize}}
\DeclareFontFamily{U}{xbbold}{}
\DeclareFontShape{U}{xbbold}{m}{n}{<-6>xbbold<6-8>xbbold<8->xbbold}{}
\DeclareMathAlphabet{\blmathbb}{U}{xbbold}{m}{n}
\begin{document}

\title{Invariance to Ordinal Transformations in Rank-Aware Databases}

\date{\normalsize%
  Dept. Computer Science, Palacky University Olomouc}

\author{Vilem Vychodil\footnote{%
    e-mail: \texttt{vychodil@binghamton.edu},
    phone: +420 585 634 705,
    fax: +420 585 411 643}}

\maketitle

\begin{abstract}
  We study influence of ordinal transformations on results of queries
  in rank-aware databases which derive their operations with ranked relations
  from totally ordered structures of scores with infima acting as aggregation
  functions. We introduce notions of ordinal containment and
  equivalence of ranked relations and prove that infima-based algebraic
  operations with ranked relations are invariant to ordinal transformations:
  Queries applied to original and transformed data yield results
  which are equivalent in terms of the order given by scores, meaning that
  top-$k$ results of queries remain the same. We show this
  important property is preserved in alternative query systems based of
  relational calculi developed in context of G\"odel logic. We comment on
  relationship to monotone query evaluation and show that the results
  can be attained in alternative rank-aware approaches.
\end{abstract}

\def\citeN#1{\cite{#1}}

\newtheorem{consequence}{Consequence}
\def\osub{\ensuremath{\sqsubseteq}}
\def\nosub{\ensuremath{\not\sqsubseteq}}
\def\oeqv{\ensuremath{\equiv}}
\def\inf{\ensuremath{\mathop{\mathrm{inf}}}}
\def\sup{\ensuremath{\mathop{\mathrm{sup}}}}
\def\lc{\ensuremath{\mathop{\mathrm{lc}}}}
\def\uc{\ensuremath{\mathop{\mathrm{uc}}}}
\let\bigwedge=\inf
\let\bigvee=\sup
\def\logneg{\ensuremath{\neg}}
\def\logand{\ensuremath{\wedge}}
\def\logor{\ensuremath{\vee}}
\def\logimp{\ensuremath{\Rightarrow}}
\def\logequ{\ensuremath{\Leftrightarrow}}
\def\False{\overline{0}}

\newcommand{\restrict}[2]{\ensuremath{\sigma_{\!#1}(#2)}}
\newcommand{\join}[2]{\ensuremath{#1 \bowtie #2}}
\newcommand{\project}[2]{\ensuremath{\pi_{\!#1}(#2)}}
\newcommand{\union}[2]{\ensuremath{#1 \cup #2}}
\newcommand{\minus}[2]{\ensuremath{#1 - #2}}
\newcommand{\frdiv}{\div}
\newcommand{\rdiv}[3]{\ensuremath{#2 \frdiv^{#1} #3}}
\newcommand{\gsubsop}{\ensuremath{\mathrm{S}}}
\newcommand{\gsubs}[2]{\ensuremath{\gsubsop(#1,#2)}}
\newcommand{\gequsop}{\ensuremath{\mathrm{E}}}
\newcommand{\gequs}[2]{\ensuremath{\gequsop(#1,#2)}}
\newcommand{\diff}{\ensuremath{\ominus}}
\def\itm#1{{\rm(\textit{\romannumeral#1})}}
\newcommand{\rcond}[1][\theta]{\ensuremath{\theta}}
\newcommand{\rsym}[1][r]{\ensuremath{\blmathbb{#1}}}
\newcommand{\resd}[3]{\ensuremath{#2 \fresd^{#1} #3}}
\newcommand{\fresd}{\ensuremath{\rightarrow}}

\newenvironment{relation}[1]
{\bgroup%
  \tabcolsep=.7ex%
  \arrayrulewidth=.8pt%
%  \ttfamily\upshape\mdseries%
  \renewcommand{\arraystretch}{0.9}%
  \begin{tabu}{@{}|l||*{#1}{l|}@{}}\firsthline}
{\lasthline\end{tabu}\egroup}

\def\atr#1{\ensuremath{\text{\uppercase{\texttt{\textup{#1}}}}}}
\def\prop#1#2{\ensuremath{\text{\texttt{\textit{#1}}}(#2)}}
\def\pn#1{\ensuremath{\text{\texttt{\textit{#1}}}}}
\def\val#1{\ensuremath{\text{\texttt{\textup{#1}}}}}

\def\rank{\#\rule{0pt}{8pt}}
\def\usd#1{#1}

\catcode64=11
\newcounter{global}
\theoremstyle{definition}
\newtheorem{definition}[global]{Definition}
\theoremstyle{plain}
\newtheorem{theorem}[global]{Theorem}
\newtheorem{proposition}[global]{Proposition}
\newtheorem{lemma}[global]{Lemma}
\newtheorem{corollary}[global]{Corollary}
\newtheoremstyle{note}{}{}{}{}{\itshape}{.}{.5em}{}
\theoremstyle{note}
\newtheorem{remark}{Remark}%
\newtheorem{example}{Example}%
\renewcommand\section{%
  \@startsection {section}{1}{\z@}%
  {-3.5ex \@plus -1ex \@minus -.2ex}%
  {2.3ex \@plus.2ex}%
  {\normalfont\large\bfseries}}
\def\itm#1{{\rm(\textit{\romannumeral#1})}}
\newcommand{\univ}[1][\Gamma]{\ensuremath{\mathrm{Th}(#1,Y)}}
\def\lexlt{\ensuremath{\mathrel{\lhd}}}
\newcommand{\yields}[3][\infty]{\ensuremath{#2 \mathrel{\sqsupset^*_{#1}} #3}}
\catcode64=12

%%%%%%%%%%%%%%%%%%%%%%%%%%%%%%%%%%%%%%%%%%%%%%%%%%%%%%%%%%%%%%%%%%%%%%%%%%%%%%%%
%%%%%   INTRODUCTION
%%%%%%%%%%%%%%%%%%%%%%%%%%%%%%%%%%%%%%%%%%%%%%%%%%%%%%%%%%%%%%%%%%%%%%%%%%%%%%%%
\section{Introduction}
In this paper, we describe invariance to ordinal transformations in query systems
which incorporate ranking of query results and allow to compare the
importance or relevance of query results based on their scores.
We present general observations which may be applied in various
rank-aware approaches, see~\cite{Il:ASOTQPTiRDS}
for an extensive and systematic survey of existing approaches. In particular,
we present detailed analysis in a particular relational query system where
queries are expressed by arbitrary complex algebraic expressions and answered
by relations with tuples annotated by scores (so-called ranked relations or
ranked data tables). While we analyze the issues of ordinal transformations in
one particular rank-aware model, the presented technique is indeed general and
can be applied to other models which we demonstrate by showing analogous
results for~RankSQL proposed by~\cite{LiChCh:RQAaOfRTQ}.

The practical contribution of the results presented in our paper is in exposing
transformations of input ranking criteria which do not alter results of queries.
Typically, rank-aware queries may be understood as classic queries which
in addition incorporate ranking criteria like ``low price'', ``high availability'',
``close distance'', etc. Such criteria may be defined in many ways
and/or may depend on various parameters. For instance a ``distance of locations''
of houses in a city may be based on a geographical distance, a road traveling
distance, or it may be based on socio-economic parameters such as
criminality rate and rating of schools, etc. Therefore, there is a natural
question whether results of queries change if the ranking criteria are altered.

To further explain the issues studied in this paper, we can consider the
motivation presented in the classic paper by~Fagin~\cite{Fa98:CFIfMS}:
We assume a query system which admits queries that can be answered by
relations with tuples annotated by scores.
We assume that the scores have \emph{comparative meaning}
(higher scores mean better matches) and users are interested in listing
query results sorted by scores with highest scores coming first.
Recall that in~\cite{Fa98:CFIfMS}, scores of results of queries which
are expressed as conjunctions of subqueries are computed using monotone
and strict aggregation functions, typically triangular norms~\cite{KMP:TN}
on the real unit interval. For instance, consider an expression
\begin{align}
  \prop{notExceeding}{\atr{price}, \val{\$800,000}}
  \,\mathop{\texttt{AND}}\,
  \prop{near}{\atr{location}, \val{"Old Palo Alto"}}
  \label{eqn:motiv}
\end{align}
which may be regarded as query for
houses sold at \val{\$800,000} (or a similar price which does
not exceed the value too much) in
\val{Old Palo Alto} (or near that general area).
The subqueries $\prop{notExceeding}{\atr{price}, \val{\$800,000}}$ and
$\prop{near}{\atr{location}, \val{"Old Palo Alto"}}$ may be understood
as restrictions using ranking criteria $\pn{notExceeding}$
and $\pn{near}$ which we further call \emph{general restriction conditions.}
The result of evaluating~\eqref{eqn:motiv} may be seen as
a ranked relation which results by first evaluating the
subqueries $\prop{notExceeding}{\atr{price}, \val{\$800,000}}$
and $\prop{near}{\atr{location}, \val{"Old Palo Alto"}}$
which produce ranked relations as the results of subqueries and then
aggregating the scores by a conjunctive aggregation function $\otimes$.
If the scores come from the real unit interval, it is natural to assume
that
\begin{align}
  a \otimes 1 &= a, \\
  a \otimes b &= b \otimes a, \\
  a \otimes (b \otimes c) &= (a \otimes b) \otimes c
\end{align}
are satisfied for all $a,b,c \in [0,1]$ and $\otimes$ is monotone (isotone)
with respect to the usual ordering of reals:
\begin{align}
  \text{if } a \leq b \text{, then } a \otimes c \leq b \otimes c
\end{align}
for all $a,b,c \in [0,1]$. Functions satisfying such conditions are
called triangular norms~\cite{KMP:TN} and may be understood
as generalizations of (truth functions of)
the classic conjunction~\cite{GaJiKoOn:RL}.

The main concern of~\cite{Fa98:CFIfMS} are algorithms
for efficient computation of top-$k$ results of queries like~\eqref{eqn:motiv}.
Interestingly, the paper shows a simplification of the main
algorithm for returning the top-$k$ answers of monotone queries in case
the utilized aggregation function $\otimes$ coincides with $\min(x,y)$.
In this paper, we show that the choice of minimum as the basic conjunctive
aggregation function has another important (and desirable) consequence:

\begin{consequence}\label{c:1}
  Top-$k$ results of queries do not change if ordinal transformations are
  applied to the input data and all restriction conditions which appear in queries.
\end{consequence}

By an ordinal transformation we mean a transformation which modifies
scores but preserves the order of tuples given by the scores
(a precise definition follows in the paper).
We call Consequence~\ref{c:1} the \emph{invariance to ordinal transformations.}
It can be shown that the consequence does not hold
in case of general aggregation functions. Therefore, Consequence~\ref{c:1} describes
an important property of a particular query system supporting top-$k$ queries.
Systems supporting top-$k$ queries are recently gaining
interest~\cite{Li:Ptkdqud,Ng:Tatkqubddm,Wa:TkqRDF} and investigations in this
direction may be exploited as optimization techniques. For instance, in case of
reiterated queries executed with different parameters, observations like 
Consequence~\ref{c:1} may help identify that certain changes in input parameters
will have no influence on the results of top-$k$ queries. Such observations are
beneficial especially in case of processing large
data collections~\cite{PhChZh:BigData}.

We investigate the invariance to ordinal transformations independently on
the chosen structure of scores. Instead of assuming a subset of reals
with its natural ordering as the set of scores, we assume that the set of
scores forms a totally ordered set where infima (greatest lower bounds)
exist for arbitrary subsets of scores. Note that in this setting, an infimum
of a finite non-empty set of scores is its minimum element with respect to
the total order of scores. Considering such general structures of scores,
we introduce algebraic operations on ranked relations. The operations include
the join, restriction, projection, union, difference, residuum, and division
and they can be seen as particularizations of operations used
in~\cite{BeVy:Qssbd} considering the operation of infimum as the
aggregation function. Such a set of operations is adequate for
formulating complex queries, including non-monotone ones.
Using the proposed operations, queries like~\eqref{eqn:motiv} may be
regarded as particular joins (or intersections) of general restrictions.
The results elaborated in Section~\ref{sec:ordeq} and
Section~\ref{sec:inv} show properties of ordinal transformations and
the invariance theorems.

Let us stress one practical aspect about the consequences of the
invariance theorems:

\begin{consequence}\label{c:2}
  When using infima-based algebraic operations with ranked relations,
  the scores in ranked tables have no quantitative meaning.
\end{consequence}

In other words, the meaning of scores is \emph{purely comparative.} For instance,
if a ranked relation consists of exactly two tuples with ranks $a$ and $b$ such
that $a < b$, then any kind of distance or closeness of $a$ and $b$ is irrelevant;
$a = 0.3$ and $b = 0.9$ represent the same relationship as
$a = 0.89$ and $b = 0.9$ because in both cases $a < b$.
In fact, considering the generality of our structures of scores,
we may replace the numerical scores by symbolic ones as long as the order
of the corresponding scores preserves (and reflects) the order of the
numerical scores. For instance, instead of numerical values $0$, $0.8$,
and $1$, one can use symbolic names ``not at all'', ``more or less'',
and ``fully'' provided that the order $<$ is defined as
$\text{``not at all''} < \text{``more or less''} < \text{``fully''}$.
As an immediate consequence for users of a database system implementing
infima-based operations is that the scores can be completely hidden from users
since their values (and their mutual similarity) do not represent
any quantitative information.

Our paper presents an order-theoretic treatment of invariance issues
in ranked-aware databases which are traditionally studied form the
point of view of query execution. Let us note that according to a
taxonomy introduced in~\cite{Il:ASOTQPTiRDS}, in this paper we work with a model
which uses \emph{exact methods over certain data.} Indeed, we would like
to stress that unlike the approaches to probabilistic databases~\cite{DaReSu:Pdditd}
which also contain explicit scores, we are concerned with \emph{certain data.}
Possible extensions of our observations to models for uncertain data should prove
interesting but it is not the objective of our paper.

The present paper is related to our previous
analysis of preservation of similarity of query results for similar
input data~\cite{BeUrVy:Sadrqlor} where the similarity is formalized as
closeness of scores without considering the issues of order preservation.
Indeed, \cite{BeUrVy:Sadrqlor} introduces formulas for expressing lower
bounds of similarity of query results performed with pairwise similar
input data. The notion of similarity (of input data and results of queries)
in~\cite{BeUrVy:Sadrqlor} is based on residuated
implications~\cite{Gog:Lic} and captures the fact that ranked relations
consist of tuples with similar scores in terms of their closeness.
As such, the notion does not express the fact that the order of tuples
is preserved. In Section~\ref{sec:inv}, we present notes on how this
approach can be combined with the present one. Also note that issues
related to ordinal transformations of object-attribute data were studied
from the point of view of formal concept analysis~\cite{GaWi:FCA}
in~\cite{Be:Oed} where the author shows that ordinally equivalent
input data induce almost isomorphic concept lattices.

Our paper is organized as follows. Section~\ref{sec:prelim} presents
preliminaries from partially ordered sets and lattices which are used
in our paper as the basic structures of scores. Section~\ref{sec:relmod}
describes the rank-aware model for which we make the analysis of the
invariance to ordinal transformations. Section~\ref{sec:ordeq} is devoted
to the properties of order equivalence of ranked relations which plays
an important role in the analysis. Section~\ref{sec:inv} contains the
invariance and additional discussion. Section~\ref{sec:heyting} shows
how our results relate to a relational calculus developed in
the context of G\"odel logic. Section~\ref{sec:comp} discusses issues
of efficient query evaluation which arise in the model and comments
on the relationship to other approaches.

%%%%%%%%%%%%%%%%%%%%%%%%%%%%%%%%%%%%%%%%%%%%%%%%%%%%%%%%%%%%%%%%%%%%%%%%%%%%%%%%
%%%%%   PRELIMINARIES
%%%%%%%%%%%%%%%%%%%%%%%%%%%%%%%%%%%%%%%%%%%%%%%%%%%%%%%%%%%%%%%%%%%%%%%%%%%%%%%%
\section{Preliminaries}\label{sec:prelim}
In this section, we recall preliminary notions of partially ordered sets and
lattices. The notions are used in further sections to formalize structures of
scores which are used in the considered model of data. More details on the
notions presented in this section can be found in~\cite{Bir:LT}.

A partial order on a non-empty set $L$ is a binary relation $\leq$ on $L$
which is reflexive ($a \leq a$), antisymmetric ($a \leq b$ and $b \leq a$
yield $a = b$), and transitive ($a \leq b$ and $b \leq c$ yield $a \leq c$).
A pair $\mathbf{L} = \langle L,\leq\rangle$ where $\leq$ is a partial order
on $L$ is called a partially ordered set (shortly, a poset).
A partial order $\leq$ on $L$ is called a total (or a linear) order whenever
for any $a,b \in L$, we have $a \leq b$ or $b \leq a$
in which case $\mathbf{L} = \langle L,\leq\rangle$ is called
a totally ordered set or a chain.

An element $a \in L$ is called the least element of $K \subseteq L$
in $\mathbf{L} = \langle L,\leq\rangle$ whenever $a \leq b$ for all $b \in K$.
Dually, we consider the notion of a greatest element. If both the least and
the greatest elements exist for the whole $L$, we say that $\mathbf{L}$ is
bounded and denote the fact by $\mathbf{L} = \langle L,\leq,0,1\rangle$
where $0$ and $1$ stand for the least and the greatest element
in $\mathbf{L}$, respectively.

For subsets of a partially ordered set $\mathbf{L} = \langle L,\leq\rangle$,
we consider their greatest lower bounds and least upper bounds as follows:
For $K \subseteq L$, we put
\begin{align}
  \lc K &= \{a \in L;\, a \leq b \text{ for all } b \in K\}, \\
  \uc K &= \{b \in L;\, a \leq b \text{ for all } a \in K\},
\end{align}
and call $\lc K$ and $\uc K$ the lower and upper cones of $K$
in $\mathbf{L}$, respectively. If $\lc K$ has the greatest element,
it is called the infimum (the greatest lower bound) of $K$ in $\mathbf{L}$
and denoted by $\inf K$.
Dually, if $\uc K$ has the least element, it is called the supremum
(the least upper bound) of $K$
in $\mathbf{L}$ and denoted by $\sup K$. If for any $a,b \in L$,
the elements $\inf \{a,b\}$ and $\sup \{a,b\}$ exist,
then $\leq$ is called a lattice order and $\mathbf{L}$ is
called a lattice ordered set (shortly, a lattice).
Each totally ordered set is a lattice because if $a \leq b$,
then obviously $\inf \{a,b\} = a$ and $\sup \{a,b\} = b$.
In addition, if $\mathbf{L}$ is totally ordered then for any
non-empty and finite $K$, it follows that $\inf K$ and $\sup K$
coincide with the least and greatest elements in $K$, respectively.
If for any $K \subseteq L$, the elements $\inf K$ and $\sup K$ exist,
then $\leq$ is called a complete lattice order and $\mathbf{L}$ is
called a complete lattice ordered set (shortly, a complete lattice).
Each complete lattice is a bounded lattice because
$\inf L = \sup \emptyset = 0$ (the least element of $\mathbf{L}$)
and $\inf \emptyset = \sup L = 1$ (thre greatest element of $\mathbf{L}$).
In general, a (bounded) lattice may not be complete
(consider a subset of reals $L = [0,1] \setminus \{0.5\}$ equipped with the
usual ordering $\leq$ of reals).

There is an alternative view of partially ordered sets and lattices via
algebraic structures: Let $\mathbf{L} = \langle L,\sqcap,\sqcup\rangle$
be an algebra with two binary operations $\sqcap$ (called a meet)
and $\sqcup$ (called a join) such that both $\sqcap$ and $\sqcup$ are
commutative, associative, idempotent (i.e., $a \sqcap a = a \sqcup a = a$
for any $a \in L$), and satisfy the laws of absorption:
$a \sqcap (a \sqcup b) = a$ and $a \sqcup (a \sqcap b) = a$ for all $a,b \in L$,
then the algebra $\mathbf{L}$ is called a lattice. It can be easily shown
that for a partially ordered set $\mathbf{L} = \langle L,\leq\rangle$ which
is a lattice in the order-theoretic sense, we can consider
an algebra on $L$ with $a \sqcap b = \inf \{a,b\}$ and $a \sqcup b = \sup \{a,b\}$
which is a lattice in the latter sense.
Conversely, for a lattice $\mathbf{L} = \langle L,\sqcap,\sqcup\rangle$, we may
introduce $a \leq b$ if{}f $a = a \sqcap b$ (or, equivalently, $a \sqcup b = b$).
As a consequence, we may understand lattices as both special partially ordered
structures and special algebras. In the paper, whenever we consider a (complete)
lattice $\mathbf{L}$, we automatically consider the lattice order $\leq$ and
treat $\inf$ and $\sup$ as operations on $L$.

In the follows sections, we assume that scores we use to annotate tuples in
relations come from a bounded totally ordered set
$\mathbf{L} = \langle L,\leq,0,1\rangle$ and, optionally, we assume that
$\mathbf{L}$ is in addition a complete lattice. The operations $\inf$ and $\sup$
are used to obtain greatest lower and least upper bounds of finite (or arbitrary
if $\mathbf{L}$ is complete) subsets of scores. For instance, if one considers
conjunctive (or disjunctive) queries consisting of several subqueries,
$\inf$ (or $\sup$) is used to aggregate scores from the subqueries to
obtain a score for the composed query. Details are discussed in
Section~\ref{sec:relmod}.

\begin{remark}\label{rem:Bool2}%
  As a borderline case of complete totally ordered lattices we may take the
  two-element Boolean algebra which is uniquely given up to isomorphism. That is,
  for $L = \{0,1\}$ where $0$ denotes the truth value ``false'' and $1$ denotes
  the truth value ``true'' and putting $0 \leq 1$, we obain a complete totally
  ordered lattice where $\inf$ and $\sup$ coincide with the truth functions
  of the classic logical connectives ``conjunction'' and ``disjunction''.
  In addition, a truth function of negation may be introduced as a complement,
  i.e., $0' = 1$ and $1' = 0$ and the resulting structure
  $\mathbf{L} = \langle L,\inf,\sup,',0,1\rangle$ is a two-element Boolean
  algebra. From the point of view of the ranked approach used in this paper,
  $0$ and $1$ may be seen as two borderline
  scores---$0$ represents a mismatch while $1$ represents a match. Because of
  the well-known property of functional completeness of Boolean algebras,
  every $n$-ary truth function may be expressed by means of terms consisting
  of variables and $\inf$, $\sup$, and $'$. For instance, a truth function for
  implication (logical conditional) can be introduced as
  $a \rightarrow b = \sup\{a', b\}$.
\end{remark}

Considering Remark~\ref{rem:Bool2} and the fact that general bounded totally
ordered sets serve as structures of ranks, $\inf$ and $\sup$ may be seen as
generalizations of truth functions of logical connectives ``conjunction'' and
``disjunction''. A natural question is whether we can obtain analogies of
truth functions of other important locical connectives like the negation and
implication. This question is important because in the classic relational model
such connectives as crucial for expressing many relational operations like the
difference, semidifference, and division which cannot be expressed just using
$\inf$ and $\sup$. We therefore consider additional binary connectives which
are adjoint to $\inf$ and $\sup$ and serve as generalizations of truth functions
of logical connectives ``implication'' and ``non-implication''
(so-called abjunction): For a bounded lattice
$\mathbf{L} = \langle L,\leq,0,1\rangle$, consider a binary operaton $\rightarrow$
such that 
\begin{align}
  \inf \{a,b\} \leq c \text{ if{}f } a \leq b \rightarrow c
  \label{eqn:adj}
\end{align}
holds true for all $a,b,c \in L$.
Note that $\rightarrow$ may not exist but if
it exists for given $\mathbf{L}$, then it is given uniquely. In the terminology
or ordered sets, $\rightarrow$ is called a relative pseudo-complement or
a residuum.
Alternatively, $\rightarrow$ may be introduced as a binary operation
on $\mathbf{L}$ which satisfies the following conditions
\begin{align}
  a \rightarrow a &= 1, \\
  \inf \{a, a \rightarrow b\} &= \inf \{a,b\}, \\
  \inf \{a \rightarrow b, b\} &= b, \\
  a \rightarrow \inf\{b,c\} &= \inf \{a \rightarrow b, a \rightarrow c\},
\end{align}
for every $a,b,c \in L$. The resulting structure 
$\mathbf{L} = \langle L,\inf,\sup,\rightarrow,0,1\rangle$ is called
a~Heyting algebra. Note that Heyting algebras are used as semantic structures
of the intuitionistic logic~\cite{He}. That is, in the intuitionistic logic, they play
an analogous role as the Boolean algebras in the classic logic.
If $\mathbf{L}$ satisfies the following additional condition
\begin{align}
  \sup \{a \rightarrow b, b \rightarrow a\} &= 1
\end{align}
for all $a,b \in L$, then it is called a G\"odel algebra. Analogously
as the Heyting algebras are the semantic structures of the intuitionistic logic,
G\"odel algebras are semantic structures of G\"odel logic which is a stronger
logic than the intuitionistic logic but it is not as strong as the Boolean logic.
According to~\cite{Haj:MFL}, G\"odel logic may be seen as a schematic
extension of the Basic logic.

Condition~\eqref{eqn:adj} is called the adjointness property of $\inf$
and $\rightarrow$. It ensures that $\rightarrow$ is a faithful truth function
of a general implication. In particular, if $\mathbf{L}$ is totally ordered,
we get that
\begin{align}
  a \rightarrow b &=
  \begin{cases}
    1, & \text{if } a \leq b, \\
    b, & \text{otherwise,}
  \end{cases}
  \label{eqn:resd_chain}
\end{align}
for all $a,b \in L$. Therefore, if we restrict ourselves just to
$\{0,1\} \subseteq L$, we get $1 \rightarrow 0 = 0$ and
$0 \rightarrow 0 = 0 \rightarrow 1 = 1 \rightarrow 1 = 1$, i.e.,
on $\{0,1\}$, $\rightarrow$ acts as a truth function of the classic
implication. In general, we have $a \rightarrow b = 1$ if{}f $a \leq b$.
Now, a generalization of the classic negation and equivalence
(logical biconditional) can be introduced by
\begin{align}
  \neg a &= a \rightarrow 0, \\
  a \leftrightarrow b &= \inf\{a \rightarrow b, b \rightarrow a\},
\end{align}
for all $a,b \in L$. Taking~\eqref{eqn:resd_chain} into account, we have 
\begin{align}
  \neg a &=
  \begin{cases}
    1, & \text{for } a = 0, \\
    0, & \text{otherwise,}
  \end{cases}
  \label{eqn:neg_chain}
  \\
  a \leftrightarrow b &=
  \begin{cases}
    1, & \text{for } a = b, \\
    \inf\{a,b\}, & \text{otherwise,}
  \end{cases}
  \label{eqn:neg_biresd}
\end{align}
where the $\inf\{a,b\}$ in ~\eqref{eqn:neg_biresd} is in fact a minimum of
$a$ and $b$ since all elements in a totally ordered $\mathbf{L}$ are comparable.

Analogously as $\rightarrow$ is adjoint to $\inf$ in sense of~\eqref{eqn:adj}, we
may apply the duality principle and introduce a generalization of logical
non-implication which is adjoint to $\sup$. Namely, following the ideas
of~\cite{Ra:SBataildo}, see also~\cite{OrRa:Rmcs}, we may consider
a binary operation $\diff$ such that
\begin{align}
  a \diff b \leq c \text{ if{}f } a \leq \sup\{b,c\}
  \label{eqn:dadj}
\end{align}
for all $a,b,c \in L$. Alternatively, we may postulate the following equalities:
\begin{align}
  a \diff a &= 0, \\
  \sup \{a \diff b, b\} &= \sup \{a,b\}, \\
  \sup \{a, a \diff b\} &= a, \\
  \sup\{a,b\} \diff c  &= \sup \{a \diff c, b \diff c\},
\end{align}
for every $a,b,c \in L$. Analogously as in the case of $\rightarrow$,
if $\mathbf{L}$ is totally ordered, it follows that
\begin{align}
  a \diff b &=
  \begin{cases}
    0, & \text{if } a \leq b, \\
    a, & \text{otherwise,}
  \end{cases}
  \label{eqn:diff_chain}
\end{align}
for all $a,b \in L$. Therefore, $\diff$ is indeed a generalization of
a non-implication (a logical difference bounded by $0$ and $1$) because
on $\{0,1\} \subseteq L$, we have
$0 \diff 0 = 0 \diff 1 = 1 \diff 1 = 0$ and
$1 \diff 0 = 1$. In the following sections, we utilize $\rightarrow$
in the definitions of rank-aware relational containment and division
and $\diff$ is utilized in a rank-aware relational difference.

If $\mathbf{L}$ is a totally ordered G\"odel algebra which is defined
on the real unit interval (with $\leq$ being the usual ordering of reals),
then we call $\mathbf{L}$ the standard G\"odel algebra~\cite{Haj:MFL}
and denote it by $[0,1]_\mathbf{G}$.

%%%%%%%%%%%%%%%%%%%%%%%%%%%%%%%%%%%%%%%%%%%%%%%%%%%%%%%%%%%%%%%%%%%%%%%%%%%%%%%%
%%%%%   RELATIONAL MODEL AND SCORES
%%%%%%%%%%%%%%%%%%%%%%%%%%%%%%%%%%%%%%%%%%%%%%%%%%%%%%%%%%%%%%%%%%%%%%%%%%%%%%%%
\section{Rank-Aware Relational Model of Data}\label{sec:relmod}
In this section, we introduce a relational rank-aware model of data. Namely,
we describe structures formalizing data tables which appear in the model and
relational operations which constitute the core of relational queries that
take scores into account. The model may be viewed as a particularization of
a model based on complete residuated lattices which has been outlined
in~\cite{BeVy:Qssbd} which results by a choice of special structures of
scores based on G\"odel algebras described in Section~\ref{sec:prelim}.

First, we recall the basic notions which appear in the (classic) relational
model of data. In the paper we consider \emph{relation schemes} as finite sets of
\emph{attributes.} We tacitly identify attributes with their names, i.e.,
attributes are considered as ``names of columns'' in data tables.
As usual, we assume that each attribute in a relation scheme has its \emph{type}
which defines a (possibly infinite but at most denumerable) set of admissible
values for the attribute. We write $u = v$ whenever two value $u,v$ of a particular
type are indistinguishable and $u \ne v$ otherwise.
\emph{Tuples}, which formalize ``rows in data tables''
are considered as maps assigning to each attribute from relation schemes a value
of its type; we denote by $r(y)$ the \emph{$y$-value} of tuple $r$. Furthermore, we
denote by $\mathrm{Tupl}(R)$ the \emph{set of all tuples} on the relation
scheme $R$. Again, note that $\mathrm{Tupl}(R)$ may be infinite.
For $r_1,r_2 \in \mathrm{Tupl}(R)$, we put $r_1 = r_2$ whenever $r_1(y) = r_2(y)$
for all $y \in R$ and $r_1 \ne r_2$ otherwise.
Tuples $r \in \mathrm{Tupl}(R)$ and $s \in \mathrm{Tupl}(S)$ are
called \emph{joinable} whenever $r(y) = s(y)$ for all $y \in R \cap S$.
If $r \in \mathrm{Tupl}(R)$ and $s \in \mathrm{Tupl}(S)$ are joinable,
then $rs$, called the \emph{join} of $r$ and $s$, is a tuple
in $\mathrm{Tupl}(R \cup S)$ such that $(rs)(y) = r(y)$
for $y \in R$ and $(rs)(y) = s(y)$ for $y \in S$.

\begin{remark}\label{rem:emptup}%
  Note that the join of tuples is also called a concatenation and it may be seen
  as a set-theoretic union of tuples since tuples are considered as sets of
  attribute-value pairs, see~\cite{Mai:TRD}. In a special case for $R = \emptyset$
  (the empty relation scheme), $\mathrm{Tupl}(R)$ consists of a single
  tuple---the empty tuple which, according to the set-theoretic representation
  of tuples, may be identified with the empty set.
  Thus, we write $\mathrm{Tupl}(\emptyset) = \{\emptyset\}$.
\end{remark}

Let $\mathbf{L} = \langle L,\leq,0,1\rangle$ be a totally ordered complete lattice.
The elements in $L$ are called \emph{scores}. The scores have a comparative meaning.
That is, if $a < b$ for $a,b \in L$, then $b$ is a score of a better match than $a$.
As a consequence, $1$ is the score of a best match (a full match) and $0$ 
is the score of a worst match (no match).
Considering $\mathbf{L}$ as the structure of scores,
a \emph{ranked data table} (shortly, an RDT) $\mathcal{D}$ on relation scheme $R$
which uses scores in $\mathbf{L}$ is understood as a map
\begin{align}
  \mathcal{D}\!: \mathrm{Tupl}(R) \to L
  \label{eqn:D}
\end{align}
such that $\{r \in \mathrm{Tupl}(R); \mathcal{D}(r) > 0\}$,
called the \emph{answer set} of $\mathcal{D}$, is finite.
That is, only finitely many tuples in $\mathrm{Tupl}(R)$ are assigned
non-zero scores by an RDT $\mathcal{D}$ on $R$;
$\mathcal{D}(r)$ is the score of tuple $r$ in RDT $\mathcal{D}$.
RDTs defined on non-empty relation schemes
may be represented by two-dimensional tables with rows corresponding to
tuples from the answer set, columns corresponding to attributes, and an
extra column (denoted by~$\texttt{\rank}$) containing the scores. For illustration,
if $L = [0,1]$ and $\leq$ is the usual order of reals, the tables in
Fig.\,\ref{fig:RDTs} may be viewed as ranked data tables with scores
in $L = [0,1]$. In the figure, the tuples from the answer set are
sorted according to their scores.

\begin{figure}
  \centering
  % size
  \begin{relation}{3}
    \rank &
    \atr{ID} &
    \atr{BDRM} &
    \atr{SQFT} \\
    \hline
    \rule{0pt}{9pt}%
    $1.000$ & \val{85} & \val{5} & \val{4580} \\
    $0.971$ & \val{56} & \val{3} & \val{3400} \\
    $0.937$ & \val{71} & \val{3} & \val{3280} \\
    $0.643$ & \val{82} & \val{4} & \val{2350} \\
    $0.426$ & \val{58} & \val{4} & \val{1760} \\
    $0.148$ & \val{93} & \val{2} & \val{1130} \\
  \end{relation}
  \quad
  % price
  \begin{relation}{3}
    \rank &
    \atr{ID} &
    \atr{AGENT} &
    \atr{PRICE} \\
    \hline
    \rule{0pt}{9pt}%
    $0.997$ & \val{71} & \val{Black} & \val{\usd{798,000}} \\
    $0.964$ & \val{58} & \val{Black} & \val{\usd{829,000}} \\
    $0.940$ & \val{71} & \val{Adams} & \val{\usd{849,000}} \\
    $0.798$ & \val{45} & \val{Adams} & \val{\usd{654,000}} \\
    $0.789$ & \val{82} & \val{Adams} & \val{\usd{648,000}} \\
    $0.778$ & \val{85} & \val{Black} & \val{\usd{998,000}} \\
    $0.708$ & \val{45} & \val{Black} & \val{\usd{598,000}} \\
    $0.708$ & \val{93} & \val{Black} & \val{\usd{598,000}} \\
  \end{relation}
  \caption{%
    Examples of ranked tables:
    Houses with area roughly greater than \val{3,500} sq. ft. (left),
    houses offered by agents for about \usd{\$800,000} (right).}
  \label{fig:RDTs}
\end{figure}

In the borderline case of $R = \emptyset$, the answer set of $\mathcal{D}$ on $R$
contains at most the empty tuple $\emptyset$. If the answer set is empty then
clearly $\mathcal{D}(\emptyset) = 0$. Otherwise, $\mathcal{D}$ is uniquely
given by the non-zero score $\mathcal{D}(\emptyset) \in L$. Note that this
naturally generalizes the two borderline relations on the empty relation scheme $R$
which appear in the classic model: the empty relation on $R$ and the relation
on $R$ containing the empty tuple.

Furthermore, we consider equality of RDTs as follows:
For RDTs $\mathcal{D}_1$ and $\mathcal{D}_2$ on $R$ we put 
$\mathcal{D}_1 = \mathcal{D}_2$ whenever
$\mathcal{D}_1(r) = \mathcal{D}_2(r)$ for all $r \in \mathrm{Tupl}(R)$,
i.e., whenever $\mathcal{D}_1$ and $\mathcal{D}_2$ are equal as maps.
The \emph{range} (\emph{or scores}) of RDT $\mathcal{D}$ on relation scheme $R$,
denoted $L(\mathcal{D})$, is a subset of $L$ defined by
\begin{align}
  L(\mathcal{D}) &= \{\mathcal{D}(r);\, r \in \mathrm{Tupl}(R)\}.
  \label{eqn:LD}
\end{align}
That is, $L(\mathcal{D})$ is the set of all scores from $L$ which appear
in $\mathcal{D}$. Thus, $L(\mathcal{D})$ is finite for any $\mathcal{D}$.
Let us note here that if $L(\mathcal{D}) \subseteq \{0,1\}$,
then $\mathcal{D}$ may be viewed as a ranked representation of a classic
relation on a relation scheme. Indeed, for an ordinary (finite)
$\mathcal{R} \subseteq \mathrm{Tupl}(R)$, we can introduce a
corresponding RDT $\mathcal{D}_\mathcal{R}$ by putting
$\mathcal{D}_\mathcal{R}(r) = 1$ whenever $r \in \mathcal{R}$
and $\mathcal{D}_\mathcal{R}(r) = 0$ otherwise.
Conversely, for RDT $\mathcal{D}$ on $R$, we may consider
a corresponding $\mathcal{R}_\mathcal{D} \subseteq \mathrm{Tupl}(R)$
as $\mathcal{R}_\mathcal{D} = \{r;\, \mathcal{D}(r) = 1\}$.
Taking into account just RDTs with ranges being subsets of $\{0,1\}$,
the two transformations are mutually inverse. As a consequence,
in the same spirit as in the Codd model~\cite{Co:Armodflsdb},
RDTs may represent both the results of queries and base data, i.e.,
our approach uses only a single type of structures. As a consequence,
we do not mix the classic relations and the ranked data tables.

\begin{remark}\label{rem:TTM_repr}%
  (a)
  The fundamental notion of a ranked data table may seem like a digression
  from the relational model of data and in particular from its modern
  understanding as it is described in \emph{The Third Manifesto} (\emph{TTM},
  see~\cite{DaDa:DTRM}) because tuples in relations are annotated by an
  additional information which is the score. If one wishes the approach
  to adhere to \emph{TTM}, he can consider the score as an additional
  attribute (named $\rank$) which is present in the relation scheme.
  The type of the attribute $\rank$ is \emph{score}. In other words,
  RDTs may be seen as ordinary relations on relation schemes with
  a special designated attribute $\rank$ the values of which come
  from the universe of $\mathbf{L}$.

  (b)
  As we have mentioned in the introduction, our model is not related to
  probabilistic databases which are currently extensively studied. In particular,
  the scores \emph{cannot} be interpreted as probabilities. Let us note that the
  scores need not come from a real unit interval, so in general it does not make sense
  to consider the scores as probability values. Even if the scores do come from
  a unit interval, their values are not related to probabilities assigned to
  any events because there is no uncertainty involved in the data or in
  query evaluation as we shall se later.

  (c)
  Note that various approaches where tuples in relations are annotated by
  values coming from general algebraic structures exist. Most notably,
  the authors of \cite{ImLi:Iird} consider conditional tables which may be
  understood as relations with tuples annotated by Boolean formulas, i.e.,
  annotated by
  values coming from particular free Boolean algebras. A general approach
  to relations annotated by element from semi-rings is presented
  in~\cite{GrKaTa:PS}, see also \cite{FoGrTa:AXQP}, \cite{Gr:CCQAR},
  and \cite{AmDeTa:PAQ}.
\end{remark}

We now describe a set of relational operations which are used to express queries
over ranked data tables. Important types of monotone as well as non-monotone
queries in rank-aware databases may be expressed by a combination of the
following operations with RDTs which generalize their classic relational
operations in the original relational model of data. For the introduced operations,
we adopt the widely used Codd-style notation.

Let $\mathcal{D}_1$ and $\mathcal{D}_2$ be RDTs on $R \cup S$ and $S \cup T$
with $R \cap S = \emptyset$, $R \cap T = \emptyset$, and $S \cap T = \emptyset$.
The (\emph{natural}) \emph{join} of $\mathcal{D}_1$ and $\mathcal{D}_2$,
denoted $\join{\mathcal{D}_1}{\mathcal{D}_2}$, is defined by
\begin{align}
  (\join{\mathcal{D}_1}{\mathcal{D}_2})(rst) &=
  \inf\{\mathcal{D}_1(rs),\mathcal{D}_2(st)\},
  \label{eqn:join}
\end{align}
for all $r \in \mathrm{Tupl}(R)$, $s \in \mathrm{Tupl}(S)$,
and $t \in \mathrm{Tupl}(T)$. Recall that $rs$, $st$ and $rst$
in~\eqref{eqn:join} denote the results of joins of tuples which
are in this case trivially joinable. Since $\mathbf{L}$ is totally
ordered, the score $(\join{\mathcal{D}_1}{\mathcal{D}_2})(rst)$ in
\eqref{eqn:join} is in fact taken as the minimum of the scores
$\mathcal{D}_1(rs)$ and $\mathcal{D}_2(st)$. Also note that the
commutativity, associativity, and idempotency of $\inf$ implies
that $\bowtie$ has these properties as well. In addition, any $\mathcal{D}$
(over any $R$) with an empty answer set is an annihilator with respect
to $\bowtie$ and $\mathcal{D}$ on $\emptyset$ such that
$\mathcal{D}(\emptyset) = 1$ is a neutral element with respect to $\bowtie$.
The join of the illustrative RDTs in Fig.\,\ref{fig:RDTs} is shown in
Fig.\,\ref{fig:join}.

\begin{remark}\label{rem:general}
  As we have noted in the introduction, the operations with RDTs we use in this paper
  may be viewed as particular cases of those used in~\cite{BeVy:Qssbd}.
  In~\cite{BeVy:Qssbd}, the basic structures of scores are
  complete residuated lattices~\cite{GaJiKoOn:RL} which may be viewed as
  generalization of the structures of scores defined on the real unit interval
  by left-continuous triangular norms~\cite{EsGo:MTL}. The general counterpart
  to~\eqref{eqn:join} in~\cite{BeVy:Qssbd} is $\bowtie_\otimes$ defined by
  \begin{align}
    (\mathcal{D}_1 \bowtie_\otimes \mathcal{D}_2)(rst) &=
    \mathcal{D}_1(rs) \otimes \mathcal{D}_2(st)
    \label{eqn:join*}
  \end{align}
  for all $r \in \mathrm{Tupl}(R)$, $s \in \mathrm{Tupl}(S)$,
  and $t \in \mathrm{Tupl}(T)$. Obviously, \eqref{eqn:join} is a particular
  case of~\eqref{eqn:join*} with $\otimes$ being $\inf$, i.e.,
  $a \otimes b = \inf\{a,b\}$ for all $a,b \in L$. Further in the paper
  we show that joins defined by~\eqref{eqn:join} are invariant to ordinal
  transformations provided that totally ordered G\"odel algebras are used
  as structures of scores.
  We also show that the property does not hold in the general setting
  of complete residuated lattices. A similar remark can be made for all
  the operations introduced below.
\end{remark}

\begin{figure}
  \centering
  % join *size* *price*
  \begin{relation}{5}
    \rank &
    \atr{ID} &
    \atr{BDRM} &
    \atr{SQFT} &
    \atr{AGENT} &
    \atr{PRICE} \\
    \hline
    \rule{0pt}{9pt}%
    $0.937$ & \val{71} & \val{3} & \val{3280} & \val{Adams} & \val{\usd{849,000}} \\
    $0.937$ & \val{71} & \val{3} & \val{3280} & \val{Black} & \val{\usd{798,000}} \\
    $0.778$ & \val{85} & \val{5} & \val{4580} & \val{Black} & \val{\usd{998,000}} \\
    $0.643$ & \val{82} & \val{4} & \val{2350} & \val{Adams} & \val{\usd{648,000}} \\
    $0.426$ & \val{58} & \val{4} & \val{1760} & \val{Black} & \val{\usd{829,000}} \\
    $0.148$ & \val{93} & \val{2} & \val{1130} & \val{Black} & \val{\usd{598,000}} \\
  \end{relation}
  \caption{Join of ranked tables from Figure~\ref{fig:RDTs}.}
  \label{fig:join}
\end{figure}

\begin{figure}
  \centering
  % join *size* *price*
  \begin{relation}{5}
    \rank &
    \atr{ID} &
    \atr{BDRM} &
    \atr{SQFT} &
    \atr{AGENT} &
    \atr{PRICE} \\
    \hline
    \rule{0pt}{9pt}%
    $0.939$ & \val{71} & \val{3} & \val{3280} & \val{Black} & \val{\usd{798,000}} \\
    $0.938$ & \val{71} & \val{3} & \val{3280} & \val{Adams} & \val{\usd{849,000}} \\
    $0.778$ & \val{85} & \val{5} & \val{4580} & \val{Black} & \val{\usd{998,000}} \\
    $0.643$ & \val{82} & \val{4} & \val{2350} & \val{Adams} & \val{\usd{648,000}} \\
    $0.426$ & \val{58} & \val{4} & \val{1760} & \val{Black} & \val{\usd{829,000}} \\
    $0.148$ & \val{93} & \val{2} & \val{1130} & \val{Black} & \val{\usd{598,000}} \\
  \end{relation}
  \caption{RDT ``similar'' to that in Figure~\ref{fig:join}.}
  \label{fig:join_simil}
\end{figure}

We introduce restrictions (selections) of RDTs utilizing general maps
serving as restriction conditions: By a \emph{restriction condition} on $R$
we mean any map $\rcond\!: \mathrm{Tupl}(R) \to L$ with each $\rcond(r)$
interpreted as the score expressing whether (and to what degree)
tuple $r$ matches $\rcond$. Note that the ordinary restriction conditions
based on classic comparators of domain values are covered by this general
notion. For instance, if $\rcond\!: \mathrm{Tupl}(R) \to L$ is defined
so that $\rcond(r) = 1$ whenever $r(y_1) = r(y_2)$ and $\rcond(r) = 0$ otherwise,
then $\rcond$ may be seen as representing a classic restriction condition based
on equality of the values of attributes $y_1$ and $y_2$.

Given RDT $\mathcal{D}$ on $R$, we define the \emph{restriction} of
$\mathcal{D}$ using $\rcond$ on $R$ by
\begin{align}
  (\restrict{\rcond}{\mathcal{D}})(r) &=
  \inf\{\mathcal{D}(r),\rcond(r)\}
  \label{eqn:restrict}
\end{align}
for all $r \in \mathrm{Tupl}(R)$.
Obviously, the score of $r$ in $\restrict{\rcond}{\mathcal{D}}$ at most
as high as its score in $\mathcal{D}$ which is a natural property of a restriction.
Note that the ranked tables in Fig.~\ref{fig:RDTs} may be seen as results of
particular restrictions of base data tables with all scores (of tuples present
in the tables) set to $1$.

For $\mathcal{D}$ on $R$ and $S \subseteq R$, the \emph{projection}
$\project{S}{\mathcal{D}}$ of $\mathcal{D}$ onto $S$ is defined by
\begin{align}
  (\project{S}{\mathcal{D}})(s) &=
  \sup\{\mathcal{D}(st);\, t \in \mathrm{Tupl}(R \setminus S)\}
  \label{eqn:project}
\end{align}
for all $s \in \mathrm{Tupl}(S)$. Here, notice the use of $\sup$ instead of
$\inf$ which corresponds to the close relationship of projections and
existentially quantified queries. Recall that in the classic setting,
the fact that $s$ belongs to a projection of a relation onto $S$ means
that \emph{there exists} $t$ such that $st$ is in the relation. In a similar
sense, the score of $s$ in $\project{S}{\mathcal{D}}$ is defined by
\eqref{eqn:project} as the highest score of $st$ over all $t$ (note that
two different tuples in the answer set of $\mathcal{D}$ may be projected
onto the same tuple on $S$).

For $\mathcal{D}_1$ and $\mathcal{D}_2$ on the same relation scheme $R$,
we introduce the \emph{union} of $\mathcal{D}_1$ and
$\mathcal{D}_2$ which is defined componentwise using $\sup$ as
\begin{align}
  (\union{\mathcal{D}_1}{\mathcal{D}_2})(r) &=
  \sup\{\mathcal{D}_1(r),\mathcal{D}_2(r)\},
  \label{eqn:union}
\end{align}
for all $r \in \mathrm{Tupl}(R)$. In addition, we may consider an intersection
based on $\inf$ but this operation is superfluous because it can be understood
as a join~\eqref{eqn:join} of two RDTs on the same relation scheme.

Since $\mathbf{L}$ is linearly ordered, $\diff$ which is adjoint to $\sup$
as in~\eqref{eqn:dadj} exists and it is given by~\eqref{eqn:diff_chain}.
Therefore, for $\mathcal{D}_1$ and $\mathcal{D}_2$ on the same relation
scheme $R$, we may introduce the \emph{difference} of $\mathcal{D}_1$ and
$\mathcal{D}_2$ by
\begin{align}
  (\minus{\mathcal{D}_1}{\mathcal{D}_2})(r) &=
  \mathcal{D}_1(r) \diff \mathcal{D}_2(r),
\end{align}
for all $r \in \mathrm{Tupl}(R)$, i.e.,
using~\eqref{eqn:diff_chain} and the total ordering of $\mathbf{L}$,
\begin{align}
  (\minus{\mathcal{D}_1}{\mathcal{D}_2})(r) &=
  \left\{
    \begin{array}{@{\,}l@{\quad}l@{}}
      0, &\text{if } \mathcal{D}_1(r) \leq \mathcal{D}_2(r), \\[2pt]
      \mathcal{D}_1(r), &\text{otherwise,}
    \end{array}
  \right.
  \label{eqn:minus}
\end{align}
for all $r \in \mathrm{Tupl}(R)$.

Finally, we consider operations with RDTs which are related to universally
quantified queries. In the classic model, queries of the form of categorical
propositions ``every $\varphi$ is $\psi$'' may be expressed by divisions
(or more general constructs such as the imaging operator considered
in~\cite{DaDa:imrel}) which are in the classic model expressible by
means of other operations (joins, projections, and difference). From the logical
point of view this is a consequence of the fact that the universal quantifier
is definable using negations and the existential quantifier. As we shall see
in Section~\ref{sec:heyting}, this property does not hold in a weaker logic
which is closely related to the rank-aware model. Therefore, in our case,
we have to introduce an operation in order to be able to properly express
queries of the form of categorical propositions ``every $\varphi$ is $\psi$''.
In our case, such an operation will be a variant of the Small Divide as
it is considered in~\cite{DaDa:divop}.

Note that analogously as in the case of $\diff$, the residuum $\rightarrow$
satisfying~\eqref{eqn:adj} always exists and is uniquely given
by~\eqref{eqn:resd_chain} owing to the linearity of~$\mathbf{L}$.
Let $\mathcal{D}_1$ (so-called mediator) be an RDT on $R \cup S$ such
that $R \cap S = \emptyset$, $\mathcal{D}_2$ (so-called divisor) be an RDT on $S$,
and let $\mathcal{D}_3$ (so-called dividend) be an RDT on $R$.
In this setting, we introduce a \emph{division}
$\rdiv{\mathcal{D}_3}{\mathcal{D}_1}{\mathcal{D}_2}$ as an RDT on $R$ such that
\begin{align}
  \bigl(\rdiv{\mathcal{D}_3}{\mathcal{D}_1}{\mathcal{D}_2}\bigr)(r) =
  \textstyle
  \inf\{\mathcal{D}_2(s) \rightarrow  \mathcal{D}_1(rs);\,
  s \in \mathrm{Tupl}(S)
  \} \cup \{\mathcal{D}_3(r)\},
  \label{eqn:div}
\end{align}
for all $r \in \mathrm{Tupl}(R)$. Directly from~\eqref{eqn:div}, the answer set
of $\rdiv{\mathcal{D}_3}{\mathcal{D}_1}{\mathcal{D}_2}$ is finite since it is
a subset of the answer set of $\mathcal{D}_3$. By moment's reflection, we can see
that~\eqref{eqn:div} can equivalently be written as
\begin{align}
  \bigl(\rdiv{\mathcal{D}_3}{\mathcal{D}_1}{\mathcal{D}_2}\bigr)(r) =
  \textstyle
  \inf\{\inf\{\mathcal{D}_3(r), \mathcal{D}_2(s) \rightarrow  \mathcal{D}_1(rs)\};\,
  s \in \mathrm{Tupl}(S)\}.
  \label{eqn:div_alt}
\end{align}
Observe that according to~\eqref{eqn:div_alt} and~\eqref{eqn:resd_chain},
$\bigl(\rdiv{\mathcal{D}_3}{\mathcal{D}_1}{\mathcal{D}_2}\bigr)(r) = \mathcal{D}_3(r)$
if{}f
\begin{align}
  \text{for all } s \in \mathrm{Tupl}(S)\!:\,
  \mathcal{D}_2(s) > \mathcal{D}_1(rs) \text{ implies }
  \mathcal{D}_3(r) \leq \mathcal{D}_1(rs).
  \label{eqn:div_cond}
\end{align}
Therefore, we can distinguish two cases as follows:
\begin{align}
  \bigl(\rdiv{\mathcal{D}_3}{\mathcal{D}_1}{\mathcal{D}_2}\bigr)(r) &=
  \left\{
    \begin{array}{@{\,}l@{\quad}l@{}}
      \mathcal{D}_3(r), &\text{if } \eqref{eqn:div_cond} \text{ holds,}
      \\[2pt]
      \textstyle
      \inf\{\mathcal{D}_2(s) \rightarrow \mathcal{D}_1(rs);\,
      s \in \mathrm{Tupl}(S)
      \}, 
      &\text{otherwise.}
    \end{array}
  \right.
\end{align}
Using~\eqref{eqn:resd_chain} again, we have
\begin{align}
  \bigl(\rdiv{\mathcal{D}_3}{\mathcal{D}_1}{\mathcal{D}_2}\bigr)(r) &=
  \left\{
    \begin{array}{@{\,}l@{\quad}l@{}}
      \mathcal{D}_3(r), &\text{if } \eqref{eqn:div_cond} \text{ holds,}
      \\[2pt]
      \textstyle
      \inf\{\mathcal{D}_1(rs);\,
      \mathcal{D}_2(s) > \mathcal{D}_1(rs)
      \text{, } s \in \mathrm{Tupl}(S)
      \}, &\text{otherwise.}
    \end{array}
  \right.
  \label{eqn:div_expr}
\end{align}
As a consequence, the rank of a tuple in the result of a division can always be
computed in finitely many steps because each divisor has a finite answer set.

Closely related to the division is the notion of a subsethood (inclusion of RDTs)
which, in our case, can also be expressed by a score.
Namely, for RDTs $\mathcal{D}_1$ and $\mathcal{D}_2$ on the same relation scheme $R$,
we put
\begin{align}
  \gsubs{\mathcal{D}_1}{\mathcal{D}_2} &=
  \inf\{\mathcal{D}_1(r) \rightarrow \mathcal{D}_2(r);\, r \in \mathrm{Tupl}(R)\}
  \label{eqn:gsubs_general}
  \\
  &= 
  \inf\{\mathcal{D}_2(r);\, \mathcal{D}_1(r) > \mathcal{D}_2(r) \text{ and }
  r \in \mathrm{Tupl}(R)\}
  \label{eqn:gsubs}
\end{align}
and call $\gsubs{\mathcal{D}_1}{\mathcal{D}_2}$
the \emph{subsethood score} of $\mathcal{D}_1$ in $\mathcal{D}_2$.
That is, $\gsubsop$ is not a relational operation because its result is
a score in $\mathbf{L}$ (and not an RDT).
The subsethood scores generalize the concept of containment of relations.
Indeed, if ranked tables $\mathcal{D}_1$ and $\mathcal{D}_2$ are considered
as results
of queries $Q_1$ and $Q_2$, then $\gsubs{\mathcal{D}_1}{\mathcal{D}_2}$
is the score expressing the degree to which ``if a tuple satisfies $Q_1$,
then it satisfies $Q_2$'' is satisfied by all tuples.
In particular, it is easily seen that $\mathcal{D}_1 = \mathcal{D}_2$ if{}f
$\gsubs{\mathcal{D}_1}{\mathcal{D}_2} =
\gsubs{\mathcal{D}_2}{\mathcal{D}_1} = 1$.
Subsethood scores are related to division as follows:
For $R = \emptyset$ and $\mathcal{D}_1$ and $\mathcal{D}_2$ being RDTs on $S$,
we get that $\gsubs{\mathcal{D}_1}{\mathcal{D}_2} =
(\rdiv{\mathcal{D}}{\mathcal{D}_2}{\mathcal{D}_1})(\emptyset)$,
where $\mathcal{D}$ is the RDT on $\emptyset$ such that $\mathcal{D}(\emptyset) = 1$.

\begin{example}
  If we consider the RDTs in Fig.\,\ref{fig:join} and Fig.\,\ref{fig:join_simil}
  and denote them as $\mathcal{D}_1$ and $\mathcal{D}_2$, respectively,
  then $\gsubs{\mathcal{D}_1}{\mathcal{D}_2} = 1$ because all scores
  in $\mathcal{D}_1$ are lower than or equal to the scores of the
  corresponding tuples in $\mathcal{D}_2$ (note that here we use the
  fact that $\inf \emptyset = 1$), i.e., we may say that,
  taking the scores into account,
  $\mathcal{D}_1$ is \emph{fully included} in $\mathcal{D}_2$. On the contrary, 
  $\gsubs{\mathcal{D}_2}{\mathcal{D}_1} < 1$.
  Namely, $\gsubs{\mathcal{D}_2}{\mathcal{D}_1} = \inf\{0.937\} = 0.937$.
  Analogously as the subsethood scores, we may consider
  a related notion of a \emph{similarity score}
  $\gequs{\mathcal{D}_1}{\mathcal{D}_2}$ of $\mathcal{D}_1$ and
  $\mathcal{D}_2$ defined as
  \begin{align}
    \gequs{\mathcal{D}_1}{\mathcal{D}_2} &=
    \inf\{\gsubs{\mathcal{D}_1}{\mathcal{D}_2},
    \gsubs{\mathcal{D}_2}{\mathcal{D}_1}\}.
    \label{eqn:gequs}
  \end{align}
  In this case, 
  $\gequs{\mathcal{D}_2}{\mathcal{D}_1} = 0.937$.
\end{example}

Let us note that we can introduce a ternary operation with RDTs which is defined
componentwise using $\rightarrow$ in a similar way as the union of RDTs which is
defined componentwise using $\sup$: For $\mathcal{D}_1$, $\mathcal{D}_2$,
and $\mathcal{D}_3$ on the same relation scheme $R$, we put
\begin{align}
  \bigl(\resd{\mathcal{D}_3}{\mathcal{D}_1}{\mathcal{D}_2}\bigr)(r) &=
  \inf\{\mathcal{D}_3(r), \mathcal{D}_1(r) \rightarrow \mathcal{D}_2(r)\}
  \label{eqn:resd_op}
\end{align}
for all $r \in \mathrm{Tupl}(R)$ and
call $\resd{\mathcal{D}_3}{\mathcal{D}_1}{\mathcal{D}_2}$
the \emph{$\mathcal{D}_3$-residuum} of $\mathcal{D}_1$
with respect to $\mathcal{D}_2$. Note that the operation is correct
in that the result is always an RDT, i.e., there are only finitely
many tuples for which~\eqref{eqn:resd_op} is non-zero. Analogously as
in the case of $\frdiv$, we get
\begin{align}
  \bigl(\resd{\mathcal{D}_3}{\mathcal{D}_1}{\mathcal{D}_2}\bigr)(r) &=
  \left\{
    \begin{array}{@{\,}l@{\quad}l@{}}
      \mathcal{D}_3(r), &\text{if } \mathcal{D}_1(r) \leq \mathcal{D}_2(r)
      \text{ or }
      \mathcal{D}_3(r) \leq \mathcal{D}_2(r),
      \\[2pt]
      \mathcal{D}_2(r), &\text{otherwise,}
    \end{array}
  \right.
  \label{eqn:resd_cases}
\end{align}
which follows easily by~\eqref{eqn:resd_chain}. Since $\rightarrow$ acts
in a similar way as the truth function of the classic implication,
\eqref{eqn:resd_cases} may be seen as expressing the score of
a condition ``$r$ belongs to $\mathcal{D}_3$ and if it belongs to
$\mathcal{D}_1$, then it belongs to $\mathcal{D}_2$''.

\begin{remark}
  (a)
  The operations of join, projection, union, difference, and division behave
  the same way as their ordinary counterparts when the scores in the input RDTs
  are only $0$ and $1$ (i.e., their range is a subset of $\{0,1\}$). In addition,
  the restriction also behaves as the ordinary restriction provided that the
  input RDT has only scores $0$ and $1$ and that the range of the 
  restriction condition is also a subset of $\{0,1\}$. In general, restrictions
  produce RDTs with general scores: The tables in Fig.~\ref{fig:RDTs} may be seen
  as such examples.

  (b)
  Form the point of view of the representation of RDTs as ordinary relations with
  a~special attribute $\rank$, see Remark~\ref{rem:TTM_repr}\,(a), we may think of
  the operations introduced in this section as derived operations which always
  produce a relation with $\rank$ (representing the output RDT) from other relations
  with $\rank$ (representing the input RDTs). From the perspective of \emph{TTM} and
  in particular the relational query language \emph{Tutorial D}, the operations may
  be implemented as user defined operators in a similar fashion as the operators
  supporting operations with temporal data described in~\cite{DDL:TRT}.
\end{remark}

%%%%%%%%%%%%%%%%%%%%%%%%%%%%%%%%%%%%%%%%%%%%%%%%%%%%%%%%%%%%%%%%%%%%%%%%%%%%%%%%
%%%%%   ORDINAL EQUIVALENCE OF TABLES
%%%%%%%%%%%%%%%%%%%%%%%%%%%%%%%%%%%%%%%%%%%%%%%%%%%%%%%%%%%%%%%%%%%%%%%%%%%%%%%%
\section{Ordinal Equivalence of Tables}\label{sec:ordeq}
We introduce notions of ordinal inclusion and equivalence of ranked data tables
based on positions of tuples in tables given by scores. In the next section,
we utilize the notion in a characterization of important order-related properties
of the relational operations with RDTs. Intuitively, we may consider
$\mathcal{D}_1$ and $\mathcal{D}_2$ (on the same relation scheme) ordinally
equivalent if the sequences of tuples in $\mathcal{D}_1$ and $\mathcal{D}_2$
sorted by scores are identical. Formally, we introduce the notion
as follows.

\begin{definition}
  For any $\mathcal{D}$ on $R$ and $r \in \mathrm{Tupl}(R)$, we put
  \begin{align}
    \mathcal{U}(\mathcal{D},r) &=
    \{r' \in \mathrm{Tupl}(R);\, \mathcal{D}(r') \geq \mathcal{D}(r)\}.
    \label{eqn:U}
  \end{align}
  For any $\mathcal{D}_1$ and $\mathcal{D}_2$ on $R$ we say that
  $\mathcal{D}_1$ is \emph{ordinally included} in $\mathcal{D}_2$,
  written $\mathcal{D}_1 \osub \mathcal{D}_2$, whenever
  \begin{align}
    \mathcal{U}(\mathcal{D}_1,r) \subseteq \mathcal{U}(\mathcal{D}_2,r)
    \label{eqn:osub}
  \end{align}
  for all $r \in \mathrm{Tupl}(R)$.
  Moreover, we call $\mathcal{D}_1$ and $\mathcal{D}_2$
  \emph{ordinally equivalent,}
  written $\mathcal{D}_1 \oeqv \mathcal{D}_2$, whenever
  \begin{align}
    \mathcal{D}_1 \osub \mathcal{D}_2 \text{ and }
    \mathcal{D}_2 \osub \mathcal{D}_1.
    \label{eqn:oeqv}
  \end{align}
\end{definition}

\begin{remark}
  We can immediately observe properties of $\mathcal{U}$, $\osub$,
  and $\oeqv$ which follow directly by the definition: First,
  $r \in \mathcal{U}(\mathcal{D},r)$ follows by the reflexivity of $\leq$.
  Second, if $\mathcal{D}(r) = 0$, then $\mathcal{U}(\mathcal{D},r) = \mathrm{Tupl}(R)$
  and it is infinite if $R$ contains an attribute of an infinite type. If
  $\mathcal{D}(r) > 0$, then $\mathcal{U}(\mathcal{D},r)$ is always finite and
  it is a subset of the answer set of $\mathcal{D}$ which follows directly
  by~\eqref{eqn:U}.
  Third, $\osub$ is reflexive (a consequence of the reflexivity of $\subseteq$)
  and transitive (a consequence of the transitivity of $\subseteq$) and thus
  $\osub$ is a preorder (also called a quasi order~\cite{Bir:LT}).
  In general, $\osub$ is not a partial order because it is not antisymmetric.
  Indeed, consider the RDTs $\mathcal{D}_1$ and $\mathcal{D}_2$ on $R = \{\atr{foo}\}$
  in Fig.~\ref{fig:not_antisym}. For the only tuple $r$ which appears in the answer
  set of both the RDTs, we have $\mathcal{U}(\mathcal{D}_1,r) = \{r\} =
  \mathcal{U}(\mathcal{D}_2,r)$ which from it readily follows that
  $\mathcal{D}_1 \osub \mathcal{D}_2$ and $\mathcal{D}_2 \osub \mathcal{D}_1$.
  Fourth, in general, $\osub$ has no relationship to the inclusion of answer sets.
  For instance, if $\mathcal{D}_2$ has an empty answer set
  (i.e., $\mathcal{D}_2(r) = 0$ for all $r \in \mathrm{Tupl}(R)$),
  then trivially $\mathcal{D}_1 \osub \mathcal{D}_2$ for any $\mathcal{D}_1$ on $R$.
  Fifth, by definition, $\oeqv$ is the symmetric interior of $\osub$ (i.e., the
  greatest symmetric relation contained in both $\osub$ and its inverse)
  and therefore it is an equivalence relation.
\end{remark}

Dually to $\mathcal{U}(\mathcal{D},r)$,
we may introduce $\mathcal{L}(\mathcal{D},r)$ by
\begin{align}
  \mathcal{L}(\mathcal{D},r) &=
  \{r' \in \mathrm{Tupl}(R);\, \mathcal{D}(r') \leq \mathcal{D}(r)\}
  \label{eqn:L}
\end{align}
for any $r \in \mathrm{Tupl}(R)$. Therefore, in contrast to $\mathcal{U}(\mathcal{D},r)$
which represents the set of tuples in $\mathcal{D}$ which have scores
at least as high as $\mathcal{D}(r)$, $\mathcal{L}(\mathcal{D},r)$ is
the set of tuples with scores at most as high as $\mathcal{D}(r)$.
It is easy to see that $\osub$ and $\oeqv$ can equivalently be defined
using \eqref{eqn:L} instead of~\eqref{eqn:U} which is justified by the following
assertion.

\begin{theorem}\label{th:L}%
  $\mathcal{D}_1 \osub \mathcal{D}_2$ if{}f
  for all $r \in \mathrm{Tupl}(R)$, we have
  $\mathcal{L}(\mathcal{D}_1,r) \subseteq \mathcal{L}(\mathcal{D}_2,r)$.
\end{theorem}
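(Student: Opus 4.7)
The plan is to exploit a ``swap'' symmetry between upper and lower cones. Note that for any $r, r' \in \mathrm{Tupl}(R)$ and any RDT $\mathcal{D}$, the conditions $r' \in \mathcal{U}(\mathcal{D},r)$ and $r \in \mathcal{L}(\mathcal{D},r')$ both unfold to the single inequality $\mathcal{D}(r) \leq \mathcal{D}(r')$, so they are equivalent (this does not even require totality of $\leq$, only reflexivity and the definitions~\eqref{eqn:U} and~\eqref{eqn:L}). I will treat this swap as the single structural fact that drives both implications.

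For the forward direction ($\Rightarrow$), I will assume $\mathcal{D}_1 \osub \mathcal{D}_2$, fix an arbitrary $r \in \mathrm{Tupl}(R)$, and take an arbitrary $r' \in \mathcal{L}(\mathcal{D}_1,r)$. By the swap, this is the same as $r \in \mathcal{U}(\mathcal{D}_1,r')$. Now I apply the hypothesis~\eqref{eqn:osub} at the tuple $r'$ (rather than at $r$), obtaining $r \in \mathcal{U}(\mathcal{D}_2,r')$. Swapping back yields $r' \in \mathcal{L}(\mathcal{D}_2,r)$, which is what we wanted.

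The reverse direction ($\Leftarrow$) is a literal dualization: fix $r$, take $r' \in \mathcal{U}(\mathcal{D}_1,r)$, use the swap to reinterpret it as $r \in \mathcal{L}(\mathcal{D}_1,r')$, apply the hypothesis at $r'$ to conclude $r \in \mathcal{L}(\mathcal{D}_2,r')$, and swap once more to get $r' \in \mathcal{U}(\mathcal{D}_2,r)$.

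The only conceptual obstacle, if any, is a bookkeeping one: one must remember to apply the hypothesis at $r'$ rather than at $r$ after performing the swap, since the quantifier in~\eqref{eqn:osub} ranges over all tuples and the natural ``point'' at which to use it changes under the swap. No appeal to completeness, to finiteness of answer sets, or to any operation on scores is needed; the statement is purely about the ordering of scores and the set-theoretic inclusion of level sets.
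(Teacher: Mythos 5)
Your proof is correct and follows essentially the same route as the paper: both rest on the observation that $r' \in \mathcal{U}(\mathcal{D},r)$ if{}f $r \in \mathcal{L}(\mathcal{D},r')$ (each unfolding to $\mathcal{D}(r) \leq \mathcal{D}(r')$), followed by applying the inclusion hypothesis at the tuple $r'$ rather than at $r$. The paper writes out only the forward direction and declares the converse analogous, whereas you spell out both, but there is no substantive difference.
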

\begin{proof}
  Let $\mathcal{D}_1 \osub \mathcal{D}_2$ and consider any
  $r \in \mathrm{Tupl}(R)$. Furthermore, let $r' \in \mathcal{L}(\mathcal{D}_1,r)$,
  i.e., $\mathcal{D}(r') \leq \mathcal{D}(r)$ by~\eqref{eqn:L}.
  Using~\eqref{eqn:U}, the last inequality gives
  $r \in \mathcal{U}(\mathcal{D}_1,r')$ and thus 
  $r \in \mathcal{U}(\mathcal{D}_2,r')$ because 
  $\mathcal{D}_1 \osub \mathcal{D}_2$. Now,
  from $r \in \mathcal{U}(\mathcal{D}_2,r')$
  it follows that $r' \in \mathcal{L}(\mathcal{D}_2,r)$.
  As a consequence,
  $\mathcal{L}(\mathcal{D}_1,r) \subseteq \mathcal{L}(\mathcal{D}_2,r)$.
  The converse implication can be shown by analogous arguments utilizing
  the fact that for any $\mathcal{D}$ on $R$ and arbitrary
  tuples $r,r' \in \mathrm{Tupl}(R)$, we have
  $r' \in \mathcal{U}(\mathcal{D},r)$ if{}f
  $r \in \mathcal{L}(\mathcal{D},r')$.
\end{proof}

\begin{figure}
  \centering
  \begin{relation}{1}
    \rank &
    \atr{FOO} \\
    \hline
    \rule{0pt}{9pt}%
    $0.600$ & \val{77} \\
  \end{relation}
  \quad
  \begin{relation}{1}
    \rank &
    \atr{FOO} \\
    \hline
    \rule{0pt}{9pt}%
    $0.500$ & \val{77} \\
  \end{relation}
  \caption{%
    Two distinct RDTs such that
    $\mathcal{D}_1 \osub \mathcal{D}_2$ and $\mathcal{D}_2 \osub \mathcal{D}_1$.}
  \label{fig:not_antisym}
\end{figure}

\begin{example}\label{ex:ord}%
  Recall the RDTs $\mathcal{D}_1$ and $\mathcal{D}_2$ given by
  the tables in Fig.~\ref{fig:join} and Fig.~\ref{fig:join_simil},
  respectively. As one can check, we have $\mathcal{D}_2 \osub \mathcal{D}_1$,
  i.e., $\mathcal{D}_2$ is ordinally included in $\mathcal{D}_1$.
  On the other hand, $\mathcal{D}_1 \nosub \mathcal{D}_2$ because for
  $r \in \mathrm{Tupl}(R)$ such that $r(\atr{price}) = \val{\$798,000}$,
  we have
  \begin{align*}
    \mathcal{U}(\mathcal{D}_1,r) =
    \{r,r'\} \nsubseteq
    \{r\} =
    \mathcal{U}(\mathcal{D}_2,r),
  \end{align*}
   where $r'(\atr{PRICE}) = \val{\$849,000}$. As a consequence,
   $\mathcal{D}_1$ and $\mathcal{D}_2$ are not ordinally equivalent.
\end{example}

The relations of ordinal inclusion and equivalence of ranked tables are closely
related to order-preserving maps and isomorphisms on the structure of scores.
The following definition recalls standard notions of maps between ordered sets
which we use to get insight into the notions of ordinal inclusion and equivalence.

\begin{definition}
  Let $f\!: L_1 \to L_2$ be a map such that $L_1,L_2 \subseteq L$.
  Then, \newline $f$ is called \emph{order preserving} whenever,
  for all $a,b \in L_1$,
  \begin{align}
    a \leq b \text{ implies } f(a) \leq f(b);
    \label{eqn:ord_pres}
  \end{align}
  $f$ is called \emph{order reflecting}
  whenever, for all $a,b \in L_1$,
  \begin{align}
    f(a) \leq f(b) \text{ implies } a \leq b;
    \label{eqn:ord_refl}
  \end{align}
  $f$ is called \emph{order embedding} whenever it is both
  order preserving and order reflecting; \newline
  $f$ is called \emph{order isomorphism}
  whenever it is a surjective order embedding.
\end{definition}

For $\mathcal{D}$ on $R$ and $f\!: L_1 \to L_2$ 
such that $L_1,L_2 \subseteq L$, we may consider
the usual \emph{composition} $\mathcal{D} \circ f$
(written in the diagrammatic notation) defined by
\begin{align}
  (\mathcal{D} \circ f)(r) &= f(\mathcal{D}(r))
  \label{eqn:circ}
\end{align}
for all $r \in \mathrm{Tupl}(R)$; $\mathcal{D} \circ f$ is a correctly
defined map since both $\mathcal{D}$ and $f$ are considered as maps,
see~\eqref{eqn:D}.
Observe that if $f(0) > 0$, then $\mathcal{D} \circ f$ may
not be a ranked table since infinitely many tuples in
$\mathcal{D} \circ f$ may get a non-zero score when $R$ contains
an attribute of an infinite type (we tacitly ignore the fact in the
rest of the paper because it is not relevant to our investigation).
On the other hand, if $f(0) = 0$, then there are only finitely
many $r \in \mathrm{Tupl}(R)$ such that
$(\mathcal{D} \circ f)(r) > 0$, i.e., $\mathcal{D} \circ f$
is always an RDT. In fact, in this case the answer set of
$\mathcal{D} \circ f$ is a subset of the answer set of $\mathcal{D}$.
In addition, if $f$ is order reflecting then it is easily seen that the
answer sets of $\mathcal{D} \circ f$ and $\mathcal{D}$ coincide:
$f(\mathcal{D}(r)) = 0$ yields $f(\mathcal{D}(r)) \leq f(0)$, i.e.,
$\mathcal{D}(r) \leq 0$ by~\eqref{eqn:ord_refl}, meaning that 
if $r$ is in the answer set of $\mathcal{D}$, then it is in the answer
set of $\mathcal{D} \circ f$.

The basic relationship of ordinal inclusion and equivalence
relations and particular order-preserving maps
is described by the following two assertions.

\begin{theorem}\label{th:ord_sub}%
  Let $\mathcal{D}_1$ and $\mathcal{D}_2$ be RDTs on $R$.
  Then, $\mathcal{D}_1 \osub \mathcal{D}_2$ if{}f there is
  an order-preserving map $f\!: L \to L$ such that
  $\mathcal{D}_1 \circ f = \mathcal{D}_2$.
\end{theorem}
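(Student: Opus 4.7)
The easy direction is the ``if'' part. Assuming $f$ is order-preserving and $\mathcal{D}_1\circ f=\mathcal{D}_2$, pick any $r\in\mathrm{Tupl}(R)$ and any $r'\in\mathcal{U}(\mathcal{D}_1,r)$. Then $\mathcal{D}_1(r')\geq\mathcal{D}_1(r)$, and applying $f$ together with~\eqref{eqn:ord_pres} yields $\mathcal{D}_2(r')=f(\mathcal{D}_1(r'))\geq f(\mathcal{D}_1(r))=\mathcal{D}_2(r)$, hence $r'\in\mathcal{U}(\mathcal{D}_2,r)$, which is~\eqref{eqn:osub}.

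For the ``only if'' direction I would construct $f$ explicitly from the data, exploiting the completeness of $\mathbf{L}$. The natural candidate is
\begin{align*}
  f(a) &= \sup\{\mathcal{D}_2(r');\, r'\in\mathrm{Tupl}(R), \mathcal{D}_1(r')\leq a\},
\end{align*}
adopting the convention $\sup\emptyset=0$ for the case where no such $r'$ exists. Since $\mathbf{L}$ is a complete totally ordered lattice, this supremum is defined for every $a\in L$, so $f\!:L\to L$ is a bona fide map.

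The verification has two key steps. Order preservation of $f$ is immediate: if $a\leq b$, then $\{r';\,\mathcal{D}_1(r')\leq a\}\subseteq\{r';\,\mathcal{D}_1(r')\leq b\}$, so the defining supremum can only grow. The crucial identity $f(\mathcal{D}_1(r))=\mathcal{D}_2(r)$ is where the hypothesis $\mathcal{D}_1\osub\mathcal{D}_2$ enters: for any $r'$ with $\mathcal{D}_1(r')\leq\mathcal{D}_1(r)$ we have $r\in\mathcal{U}(\mathcal{D}_1,r')$, so by~\eqref{eqn:osub} also $r\in\mathcal{U}(\mathcal{D}_2,r')$, which gives $\mathcal{D}_2(r')\leq\mathcal{D}_2(r)$. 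Thus every element of the defining set is bounded above by $\mathcal{D}_2(r)$, while $r'=r$ attains this bound, so the supremum equals $\mathcal{D}_2(r)$ exactly. In view of~\eqref{eqn:circ}, this is $(\mathcal{D}_1\circ f)(r)=\mathcal{D}_2(r)$ for every $r\in\mathrm{Tupl}(R)$.

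The only subtle point I anticipate is the possibility that $f$ is forced to collapse distinct scores in $L\setminus L(\mathcal{D}_1)$ in some unintended way, or that $f$ might fail to be well-defined on scores below the minimum of $L(\mathcal{D}_1)$; both issues are handled uniformly by the supremum formula together with the convention $\sup\emptyset=0$. No algebraic property beyond order preservation is required of $f$, so one does not have to worry about respecting $\inf$, $\sup$, or $\rightarrow$, which keeps the construction elementary.
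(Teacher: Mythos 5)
Your proof is correct and takes essentially the same approach as the paper: the ``if'' direction is identical, and for the converse the paper builds the same kind of explicit map, only dually --- it sets $f(a)=\inf\{\mathcal{D}_2(r);\,r\in\mathrm{Tupl}(R)\text{ and }\mathcal{D}_1(r)\geq a\}$ and shows that $\mathcal{D}_2(r)$ is the \emph{least} element of that set, whereas you take the supremum over the sublevel set $\{r';\,\mathcal{D}_1(r')\leq a\}$ and show that $\mathcal{D}_2(r)$ is its greatest element. The two verifications invoke the ordinal inclusion in exactly the mirror-image way, so the arguments are interchangeable (your explicit handling of the empty-set edge case via $\sup\emptyset=0$ is a small point the paper glosses over).
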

\begin{proof}
  In order to prove the only-if part of the assertion,
  assume that $\mathcal{D}_1 \osub \mathcal{D}_2$ and consider
  $f\!: L \to L$ defined by
  \begin{align}
    f(a) &= \textstyle\bigwedge\{\mathcal{D}_2(r);\,
    r \in \mathrm{Tupl}(R) \text{ and } \mathcal{D}_1(r) \geq a\}
    \label{eqn:fa}
  \end{align}
  for all $a \in L$. Observe that $f$ depends on both $\mathcal{D}_1$
  and $\mathcal{D}_2$ and it is order preserving. Indeed, if $a \leq b$, then 
  $\mathcal{D}_1(r) \geq b$ implies $\mathcal{D}_1(r) \geq a$, and so
  \begin{align*}
    \{\mathcal{D}_2(r);\,
    r \in \mathrm{Tupl}(R) \text{ and } \mathcal{D}_1(r) \geq a\}
    \supseteq
    \{\mathcal{D}_2(r);\,
    r \in \mathrm{Tupl}(R) \text{ and } \mathcal{D}_1(r) \geq b\}
  \end{align*}
  from which it follows that
  \begin{align*}
    \inf\{\mathcal{D}_2(r);\,
    r \in \mathrm{Tupl}(R) \text{ and } \mathcal{D}_1(r) \geq a\}
    \leq
    \inf\{\mathcal{D}_2(r);\,
    r \in \mathrm{Tupl}(R) \text{ and } \mathcal{D}_1(r) \geq b\},
  \end{align*}
  i.e., $f(a) \leq f(b)$.
  Moreover, using~\eqref{eqn:circ} and~\eqref{eqn:fa}, we have
  \begin{align*}
    (\mathcal{D}_1 \circ f)(r) =
    f(\mathcal{D}_1(r)) =
    \textstyle\bigwedge\{\mathcal{D}_2(r');\,
    r' \in \mathrm{Tupl}(R)
    \text{ and }
    \mathcal{D}_1(r') \geq \mathcal{D}_1(r)\},
  \end{align*}
  i.e., in order to prove $\mathcal{D}_1 \circ f = \mathcal{D}_2$,
  it suffices to show that $\mathcal{D}_2(r)$ is the least element of
  \begin{align*}
    K = \{\mathcal{D}_2(r');\,
    r' \in \mathrm{Tupl}(R)
    \text{ and }
    \mathcal{D}_1(r') \geq \mathcal{D}_1(r)\}.
  \end{align*}
  Clearly, $\mathcal{D}_2(r) \in K$ owing to the reflexivity of $\geq$ in the
  special case of $r = r'$.
  Now, consider a general $\mathcal{D}_2(r') \in K$, i.e.,
  $r' \in \mathrm{Tupl}(R)$ such that $\mathcal{D}_1(r') \geq \mathcal{D}_1(r)$.
  Using~\eqref{eqn:U},
  it means $r' \in \mathcal{U}(\mathcal{D}_1,r)$ and so
  $r' \in \mathcal{U}(\mathcal{D}_2,r)$
  using the assumption $\mathcal{D}_1 \osub \mathcal{D}_2$.
  As a consequence, $\mathcal{D}_2(r') \geq \mathcal{D}_2(r)$,
  proving that $\mathcal{D}_2(r)$ is the least element of $K$
  which further gives $\mathcal{D}_1 \circ f = \mathcal{D}_2$.

  The if-part is easy to see: Let $f\!: L \to L$ be a map
  satisfying~\eqref{eqn:ord_pres} and $\mathcal{D}_1 \circ f = \mathcal{D}_2$.
  Take $r' \in \mathcal{U}(\mathcal{D}_1,r)$. Then,
  $\mathcal{D}_1(r') \geq \mathcal{D}_1(r)$ and so
  $f(\mathcal{D}_1(r')) \geq f(\mathcal{D}_1(r))$ because $f$
  is order preserving. Using $\mathcal{D}_1 \circ f = \mathcal{D}_2$,
  we obtain
  \begin{align*}
    \mathcal{D}_2(r') = f(\mathcal{D}_1(r')) \geq
    f(\mathcal{D}_1(r)) = \mathcal{D}_2(r),
  \end{align*}
  meaning $r' \in \mathcal{U}(\mathcal{D}_2,r)$.
  Hence, $\mathcal{U}(\mathcal{D}_1,r) \subseteq
  \mathcal{U}(\mathcal{D}_2,r)$ which proves
  $\mathcal{D}_1 \osub \mathcal{D}_2$.
\end{proof}

\begin{example}
  Consider RDTs $\mathcal{D}_1$ and $\mathcal{D}_2$ as in Example~\ref{ex:ord}.
  Since $\mathcal{D}_2 \osub \mathcal{D}_1$, Theorem~\ref{th:ord_sub} yields
  there is a map $f\!: L \to L$ such that $\mathcal{D}_1 = \mathcal{D}_2 \circ f$.
  A map $f$ satisfying this property is not given uniquely.
  The map given by~\eqref{eqn:fa} described in the proof of Theorem~\ref{th:ord_sub}
  is given by
  \begin{align*}
    f(a) &=
    \left\{
      \begin{array}{@{\,}l@{\quad}l@{}}
        0, &\text{if } a = 0, \\
        0.148, &\text{if } 0 < a \leq 0.148, \\
        0.426, &\text{if } 0.148 < a \leq 0.426, \\
        0.643, &\text{if } 0.426 < a \leq 0.643, \\
        0.778, &\text{if } 0.643 < a \leq 0.778, \\
        0.937, &\text{if } 0.778 < a \leq 0.939, \\
        1, &\text{if } a > 0.939, \\
      \end{array}
    \right.
  \end{align*}
  for all $a \in L$.
\end{example}

For the next theorem, recall that $L(\mathcal{D})$, called the range of $\mathcal{D}$,
represents the set of scores which appear in $\mathcal{D}$ and in general
it includes $0$, see~\eqref{eqn:LD}.

\begin{theorem}\label{th:ord_equ}%
  Let $\mathcal{D}_1$ and $\mathcal{D}_2$ be RDTs on $R$.
  Then, $\mathcal{D}_1 \oeqv \mathcal{D}_2$ if{}f there is
  an order isomorphism $f\!: L(\mathcal{D}_1) \to L(\mathcal{D}_2)$
  such that $\mathcal{D}_1 \circ f = \mathcal{D}_2$.
\end{theorem}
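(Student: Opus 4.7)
The plan is to derive Theorem~\ref{th:ord_equ} from Theorem~\ref{th:ord_sub} by restricting the global order-preserving map produced there to the (finite) range $L(\mathcal{D}_1)$. For the only-if direction, I start from $\mathcal{D}_1 \oeqv \mathcal{D}_2$, invoke the only-if part of Theorem~\ref{th:ord_sub} on $\mathcal{D}_1 \osub \mathcal{D}_2$ to obtain an order-preserving $g\!: L \to L$ with $\mathcal{D}_1 \circ g = \mathcal{D}_2$, and define $f$ as the restriction $g \restriction L(\mathcal{D}_1)$. The identity $g(\mathcal{D}_1(r)) = \mathcal{D}_2(r)$ for every tuple $r$ simultaneously shows that $f$ has image in $L(\mathcal{D}_2)$, that $f$ is surjective (pick any $r$ witnessing a given $b \in L(\mathcal{D}_2)$), and that $\mathcal{D}_1 \circ f = \mathcal{D}_2$; order preservation of $f$ is inherited from $g$.

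The crucial step is order reflection of $f$, and this is exactly where the second half of $\oeqv$ enters. Take $a,b \in L(\mathcal{D}_1)$ with $f(a) \leq f(b)$, witnessed by $a = \mathcal{D}_1(r_1)$ and $b = \mathcal{D}_1(r_2)$. Then $\mathcal{D}_2(r_1) \leq \mathcal{D}_2(r_2)$, which by~\eqref{eqn:U} means $r_2 \in \mathcal{U}(\mathcal{D}_2,r_1)$. Now applying $\mathcal{D}_2 \osub \mathcal{D}_1$, i.e., $\mathcal{U}(\mathcal{D}_2,r_1) \subseteq \mathcal{U}(\mathcal{D}_1,r_1)$, I get $r_2 \in \mathcal{U}(\mathcal{D}_1,r_1)$ and hence $a = \mathcal{D}_1(r_1) \leq \mathcal{D}_1(r_2) = b$, as required.

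For the converse, I essentially rerun the easy if-part of Theorem~\ref{th:ord_sub}, working on the smaller domains. Given an order isomorphism $f\!: L(\mathcal{D}_1) \to L(\mathcal{D}_2)$ with $\mathcal{D}_1 \circ f = \mathcal{D}_2$, for $\mathcal{D}_1 \osub \mathcal{D}_2$ I pick $r' \in \mathcal{U}(\mathcal{D}_1,r)$, observe that $\mathcal{D}_1(r'), \mathcal{D}_1(r) \in L(\mathcal{D}_1)$ lie in the domain of $f$, and use order preservation of $f$ to conclude $\mathcal{D}_2(r') \geq \mathcal{D}_2(r)$, i.e., $r' \in \mathcal{U}(\mathcal{D}_2,r)$. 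For $\mathcal{D}_2 \osub \mathcal{D}_1$, I use that $f^{-1}\!: L(\mathcal{D}_2) \to L(\mathcal{D}_1)$ is also an order isomorphism and satisfies $\mathcal{D}_2 \circ f^{-1} = \mathcal{D}_1$ (obtained by applying $f^{-1}$ to both sides of $\mathcal{D}_2(r) = f(\mathcal{D}_1(r))$), and repeat the same argument symmetrically.

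The main subtlety I anticipate is the order-reflection step: without the symmetric half $\mathcal{D}_2 \osub \mathcal{D}_1$ of the equivalence, the global map $g$ from Theorem~\ref{th:ord_sub} could collapse distinct values of $L(\mathcal{D}_1)$ to a single value in $L(\mathcal{D}_2)$, so its restriction would fail to be injective (let alone order-reflecting). Well-definedness of the restriction is automatic because $g$ is a function of scores, not of tuples, so the only real work is the reflection argument above; everything else is routine bookkeeping driven by the equation $\mathcal{D}_1 \circ f = \mathcal{D}_2$.
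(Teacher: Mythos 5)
Your proof is correct, and while it shares the paper's overall skeleton (obtain a map from Theorem~\ref{th:ord_sub} and restrict it to the finite range $L(\mathcal{D}_1)$), both directions are executed differently. For the only-if part the paper invokes Theorem~\ref{th:ord_sub} twice, producing order-preserving $f,g\colon L \to L$ with $\mathcal{D}_1 \circ f = \mathcal{D}_2$ and $\mathcal{D}_2 \circ g = \mathcal{D}_1$, and then shows the two restrictions are mutually inverse, extracting order reflection and surjectivity from that composition argument; you instead use only one map and prove order reflection directly from $\mathcal{D}_2 \osub \mathcal{D}_1$ by passing through the sets $\mathcal{U}(\mathcal{D}_2,r_1) \subseteq \mathcal{U}(\mathcal{D}_1,r_1)$ of~\eqref{eqn:U}. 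Your route is shorter and makes visible exactly where the second half of the equivalence is used, whereas the paper's mutual-inverse argument packages the same information more abstractly and also hands you $f^{-1}$ explicitly as $g|_{L(\mathcal{D}_2)}$. For the converse the paper extends $f$ to an order-preserving $f^\sharp\colon L \to L$ via an infimum formula so as to reapply Theorem~\ref{th:ord_sub} verbatim; you avoid the extension entirely by rerunning the (two-line) if-argument on the restricted domain and on $f^{-1}$, which is cleaner and sidesteps the need to verify that $f^\sharp$ really extends $f$ and is order preserving. Both arguments are sound; yours is the more elementary of the two.
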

\begin{proof}
  Let $\mathcal{D}_1 \oeqv \mathcal{D}_2$, i.e., 
  $\mathcal{D}_1 \osub \mathcal{D}_2$ and $\mathcal{D}_2 \osub \mathcal{D}_1$.
  By Theorem~\ref{th:ord_sub}, there are order-preserving maps $f\!: L \to L$
  and $g\!: L \to L$ such that $\mathcal{D}_1 \circ f = \mathcal{D}_2$ and
  $\mathcal{D}_1 = \mathcal{D}_2 \circ g$. Furthermore, consider the
  restrictions $f|_{L(\mathcal{D}_1)}$ and $g|_{L(\mathcal{D}_2)}$ of $f$ and $g$
  to $L(\mathcal{D}_1)$ and $L(\mathcal{D}_2)$, respectively. Under this notation,
  $f|_{L(\mathcal{D}_1)}$ and $g|_{L(\mathcal{D}_2)}$ are order preserving maps
  of the form $f|_{L(\mathcal{D}_1)}\!: L(\mathcal{D}_1) \to L(\mathcal{D}_2)$ and 
  $g|_{L(\mathcal{D}_2)}\!: L(\mathcal{D}_2) \to L(\mathcal{D}_1)$
  which satisfy
  \begin{align*}
    \mathcal{D}_1 \circ f|_{L(\mathcal{D}_1)} = \mathcal{D}_2
    \text{ and }
    \mathcal{D}_1 = \mathcal{D}_2 \circ g|_{L(\mathcal{D}_2)}.
  \end{align*}
  Therefore, we have
  \begin{align*}
    \mathcal{D}_1(r) 
    &= \bigl(\mathcal{D}_2 \circ g|_{L(\mathcal{D}_2)}\bigr)(r)
    \\
    &= g|_{L(\mathcal{D}_2)}(\mathcal{D}_2(r))
    \\
    &= g|_{L(\mathcal{D}_2)}\bigl(
    \bigl(\mathcal{D}_1 \circ f|_{L(\mathcal{D}_1)}\bigr)(r)\bigr)
    \\
    &= g|_{L(\mathcal{D}_2)}\bigl(f|_{L(\mathcal{D}_1)}(\mathcal{D}_1(r))\bigr)
  \end{align*}
  for all $r \in \mathrm{Tupl}(R)$ which is in the answer set of $\mathcal{D}_1$.
  As a consequence, the composed map
  $f|_{L(\mathcal{D}_1)} \circ g|_{L(\mathcal{D}_2)}$ is the
  identity map on $L(\mathcal{D}_1)$. Using analogous arguments,
  $g|_{L(\mathcal{D}_2)} \circ f|_{L(\mathcal{D}_1)}$ is the identity
  map on $L(\mathcal{D}_2)$.
  This shows that $f|_{L(\mathcal{D}_1)}$ is an order embedding:
  $f|_{L(\mathcal{D}_1)}(a) \leq f|_{L(\mathcal{D}_1)}(b)$ implies
  $g|_{L(\mathcal{D}_2)}(f|_{L(\mathcal{D}_1)}(a)) \leq
  g|_{L(\mathcal{D}_2)}(f|_{L(\mathcal{D}_1)}(b))$ because
  $g|_{L(\mathcal{D}_2)}$ is order preserving and as a consequence
  of the fact that $f|_{L(\mathcal{D}_1)} \circ g|_{L(\mathcal{D}_2)}$ is the
  identity, we get that $a \leq b$.
  In addition, $f|_{L(\mathcal{D}_1)}\!: L(\mathcal{D}_1) \to L(\mathcal{D}_2)$
  is surjective because for each $a \in L(\mathcal{D}_2)$,
  we have that $f|_{L(\mathcal{D}_1)}(g|_{L(\mathcal{D}_2)}(a)) = a$.
  Altogether, $f|_{L(\mathcal{D}_1)}$ is the desired order isomorphism.

  In order to prove the if-part of Theorem~\ref{th:ord_equ}, let us
  consider an order isomorphism $f\!: L(\mathcal{D}_1) \to L(\mathcal{D}_2)$
  such that $\mathcal{D}_1 \circ f = \mathcal{D}_2$. Now, $f$ can be
  extended to a map $f^\sharp\!: L \to L$ by putting
  \begin{align*}
    f^\sharp(a) &= \textstyle\bigwedge\{f(a');\,
    a' \in L(\mathcal{D}_1) \text{ and } a' \geq a\}
  \end{align*}
  for all $a \in L$. Observe that $f^\sharp$ is indeed an extension of $f$:
  For $a \in L(\mathcal{D}_1)$, it follows that $f(a)$ belongs to
  \begin{align*}
    K &= \{f(a');\, a' \in L(\mathcal{D}_1) \text{ and } a' \geq a\}
  \end{align*}
  because of the reflexivity of $\geq$. Moreover, if $f(a') \in K$,
  then $a' \geq a$ and so $f(a') \geq f(a)$ owing to the fact that $f$
  is order preserving. Thus, $f(a)$ is the least element of $K$ and,
  as a consequence, $f^\sharp(a) = \textstyle\bigwedge\{f(a');\,
  a' \in L(\mathcal{D}_1) \text{ and } a' \geq a\} = f(a)$.
  Furthermore, the fact that $f$ is order preserving ensures
  that $f^\sharp$ is order preserving as well. Indeed,
  take any $a,b \in L$ such that $a \leq b$. Then,
  analogously as in the proof of Theorem~\ref{th:ord_sub},
  we have
  \begin{align*}
    \{f(a');\,
    a' \in L(\mathcal{D}_1) \text{ and } a' \geq a\}
    \supseteq
    \{f(a');\,
    a' \in L(\mathcal{D}_1) \text{ and } a' \geq b\}
  \end{align*}
  and thus 
  \begin{align*}
    \inf\{f(a');\,
    a' \in L(\mathcal{D}_1) \text{ and } a' \geq a\}
    \leq
    \inf\{f(a');\,
    a' \in L(\mathcal{D}_1) \text{ and } a' \geq b\}
  \end{align*}
  which proves $f^\sharp(a) \leq f^\sharp(b)$.
  Therefore, $\mathcal{D}_1 \osub \mathcal{D}_2$ owing
  to Theorem~\ref{th:ord_sub}. In addition,
  $\mathcal{D}_2 \osub \mathcal{D}_1$ follows using the same
  arguments using the inverse $f^{-1}$ of $f$. Note that $f$
  being an order isomorphism ensures that $f$ is a bijection,
  so the inverse of $f$ exists.
\end{proof}

%%%%%%%%%%%%%%%%%%%%%%%%%%%%%%%%%%%%%%%%%%%%%%%%%%%%%%%%%%%%%%%%%%%%%%%%%%%%%%%%
%%%%%   INVARIANCE THEOREMS
%%%%%%%%%%%%%%%%%%%%%%%%%%%%%%%%%%%%%%%%%%%%%%%%%%%%%%%%%%%%%%%%%%%%%%%%%%%%%%%%
\section{Invariance Theorems}\label{sec:inv}
In this section, we present two invariance theorems which are the main
observations of this paper. As a result of the invariance theorems,
it follows that results of arbitrary complex queries composed
of~\eqref{eqn:join}--\eqref{eqn:gsubs} are invariant to ordinal transformations:
If the input data are transformed by $f$ into ordinally equivalent data,
then the results of queries performed with the original and the new data
are also ordinally equivalent. As a practical consequence, if a transformation
of the input data does not change the order in which tuples appear in tables
when sorted by scores, then the same property holds for results of arbitrary
queries.

\begin{theorem}\label{th:inv1}%
  Let $f\!: L \to L$ be order preserving.
  Then, for any RDTs $\mathcal{D}_1,\mathcal{D}_2,\mathcal{D}$ for which
  both sides of the following equalities are defined, we have
  \begin{align}
    (\join{\mathcal{D}_1}{\mathcal{D}_2}) \circ f &= 
    \join{(\mathcal{D}_1 \circ f)}{(\mathcal{D}_2 \circ f)},
    \label{th:join}
    \\[2pt]
    \restrict{\rcond}{\mathcal{D}} \circ f &= 
    \restrict{\rcond \circ f}{\mathcal{D} \circ f},
    \label{th:restrict}
    \\[2pt]
    (\union{\mathcal{D}_1}{\mathcal{D}_2}) \circ f &= 
    \union{(\mathcal{D}_1 \circ f)}{(\mathcal{D}_2 \circ f)},
    \label{th:union}
    \\[2pt]
    \project{S}{\mathcal{D}} \circ f &= 
    \project{S}{\mathcal{D} \circ f}.
    \label{th:project}
  \end{align}
\end{theorem}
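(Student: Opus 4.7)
The plan is to reduce all four identities to a single structural observation: because $\mathbf{L}$ is a chain, an order-preserving map $f\!:L\to L$ commutes with $\inf$ and $\sup$ of any finite non-empty subset of $L$. I would first state this as a small lemma, verified by induction from the base case: if $a\leq b$, then $f(a)\leq f(b)$, so $\inf\{f(a),f(b)\}=f(a)=f(\inf\{a,b\})$, and dually for $\sup$. Since $\mathbf{L}$ is totally ordered, every finite non-empty subset has its $\inf$ equal to its minimum and its $\sup$ equal to its maximum, so the lemma extends to all finite non-empty subsets by induction on cardinality.

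Given this lemma, I would dispatch \eqref{th:join}, \eqref{th:restrict}, and \eqref{th:union} by direct pointwise computation, since in each case the defining operation is $\inf$ or $\sup$ of a two-element set of scores. For the join, at any tuple $rst$,
\begin{align*}
((\join{\mathcal{D}_1}{\mathcal{D}_2})\circ f)(rst)
&= f\bigl(\inf\{\mathcal{D}_1(rs),\mathcal{D}_2(st)\}\bigr) \\
&= \inf\{f(\mathcal{D}_1(rs)),f(\mathcal{D}_2(st))\} \\
&= \inf\{(\mathcal{D}_1\circ f)(rs),(\mathcal{D}_2\circ f)(st)\} \\
&= \bigl(\join{(\mathcal{D}_1\circ f)}{(\mathcal{D}_2\circ f)}\bigr)(rst),
\end{align*}
and the same template, with $\inf$ (respectively $\sup$) of a two-element set, handles \eqref{th:restrict} and \eqref{th:union}.

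The projection identity \eqref{th:project} is the one step that deserves extra care, because the supremum in \eqref{eqn:project} is indexed by a set $\mathrm{Tupl}(R\setminus S)$ which may be infinite if $R\setminus S$ contains an attribute of infinite type. I would handle this by pointing out that, viewed purely as a set, $\{\mathcal{D}(st);\, t\in\mathrm{Tupl}(R\setminus S)\}$ is always a finite subset of $L(\mathcal{D})\cup\{0\}$, because $\mathcal{D}$ is an RDT and hence has finite range. Consequently the supremum reduces to a maximum over finitely many elements, the lemma applies, and an analogous pointwise chain of equalities yields \eqref{th:project}.

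The main, and really only, conceptual obstacle is precisely this finiteness observation for the projection: without it one might worry that $f$ fails to preserve arbitrary suprema, in which case \eqref{th:project} would require a stronger hypothesis on $f$, such as continuity in the order topology on $\mathbf{L}$. Once the finiteness of the range is recognized, the entire theorem collapses to the pairwise monotonicity property of $f$; I anticipate no other subtleties, noting in passing that $f(0)$ need not equal $0$, but this only affects whether the composed maps qualify formally as RDTs, not the pointwise equalities of maps $\mathrm{Tupl}(R)\to L$ asserted by the theorem.
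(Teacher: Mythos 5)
Your proposal is correct and follows essentially the same route as the paper's proof: the binary cases are handled by the same order-comparison argument (packaged by you as a lemma rather than repeated case analyses), and for the projection you identify exactly the key point the paper uses, namely that $\{\mathcal{D}(st);\, t \in \mathrm{Tupl}(R \setminus S)\}$ is a finite non-empty subset of the range of $\mathcal{D}$, so its supremum is a maximum and is preserved by any order-preserving $f$. Your closing remark that $f(0)=0$ is not needed for the pointwise equalities is also consistent with the paper, which imposes that condition only later where it is actually required.
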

\begin{proof}
  In order to prove~\eqref{th:join}, we check that
  \begin{align*}
    f(\inf\{\mathcal{D}_1(rs),\mathcal{D}_2(st)\}) =
    \inf\{f(\mathcal{D}_1(rs)),f(\mathcal{D}_2(st))\}.
  \end{align*}
  Since $\mathbf{L}$ is linear, we may proceed by cases:
  First, assume that $\mathcal{D}_1(rs) \leq \mathcal{D}_2(st)$.
  Then, $f(\mathcal{D}_1(rs)) \leq f(\mathcal{D}_2(st))$ because $f$
  is order preserving and thus
  \begin{align*}
    f(\inf\{\mathcal{D}_1(rs),\mathcal{D}_2(st)\})
    &= f(\mathcal{D}_1(rs))
    \\
    &= \inf\{f(\mathcal{D}_1(rs))\}
    \\
    &= \inf\{f(\mathcal{D}_1(rs)),f(\mathcal{D}_2(st))\}.
  \end{align*}
  Second, assume $\mathcal{D}_1(rs) \geq \mathcal{D}_2(st)$
  and proceed as above with $\leq$ replaced by $\geq$.

  Analogously, we may proceed for \eqref{th:restrict}. It suffices to check
  that
  \begin{align*}
    f(\inf\{\mathcal{D}(r),\rcond(r)\}) =
    \inf\{f(\mathcal{D}(r)),\rcond(r)\}
  \end{align*}
  during which we distinguish two cases:
  (i)
  $\mathcal{D}(r) \leq \rcond(r)$ and thus $f(\mathcal{D}(r)) \leq f(\rcond(r))$;
  (ii)
  $\mathcal{D}(r) \geq \rcond(r)$ and $f(\mathcal{D}(r)) \geq f(\rcond(r))$.

  Now, \eqref{th:union} follows by the same argument
  as in the case of~\eqref{th:join} with $\sup$ in place of $\inf$.
  Indeed, we check that
  \begin{align*}
    f(\sup\{\mathcal{D}_1(r),\mathcal{D}_2(r)\}) =
    \sup\{f(\mathcal{D}_1(r)),f(\mathcal{D}_2(r))\}
  \end{align*}
  holds by cases in which we use the fact that 
  $\mathcal{D}_1(r) \leq \mathcal{D}_2(r)$ if{}f
  $\sup\{\mathcal{D}_1(r),\mathcal{D}_2(r)\} = \mathcal{D}_2(r)$
  together with the assumption that $f$ is order preserving,
  and dually for $\geq$.

  In case of~\eqref{th:project}, it suffices to
  check that $f$ commutes with suprema of finite subsets of $L$
  which is indeed the case. In a more detail,
  let $\mathcal{D}$ be an RDT on $R$ and $S \subseteq R$. In this
  setting, it suffices to prove that
  \begin{align*}
    f(\sup\{\mathcal{D}(st);\, t \in \mathrm{Tupl}(R \setminus S)\}) &=
    \sup\{f(\mathcal{D}(st));\, t \in \mathrm{Tupl}(R \setminus S)\}
  \end{align*}
  for any $s \in \mathrm{Tupl}(S)$. Observe that for any $s \in \mathrm{Tupl}(S)$,
  \begin{align*}
    K = \{\mathcal{D}(st);\, t \in \mathrm{Tupl}(R \setminus S)\}
  \end{align*}
  is a finite set of scores which is a subset of the (finite) range of $\mathcal{D}$.
  In addition, $K$ is non-empty because $\mathrm{Tupl}(R \setminus S)$ is always
  non-empty and $s$ and $t$ are trivially joinable. Therefore, owing to the fact
  that $\mathbf{L}$ is totally ordered, there is $t' \in \mathrm{Tupl}(R \setminus S)$
  such that $\mathcal{D}(st')$ is the greatest element of $K$. Since $f$ is order
  preserving, it readily follows that $f(\mathcal{D}(st'))$ is
  the greatest element of 
  \begin{align*}
    f(K) = \{f(\mathcal{D}(st));\, t \in \mathrm{Tupl}(R \setminus S)\}.
  \end{align*}
  Therefore, under this notation, we have
  \begin{align*}
    f(\sup\{\mathcal{D}(st);\, t \in \mathrm{Tupl}(R \setminus S)\})
    &=
    f(\sup K) \\
    &=
    f(\mathcal{D}(st')) \\
    &=
    \sup f(K) \\
    &=
    \sup\{f(\mathcal{D}(st));\, t \in \mathrm{Tupl}(R \setminus S)\},
  \end{align*}
  which proves~\eqref{th:project}.
\end{proof}

Under stronger assumptions than in Theorem~\ref{th:inv1},
we establish the following observation of invariance for
the remaining operations with RDTs.

\begin{theorem}\label{th:inv2}%
  Let $f\!: L \to L$ be order embedding and let
  $\mathcal{D}_1,\mathcal{D}_2,\mathcal{D}_3$ be RDTs for which
  both sides of the following equalities are defined. Then,
  \begin{align}
    (\rdiv{\mathcal{D}_3}{\mathcal{D}_1}{\mathcal{D}_2}) \circ f &=
    \rdiv{\mathcal{D}_3 \circ f}{(\mathcal{D}_1 \circ f)}{(\mathcal{D}_2 \circ f)},
    \label{th:divide}
    \\
    (\resd{\mathcal{D}_3}{\mathcal{D}_1}{\mathcal{D}_2}) \circ f &=
    \resd{\mathcal{D}_3 \circ f}{(\mathcal{D}_1 \circ f)}{(\mathcal{D}_2 \circ f)}.
    \label{th:resd}
    \intertext{If $f(0) = 0$, then}
    (\minus{\mathcal{D}_1}{\mathcal{D}_2}) \circ f &= 
    \minus{(\mathcal{D}_1 \circ f)}{(\mathcal{D}_2 \circ f)}.
    \label{th:minus}
    \intertext{If $f(1) = 1$, then}
    \gsubs{\mathcal{D}_1}{\mathcal{D}_2} \circ f &=
    \gsubs{\mathcal{D}_1 \circ f}{\mathcal{D}_2 \circ f}.
    \label{th:gsubs}
  \end{align}
\end{theorem}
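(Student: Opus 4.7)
The plan is to exploit the closed-form expressions \eqref{eqn:div_expr}, \eqref{eqn:resd_cases}, \eqref{eqn:minus}, and \eqref{eqn:gsubs}, which were derived earlier from the linearity of $\mathbf{L}$, together with two basic facts about an order embedding $f$: it preserves and reflects non-strict inequalities by definition, and it also preserves and reflects \emph{strict} inequalities, since an order embedding between totally ordered sets is automatically injective. Consequently, for any scores $a, b \in L$, the comparisons $a \leq b$, $a < b$, and $a = b$ each hold if and only if the analogous comparisons hold between $f(a)$ and $f(b)$.

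For \eqref{th:divide} I would verify the equation pointwise using \eqref{eqn:div_expr}. If condition \eqref{eqn:div_cond} holds for $\mathcal{D}_1, \mathcal{D}_2, \mathcal{D}_3$ and $r$, then by preservation and reflection of both $\leq$ and $>$ the analogous condition holds for $\mathcal{D}_1 \circ f, \mathcal{D}_2 \circ f, \mathcal{D}_3 \circ f$ and $r$, and both sides reduce to $f(\mathcal{D}_3(r)) = (\mathcal{D}_3 \circ f)(r)$. If \eqref{eqn:div_cond} fails, then the set $\{\mathcal{D}_1(rs);\, \mathcal{D}_2(s) > \mathcal{D}_1(rs),\, s \in \mathrm{Tupl}(S)\}$ is a non-empty finite subset of the range of $\mathcal{D}_1$, hence its infimum is a minimum attained at some $\mathcal{D}_1(rs_0)$; since the set of relevant $s$ is unchanged under $f$ and $f$ is order preserving, $f(\mathcal{D}_1(rs_0)) = (\mathcal{D}_1 \circ f)(rs_0)$ is the minimum on the right-hand side. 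Equation \eqref{th:resd} is handled similarly by a direct pointwise case analysis of \eqref{eqn:resd_cases} using preservation and reflection of $\leq$.

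For \eqref{th:minus}, the case $\mathcal{D}_1(r) > \mathcal{D}_2(r)$ yields $(\mathcal{D}_1 \circ f)(r) > (\mathcal{D}_2 \circ f)(r)$ and both sides evaluate to $(\mathcal{D}_1 \circ f)(r)$; the case $\mathcal{D}_1(r) \leq \mathcal{D}_2(r)$ makes the left-hand side $f(0)$ and the right-hand side $0$, matching exactly under the hypothesis $f(0) = 0$. For \eqref{th:gsubs}, whose left-hand side I read as $f(\gsubs{\mathcal{D}_1}{\mathcal{D}_2})$ since the subsethood score is an element of $L$ rather than an RDT, I would apply the simplified form \eqref{eqn:gsubs} and split on whether the set $K = \{r;\, \mathcal{D}_1(r) > \mathcal{D}_2(r)\}$ is empty. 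Reflection of $>$ ensures the corresponding set for $\mathcal{D}_1 \circ f$ and $\mathcal{D}_2 \circ f$ is the same $K$; in the non-empty case $K$ is finite (being contained in the answer set of $\mathcal{D}_1$), and the minimum of $\{\mathcal{D}_2(r);\, r \in K\}$ transports through $f$ just as in the division argument; in the empty case the empty infimum gives $1$ on both sides provided $f(1) = 1$.

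The main technical obstacle is \eqref{th:divide}: the defining infimum in \eqref{eqn:div} formally ranges over the possibly infinite set $\mathrm{Tupl}(S)$, so order preservation alone cannot push $f$ through a general infimum. It is the prior collapse to \eqref{eqn:div_expr}, where the relevant set is always either a singleton or a finite subset of the range of $\mathcal{D}_1$, that makes the argument go through with only the order-embedding hypothesis rather than any continuity assumption. A minor but noteworthy point is that \eqref{th:divide} and \eqref{th:resd} need no condition on $f(0)$ or $f(1)$, because every case in \eqref{eqn:div_expr} and \eqref{eqn:resd_cases} yields a value already realized by $\mathcal{D}_1$, $\mathcal{D}_2$, or $\mathcal{D}_3$, whereas \eqref{th:minus} and \eqref{th:gsubs} can produce the boundary values $0$ and $1$ and therefore need the corresponding fixed-point conditions on $f$.
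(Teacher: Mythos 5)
Your proposal is correct and follows essentially the same route as the paper: a pointwise case analysis on the closed forms \eqref{eqn:div_expr}, \eqref{eqn:resd_cases}, and \eqref{eqn:minus}, using that an order embedding preserves and reflects both strict and non-strict comparisons, and that the relevant infima collapse to minima of non-empty finite subsets of the ranges, which transport through $f$. The only deviation is \eqref{th:gsubs}, which the paper derives as a corollary of \eqref{th:divide} via the identity $\gsubs{\mathcal{D}_1}{\mathcal{D}_2} = (\rdiv{\mathcal{D}}{\mathcal{D}_2}{\mathcal{D}_1})(\emptyset)$ while you prove it directly from \eqref{eqn:gsubs} by the same finite-minimum argument; both are valid and your closing remarks about exactly where the hypotheses $f(0)=0$ and $f(1)=1$ are needed are accurate.
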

\begin{proof}
  In case of~\eqref{th:minus},
  we distinguish two cases based on~\eqref{eqn:minus}.
  First, if we have $\mathcal{D}_1(r) \leq \mathcal{D}_2(r)$,
  then $f(\mathcal{D}_1(r)) \leq f(\mathcal{D}_2(r))$ because $f$
  is order preserving and so
  \begin{align*}
    ((\minus{\mathcal{D}_1}{\mathcal{D}_2}) \circ f)(r)
    &= f(0)
    = 0 
    = (\minus{(\mathcal{D}_1 \circ f)}{(\mathcal{D}_2 \circ f)})(r),
  \end{align*}
  taking into account the fact that $f(0) = 0$.
  Second, assume that $\mathcal{D}_1(r) \nleq \mathcal{D}_2(r)$.
  In this case, $\mathcal{D}_1(r) > \mathcal{D}_2(r)$ because $\mathbf{L}$
  is totally ordered and so $f(\mathcal{D}_1(r)) \geq f(\mathcal{D}_2(r))$ because
  $f$ is order preserving. Since $f$ is also order reflecting,
  we must have $f(\mathcal{D}_1(r)) > f(\mathcal{D}_2(r))$ because
  $f(\mathcal{D}_1(r)) = f(\mathcal{D}_2(r))$ would yield
  $\mathcal{D}_1(r) \leq \mathcal{D}_2(r)$, a contradiction.
  Therefore, we have $f(\mathcal{D}_1(r)) \nleq f(\mathcal{D}_2(r))$ and so
  \begin{align*}
    ((\minus{\mathcal{D}_1}{\mathcal{D}_2}) \circ f)(r)
    &=
    f((\minus{\mathcal{D}_1}{\mathcal{D}_2})(r))
    \\
    &=
    f(\mathcal{D}_1(r))
    \\
    &=
    (\minus{f(\mathcal{D}_1)}{f(\mathcal{D}_2)})(r),
  \end{align*}
  which proves~\eqref{th:minus}.

  In case of~\eqref{th:divide}, we may proceed by cases considering
  the condition~\eqref{eqn:div_cond}. In a more detail, let
  $\mathcal{D}_1$ be an RDT on $R \cup S$ such that $R \cap S = \emptyset$,
  $\mathcal{D}_2$ be an RDT on $S$, and $\mathcal{D}_3$ be an RDT on $R$.
  Furthermore, assume that for a given $r \in \mathrm{Tupl}(R)$ and
  all $s \in \mathrm{Tupl}(S)$, we have that
  $\mathcal{D}_2(s) > \mathcal{D}_1(rs)$ implies
  $\mathcal{D}_3(r) \leq \mathcal{D}_1(rs)$.
  In this case,
  $\bigl(\rdiv{\mathcal{D}_3}{\mathcal{D}_1}{\mathcal{D}_2}\bigr)(r) =
  \mathcal{D}_3(r)$.
  Moreover, the fact that $f$ is an order embedding gives that 
  $f(\mathcal{D}_2(s)) > f(\mathcal{D}_1(rs))$ implies 
  $\mathcal{D}_2(s) > \mathcal{D}_1(rs)$ and so
  $\mathcal{D}_3(r) \leq \mathcal{D}_1(rs)$, i.e.,
  $f(\mathcal{D}_3(r)) \leq f(\mathcal{D}_1(rs))$.
  As a consequence,
  \begin{align*}
    f\bigl(\bigl(\rdiv{\mathcal{D}_3}{\mathcal{D}_1}{\mathcal{D}_2}\bigr)(r)\bigr)
    &=
    f(\mathcal{D}_3(r))
    \\
    &=
    \bigl(\rdiv{\mathcal{D}_3 \circ f}{
      (\mathcal{D}_1 \circ f)}{(\mathcal{D}_2 \circ f)}\bigr).
  \end{align*}
  It remains to prove the equality in the case when~\eqref{eqn:div_cond}
  does not hold. That is, assume that for given $r \in \mathrm{Tupl}(R)$
  there is $s \in \mathrm{Tupl}(S)$ such that 
  $\mathcal{D}_2(s) > \mathcal{D}_1(rs)$ and
  $\mathcal{D}_3(r) > \mathcal{D}_1(rs)$.
  Therefore, for given $r \in \mathrm{Tupl}(R)$,
  \begin{align*}
    K &= 
    \{\mathcal{D}_1(rs);\,
    \mathcal{D}_2(s) > \mathcal{D}_1(rs)
    \text{, } s \in \mathrm{Tupl}(S)\}
  \end{align*}
  is non-empty and in addition it is finite because it is a subset of
  the range of $\mathcal{D}_1$. Since $\mathbf{L}$ is totally ordered,
  there is $s' \in \mathrm{Tupl}(S)$ such that $\mathcal{D}_1(rs')$
  is the least element of $K$. The fact that $f$ is an order embedding
  further gives that $f(\mathcal{D}_1(rs'))$ is the least element of
  \begin{align*}
    f(K) &= 
    \{f(\mathcal{D}_1(rs));\,
    f(\mathcal{D}_2(s)) > f(\mathcal{D}_1(rs))
    \text{, } s \in \mathrm{Tupl}(S)\}.
  \end{align*}
  Hence,
  \begin{align*}
    f\bigl(\bigl(\rdiv{\mathcal{D}_3}{\mathcal{D}_1}{\mathcal{D}_2}\bigr)(r)\bigr)
    &=
    f(\mathcal{D}_1(rs'))
    \\
    &=
    \bigl(\rdiv{\mathcal{D}_3 \circ f}{
      (\mathcal{D}_1 \circ f)}{(\mathcal{D}_2 \circ f)}\bigr),
  \end{align*}
  which concludes the proof of~\eqref{th:divide}.
  Now, observe that~\eqref{th:gsubs} follows directly by~\eqref{th:divide}.
  Indeed, for $\mathcal{D}_1$ and $\mathcal{D}_2$ on $R$
  and for an auxiliary $\mathcal{D}$ on $\emptyset$ such that
  $\mathcal{D}(\emptyset) = 1$, we have
  \begin{align*}
    \gsubs{\mathcal{D}_1}{\mathcal{D}_2} \circ f &=
    f\bigl(\bigl(\rdiv{\mathcal{D}}{
      \mathcal{D}_2}{\mathcal{D}_1}\bigr)(\emptyset)\bigr)
    \\
    &=
    \bigl(\rdiv{\mathcal{D} \circ f}{
      (\mathcal{D}_2 \circ f)}{(\mathcal{D}_1 \circ f)}\bigr)(\emptyset)
    \\
    &=
    \bigl(\rdiv{\mathcal{D}}{
      (\mathcal{D}_2 \circ f)}{(\mathcal{D}_1 \circ f)}\bigr)(\emptyset)
    \\
    &=
    \gsubs{\mathcal{D}_1 \circ f}{\mathcal{D}_2 \circ f}
  \end{align*}
  provided that $f(1) = 1$ and thus $\mathcal{D} \circ f = \mathcal{D}$.
  Finally, \eqref{th:resd} can be proved analogously as~\eqref{th:divide} by
  inspecting the cases in~\eqref{eqn:resd_cases},
  the details are left to the reader.
\end{proof}

If $f\!: L \to L$ is an order isomorphism, then all
conditions in Theorem~\ref{th:inv1} and Theorem~\ref{th:inv2}
are satisfied including the facts that $f(0) = 0$
and $f(1) = 1$. Such $f$ may be viewed as an ordinal
transformation function of ranked data tables.
We may say that $\mathcal{D}_1$ is ordinally transformed into
$\mathcal{D}_2$ by $f$,
written $\mathcal{D}_1 \mapsto_f \mathcal{D}_2$, whenever
$\mathcal{D}_1 \circ f = \mathcal{D}_2$. Under this notation,
\eqref{th:join}--\eqref{th:gsubs} in the invariance theorems
can be restated as follows: If
$\mathcal{D}_1 \mapsto_f \mathcal{D}'_1$ and
$\mathcal{D}_2 \mapsto_f \mathcal{D}'_2$, then
\begin{align}
  \join{\mathcal{D}_1}{\mathcal{D}_2} \oeqv
  \join{\mathcal{D}'_1}{\mathcal{D}'_2}
\end{align}
in case of $\bowtie$ and analogously for
$\sigma_{\!\rcond}$, $\pi_{\!S}$, $\cup$, $-$, $\frdiv$, and $\gsubsop$.
Put in words, the results of an operation with transformed input
data and the original input data are equivalent in terms of
the order of tuples given by scores.

\begin{figure}
  \centering
  % (join *size* *price*) {id, price} \circ f, Goedel
  \begin{relation}{2}
    \rank &
    \atr{ID} &
    \atr{PRICE} \\
    \hline
    \rule{0pt}{9pt}%
    $0.882$ & \val{71} & \val{\usd{798,000}} \\
    $0.882$ & \val{71} & \val{\usd{849,000}} \\
    $0.655$ & \val{85} & \val{\usd{998,000}} \\
    $0.541$ & \val{82} & \val{\usd{648,000}} \\
    $0.462$ & \val{58} & \val{\usd{829,000}} \\
    $0.272$ & \val{93} & \val{\usd{598,000}} \\
  \end{relation}
  \quad
  % (join *size* *price*) {id, price} \circ f, Goguen
  \begin{relation}{2}
    \rank &
    \atr{ID} &
    \atr{PRICE} \\
    \hline
    \rule{0pt}{9pt}%
    $0.877$ & \val{71} & \val{\usd{798,000}} \\
    $0.782$ & \val{71} & \val{\usd{849,000}} \\
    $0.655$ & \val{85} & \val{\usd{998,000}} \\
    $0.429$ & \val{58} & \val{\usd{829,000}} \\
    $0.361$ & \val{82} & \val{\usd{648,000}} \\
    $0.160$ & \val{93} & \val{\usd{598,000}} \\
  \end{relation}
  \caption{Join of transformed ranked data table projected onto
    $\{\atr{id},\atr{price}\}$ (left) and result of the same query using
    the Goguen aggregation (right).}
  \label{eqn:f_GoGo}
\end{figure}

\begin{figure}
  \centering
  \begin{relation}{3}
    \rank &
    \atr{ID} &
    \atr{BDRM} &
    \atr{PRICE} \\
    \hline
    \rule{0pt}{9pt}%
    $0.778$ & \val{85} & \val{5} & \val{\usd{998,000}} \\
    $0.699$ & \val{71} & \val{3} & \val{\usd{798,000}} \\
    $0.699$ & \val{71} & \val{3} & \val{\usd{849,000}} \\
    $0.643$ & \val{82} & \val{4} & \val{\usd{648,000}} \\
    $0.426$ & \val{58} & \val{4} & \val{\usd{829,000}} \\
    $0.148$ & \val{93} & \val{2} & \val{\usd{598,000}} \\
  \end{relation}
  \quad
  \begin{relation}{3}
    \rank &
    \atr{ID} &
    \atr{BDRM} &
    \atr{PRICE} \\
    \hline
    \rule{0pt}{9pt}%
    $0.655$ & \val{85} & \val{5} & \val{\usd{998,000}} \\
    $0.579$ & \val{71} & \val{3} & \val{\usd{798,000}} \\
    $0.579$ & \val{71} & \val{3} & \val{\usd{849,000}} \\
    $0.541$ & \val{82} & \val{4} & \val{\usd{648,000}} \\
    $0.462$ & \val{58} & \val{4} & \val{\usd{829,000}} \\
    $0.272$ & \val{93} & \val{2} & \val{\usd{598,000}} \\
  \end{relation}
  \caption{Result of $\restrict{\rcond}{\join{\mathcal{D}_1}{\mathcal{D}_2}}$
    projected onto $S = \{\atr{id},\atr{bdrm},\atr{price}\}$ (left) and
    $\restrict{\rcond \circ f}{\join{(\mathcal{D}_1 \circ f)}{(\mathcal{D}_2} \circ f)}$
    projected onto $S$ (right).}
  \label{eqn:f_select}
\end{figure}

\begin{example}
  To illustrate the invariance theorems on concrete data,
  consider the RDTs $\mathcal{D}_1$ and $\mathcal{D}_2$ as in Fig.\,\ref{fig:RDTs}.
  A map $f\!: [0,1] \to [0,1]$ given by
  \begin{align}
    f(x) &=
    \left\{
      \begin{array}{@{\,}l@{\quad}l@{}}
        2^{-0.5}\sqrt{x}, &\text{if } x \leq 0.5, \\[2pt]
        2(x-0.5)^2 + 0.5, &\text{otherwise,}
      \end{array}
    \right.
  \end{align}
  is an order isomorphism preserving $0$ and $1$.
  Fig.\,\ref{eqn:f_GoGo} (left) contains the result of
  \begin{align*}
    \project{\{\atr{id},\atr{price}\}}{%
      \join{(\mathcal{D}_1 \circ f)}{(\mathcal{D}_2 \circ f)}}
  \end{align*}
  which is equivalent to
  \begin{align*}
    \project{\{\atr{id},\atr{price}\}}{%
      \join{\mathcal{D}_1}{\mathcal{D}_2}} \circ f
  \end{align*}
  owing to \eqref{th:join} and \eqref{th:restrict}.
  The tuples in the result, when sorted by scores,
  appear in the same order as in Fig.\,\ref{fig:join}
  showing $\join{\mathcal{D}_1}{\mathcal{D}_2}$. Our assumption
  that the join~\eqref{eqn:join} (as well as the other operations) is
  defined using the infimum instead of a general aggregation
  function $\otimes$, see Remark~\ref{rem:general}, is essential.
  If we replace $\inf$ in~\eqref{eqn:join}
  by $\otimes$ being the multiplication of reals
  (so-called Goguen aggregation, see \cite{Gog:Lic})
  and compute 
  $\project{\{\atr{id},\atr{price}\}}{%
    \join{(\mathcal{D}_1 \circ f)}{(\mathcal{D}_2 \circ f)}}$,
  we get Fig.\,\ref{eqn:f_GoGo} (right)
  as the result where the order of tuples is not preserved.

  As a further example, Fig.\,\ref{eqn:f_select}\,(left) shows the
  result of a restriction of the join using the restriction
  condition $\rcond$ defined by
  \begin{align}
    \rcond(r) &=
    \left\{
      \begin{array}{@{\,}l@{\quad}l@{}}
        0.1 (4 + r(\atr{BDRM})), &\text{if } r(\atr{DBRM}) \leq 6, \\
        1, &\text{otherwise,} 
      \end{array}
    \right.
  \end{align}
  which may be seen as a restriction on the number of bedrooms $6$ and
  more with a tolerance for lower numbers. Fig.\,\ref{eqn:f_select} (right)
  shows the result for the tables transformed by $f$ as above. Again, the
  tuples appear in the same order. Finally, Fig.\,\ref{eqn:f_novarphi}
  shows that without transforming $\rcond$, the order of tuples in the
  result would not be preserved, i.e., $\rcond \circ f$
  in~\eqref{th:restrict} cannot be replaced by $\rcond$.
\end{example}

\begin{figure}
  \centering
  \begin{relation}{3}
    \rank &
    \atr{ID} &
    \atr{BDRM} &
    \atr{PRICE} \\
    \hline
    \rule{0pt}{9pt}%
    $0.699$ & \val{71} & \val{3} & \val{\usd{798,000}} \\
    $0.699$ & \val{71} & \val{3} & \val{\usd{849,000}} \\
    $0.655$ & \val{85} & \val{5} & \val{\usd{998,000}} \\
    $0.541$ & \val{82} & \val{4} & \val{\usd{648,000}} \\
    $0.462$ & \val{58} & \val{4} & \val{\usd{829,000}} \\
    $0.272$ & \val{93} & \val{2} & \val{\usd{598,000}} \\
  \end{relation}
  \caption{Result of $\project{S}{\restrict{\rcond}
      {\join{(\mathcal{D}_1 \circ f)}{(\mathcal{D}_2} \circ f)}}$.}
  \label{eqn:f_novarphi}
\end{figure}

The invariance theorems can be seen as type of description of the independence
of query results on possible changes in scores in the input data and restriction
conditions in queries. An alternative characterization which does not utilize
the position of tuples in relations but uses a notion of similarity was proposed
in~\cite{BeUrVy:Sadrqlor}. We now make a comment on how the approaches can be
combined. As we have outlined in the introduction,
\cite{BeUrVy:Sadrqlor} introduces lower bounds for similarity of query results
based on similarity of input data. For instance, in the case of joins
of RDTs, \cite{BeUrVy:Sadrqlor} shows that
\begin{align}
  \gsubs{\mathcal{D}_1}{\mathcal{D}_2} \otimes
  \gsubs{\mathcal{D}_3}{\mathcal{D}_4} \leq
  \gsubs{\join{\mathcal{D}_1}{\mathcal{D}_3}}{\join{\mathcal{D}_2}{\mathcal{D}_4}},
  \label{eqn:sim_est1}
  \\
  \gequs{\mathcal{D}_1}{\mathcal{D}_2} \otimes
  \gequs{\mathcal{D}_3}{\mathcal{D}_4} \leq
  \gequs{\join{\mathcal{D}_1}{\mathcal{D}_3}}{\join{\mathcal{D}_2}{\mathcal{D}_4}},
  \label{eqn:eqv_est1}
\end{align}
where $\gsubsop$ is defined as in \eqref{eqn:gsubs_general} and
$\gequsop$ is defined as in~\eqref{eqn:gequs}, and $\otimes$ is a binary
aggregation function with suitable properties (it is commutative, associative,
and $1$ is its neutral element). In our setting, \eqref{eqn:sim_est1}
and~\eqref{eqn:eqv_est1} may be restated with $\otimes$ replaced by $\inf$ as
\begin{align}
  \inf\{\gsubs{\mathcal{D}_1}{\mathcal{D}_2},
  \gsubs{\mathcal{D}_3}{\mathcal{D}_4}\} \leq
  \gsubs{\join{\mathcal{D}_1}{\mathcal{D}_3}}{\join{\mathcal{D}_2}{\mathcal{D}_4}},
  \label{eqn:Slb}
  \\
  \inf\{\gequs{\mathcal{D}_1}{\mathcal{D}_2},
  \gequs{\mathcal{D}_3}{\mathcal{D}_4}\} \leq
  \gequs{\join{\mathcal{D}_1}{\mathcal{D}_3}}{\join{\mathcal{D}_2}{\mathcal{D}_4}}.
  \label{eqn:Elb}
\end{align}
Put in words, \eqref{eqn:Slb} says that the score to which
$\join{\mathcal{D}_1}{\mathcal{D}_3}$ is contained in
$\join{\mathcal{D}_2}{\mathcal{D}_4}$ as at least the score to
which $\mathcal{D}_1$ is contained in $\mathcal{D}_2$ and 
$\mathcal{D}_3$ is contained in $\mathcal{D}_4$. Analogously,
we may interpret~\eqref{eqn:Elb} with ``contained'' replaced by ``similar''.

Now, using the fact that $f(\inf\{a,b\}) = \inf\{f(a),f(b)\}$ for all $a,b \in L$
together with the fact that $f$ is order-preserving, we may conclude that
\begin{align*}
  \inf\{\gsubs{\mathcal{D}_1 \circ f}{\mathcal{D}_2 \circ f},
  \gsubs{\mathcal{D}_3 \circ f}{\mathcal{D}_4 \circ f}\} &=
  \inf\{\gsubs{\mathcal{D}_1}{\mathcal{D}_2} \circ f,
  \gsubs{\mathcal{D}_3}{\mathcal{D}_4} \circ f\}
  \\
  &=
  f(\inf\{\gsubs{\mathcal{D}_1}{\mathcal{D}_2},
  \gsubs{\mathcal{D}_3}{\mathcal{D}_4}\})
  \\
  &\leq
  f(\gsubs{\join{\mathcal{D}_1}{\mathcal{D}_3}}{\join{\mathcal{D}_2}{\mathcal{D}_4}})
  \\
  &=
  \gsubs{\join{\mathcal{D}_1}{\mathcal{D}_3}}{\join{\mathcal{D}_2}{\mathcal{D}_4}}
  \circ f
\end{align*}
and analogously for $\gequsop$. In much the same way,
we get the following inequality:
\begin{align*}
  f(\inf\{\gsubs{\mathcal{D}_1}{\mathcal{D}_2},
  \gsubs{\mathcal{D}_3}{\mathcal{D}_4}\}) &=
  \inf\{\gsubs{\mathcal{D}_1}{\mathcal{D}_2} \circ f,
  \gsubs{\mathcal{D}_3}{\mathcal{D}_4} \circ f\}
  \\
  &=
  \inf\{\gsubs{\mathcal{D}_1 \circ f}{\mathcal{D}_2 \circ f},
  \gsubs{\mathcal{D}_3 \circ f}{\mathcal{D}_4 \circ f}\}
  \\
  &\leq
  \gsubs{\join{(\mathcal{D}_1 \circ f)}{(\mathcal{D}_3 \circ f)}}{
    \join{(\mathcal{D}_2 \circ f)}{(\mathcal{D}_4 \circ f)}}.
\end{align*}
The inequality may be seen as an extension of the lower bound given
by~\eqref{eqn:Slb} which incorporates an ordinal transformation.
Indeed, it reads: ``the score to which the join of the transformed RDTs
$\mathcal{D}_1$ and $\mathcal{D}_3$ is contained in the join of the 
transformed RDTs $\mathcal{D}_2$ and $\mathcal{D}_4$ is at least as
high as the transformed score of containment of 
$\mathcal{D}_1$ in $\mathcal{D}_2$ and
$\mathcal{D}_3$ in $\mathcal{D}_4$.'' Analogous combined similarity bounds
of operation with transformed data can be obtained for the other relational
operations, cf.~\cite{BeUrVy:Sadrqlor}.

%%%%%%%%%%%%%%%%%%%%%%%%%%%%%%%%%%%%%%%%%%%%%%%%%%%%%%%%%%%%%%%%%%%%%%%%%%%%%%%%
%%%%%   GOEDEL LOGIC AND RELATIONAL CALCULI
%%%%%%%%%%%%%%%%%%%%%%%%%%%%%%%%%%%%%%%%%%%%%%%%%%%%%%%%%%%%%%%%%%%%%%%%%%%%%%%%
\section{G\"odel logic and relational calculi}\label{sec:heyting}
In the previous section, we have discussed the invariance to ordinal transformations
for one particular query system---a system based on relational operations which may
be composed to form complex queries. The system resembles the traditional relational
algebra. In this section, we show that the same type of results on invariance to
ordinal transformations can also be established in a query system which is based
on evaluating formulas in database instances consisting of ranked data tables
and is conceptually similar to the classic relational calculi. We establish the
invariance theorems indirectly by showing that the query system based on evaluating
formulas is equivalent to the system based on relational operations. By proving the
equivalences of the query systems, we get new insights into the original query system.
For instance, it turns our that G\"odel logic plays an analogous role in the
rank-aware approach investigated in this paper as the Boolean logic in the classic
relational model of data. This connection allows us to derive conclusions about
properties of the relational operations in our model based on provability of
particular formulas in G\"odel logic---we utilize this observation
in Section~\ref{sec:comp}. In the beginning of this section, we recall
first-order G\"odel logic in a form that is suitable for our development
and then we show its relationship to our model.

A language $\mathcal{J}$ of a first-order G\"odel logic is given by a set
of relation symbols together with information about their arities. The relation
symbols may also be called predicate symbols and in the database terminology they
may be understood as relation variables whose values are bound to relations in 
database instances. Furthermore, we consider a denumerable set $X$ of object variables.
Analogously as in the case of the classic first-order logic, formulas are defined
recursively based on atomic formulas using symbols for logical connectives and
quantifiers:
\begin{enumlist}\parskip=\smallskipamount%
\item[\itm{1}]
  $\False$ is a formula (a constant of the truth value ``false'').
\item[\itm{2}]
  If $\rsym$ is $n$-ary relation symbol and $x_1,\ldots,x_n \in X$,
  then $\rsym(x_1,\ldots,x_n)$ is a formula.
\item[\itm{3}]
  If $\varphi$ and $\psi$ are formulas,
  then $(\varphi \logand \psi)$ and $(\varphi \logimp \psi)$ are formulas.
\item[\itm{4}]
  If $\varphi$ is a formula and $x \in X$,
  then $(\forall x)\varphi$ and $(\exists x)\varphi$ are formulas.
\end{enumlist}
All formulas we consider result by applications of \itm{1}--\itm{4}. Let us note
that both $\itm{1}$ and $\itm{2}$ introduce atomic formulas. In the first case,
$\False$ may be seen as a nullary logical connective (i.e., a connective
with no arguments). In the second case, each $\rsym(x_1,\ldots,x_n)$ is an atomic
formula constructed as in the first-order Boolean logic except for the fact
that we do not consider more complex terms than object variables---objects constants
and general function symbols may also be introduced but this is not necessary
for our application of the logic. Also note that a special case of \itm{2}
are formulas of the form $\rsym()$ when $\rsym$ is a nullary relation symbol.
In such a case, $\rsym()$ may be denoted just $\rsym$ and called a propositional symbol.
Furthermore, \itm{3} introduces more complex formulas built using logical connectives
$\logand$ (conjunction) and $\logimp$ (implication);
here we adopt the common rules for omission of outer parentheses in formulas.
Finally, \itm{4} defines universally and existentially quantified formulas in
the same way as in the classic logic.

\begin{remark}
  We can consider only $\False$, $\logand$, and $\logimp$ as the basic
  connectives. Indeed, formulas containing $\logor$ (disjunction) and possibly
  other connectives ($\logneg$ for a negation, and $\logequ$ for a biconditional)
  can be seen as abbreviations as follows:
  \begin{align}
    \logneg \varphi
    &\text{ is }
    \varphi \logimp \False,
    \\
    \varphi \logequ \psi
    &\text{ is }
    (\varphi \logimp \psi) \logand (\psi \logimp \varphi),
    \\
    \varphi \logor \psi
    &\text{ is }
    ((\varphi \logimp \psi) \logimp \psi) \logand
    ((\psi \logimp \varphi) \logimp \varphi).
    \label{eqn:short_or}
  \end{align}
  Note that in G\"odel logic, $\logand$ is not definable based solely on $\logimp$
  and $\False$ as it is in the classical logic where $\varphi \logand \psi$ can be
  seen as an abbreviation for $(\varphi \logimp (\psi \logimp \False)) \logimp \False$.
  This is due to the absence of the law of the double negation.
\end{remark}

The semantic of formulas is introduced based on their evaluation in general
structures for a given language $\mathcal{J}$ based on G\"odel algebras. In the
database terminology, the language defines a database scheme and the general
structures may be seen as counterparts to the classic database instances.

Let $\mathbf{L}$ be a G\"odel algebra. An $\mathbf{L}$-structure for
language $\mathcal{J}$ is denoted $\mathbf{M}$ and consists of a non-empty universe
set $M$ and a set which contains, for each $n$-ary relation symbol $\rsym$ in
the language,
a map $\rsym^\mathbf{M}\!: M^n \to L$ where $M^n$ denotes the usual $n$-th power of $M$.
Under this notation, $\rsym^\mathbf{M}(m_1,\ldots,m_n)$ is a degree in $L$ which can be
interpreted as a score of a tuple consisting of the values $m_1,\ldots,m_n$ in
$\rsym^\mathbf{M}$. Note that in this setting, we do not have names of attributes and
therefore the order of arguments in $\rsym^\mathbf{M}(m_1,\ldots,m_n)$ matters (as it
is usual in first-order logics, one may easily introduce ``names of attributes''
to keep the formalism closer to the style of relational database calculi).
An $\mathbf{M}$-valuation (of object variables) is any map $v\!: X \to M$,
$v(x)$ interpreted as the value of $x \in X$ under $v$. Now, the values of
formulas (of the language $\mathcal{J}$) in $\mathbf{L}$-structure $\mathbf{M}$
(for $\mathcal{J}$) given an $\mathbf{M}$-valuation $v$ is defined by the
following rules. In case of the atomic formulas, we put
\begin{align}
  ||\False||_{\mathbf{M},v} &= 0, 
  \\
  ||\rsym(x_1,\ldots,x_n)||_{\mathbf{M},v} &=
  \rsym^\mathbf{M}(v(x_1),\ldots,v(x_n)).
\end{align}
For the formulas built using the binary connectives $\logand$ and $\logimp$,
we put
\begin{align}
  ||\varphi \logand \psi||_{\mathbf{M},v} &=
  \inf \{||\varphi||_{\mathbf{M},v}, ||\psi||_{\mathbf{M},v}\},
  \\
  ||\varphi \logimp \psi||_{\mathbf{M},v} &=
  ||\varphi||_{\mathbf{M},v} \rightarrow ||\psi||_{\mathbf{M},v},
\end{align}
From~\eqref{eqn:short_or} it follows that
\begin{align}
  ||\varphi \logor \psi||_{\mathbf{M},v} &=
  \sup\{||\varphi||_{\mathbf{M},v}, ||\psi||_{\mathbf{M},v}\}.
\end{align}
Observe that if $\mathbf{L}$ is totally ordered, then~\eqref{eqn:resd_chain} yields
\begin{align}
  ||\varphi \logimp \psi||_{\mathbf{M},v} &=
  \begin{cases}
    1, & \text{if } ||\varphi||_{\mathbf{M},v} \leq ||\psi||_{\mathbf{M},v}, \\
    ||\psi||_{\mathbf{M},v}, &\text{otherwise.}
  \end{cases}
\end{align}
Finally, the value of quantified formulas is defined as follows provided that
the right-hand sides of the following equalities are defined:
\begin{align}
  ||(\forall x)\varphi||_{\mathbf{M},v} &=
  \inf\{||\varphi||_{\mathbf{M},w};\, w =_x v\},
  \label{eqn:forall}
  \\
  ||(\exists x)\varphi||_{\mathbf{M},v} &=
  \sup\{||\varphi||_{\mathbf{M},w};\, w =_x v\},
  \label{eqn:exists}
\end{align}
where $w =_x v$ means that $w$ is an $\mathbf{M}$-valuation such that $w(y) = v(y)$
for all $y \in X$ such that $x \ne y$. Note that in general,
\eqref{eqn:forall} and~\eqref{eqn:exists} may not be defined
because of the non-existence of infima and suprema of
$\{||\varphi||_{\mathbf{M},w};\, w =_x v\} \subseteq L$.
If for any $\varphi$ of the language $\mathcal{J}$ \eqref{eqn:forall}
and~\eqref{eqn:exists} are defined under any $\mathbf{M}$-valuation,
then $\mathbf{M}$ is called safe. If $\mathbf{L}$ is complete,
then any $\mathbf{L}$-structure is trivially safe. More importantly,
if each $r^{\mathbf{M}}$ is finite, meaning there are only finitely
many $m_1,\ldots,m_n \in M$ for which $r^{\mathbf{M}}(m_1,\ldots,m_n) > 0$,
then $\mathbf{M}$ is safe as well.

At this point, we can already describe how the interpretation of formulas in
G\"odel logic can be used as a basis of a query system and put it with
correspondence to the query system based on relational operations. We describe
the query system only to the extent to be able to derive conclusions on 
the invariance to ordinal transformations because a detailed description
of relational calcluli is beyond the scope and need of this paper. Interested
readers can find more details on pseudo-tuple calculus in~\cite{VaVy:Rdrad}.

Now, consider any finite $\mathbf{L}$-structure $\mathbf{M}$
(i.e., every $\rsym^\mathbf{M}$
is finite in the same sense as above) and a formula $\varphi$ with free variables
$x_1,\ldots,x_n$. Under this notation, $\mathbf{M}$ and $\varphi$ induce
a map $\mathcal{D}_{\mathbf{M},\varphi}\!: \mathrm{Tupl}(R) \to L$,
where $R = \{x_1,\ldots,x_n\}$ and
\begin{align}
  \bigl(\mathcal{D}_{\mathbf{M},\varphi}\bigr)(r) &= ||\varphi||_{\mathbf{M},v}
  \label{eqn:Mvarphi_to_D}
\end{align}
such that $r(x_i) = v(x_i)$ for all $i=1,\ldots,n$.
Clearly, $\mathcal{D}_{\mathbf{M},\varphi}$ given by~\eqref{eqn:Mvarphi_to_D}
is a ranked data table on $R$ (free variables in $\varphi$ are considered as
names of attributes) and it can be seen as a result of a query given by $\varphi$
in a database instance represented by the safe $\mathbf{L}$-structure $\mathbf{M}$.

\begin{remark}
  Let us note that $\mathcal{D}_{\mathbf{M},\varphi}$ is defined correctly
  by~\eqref{eqn:Mvarphi_to_D} because $||\varphi||_{\mathbf{M},v}$ depends only
  on $\mathbf{M}$-valuation of variables which appear free in $\varphi$. Also note
  that in the definition of $\mathcal{D}_{\mathbf{M},\varphi}$, we have tacitly
  assumed that variables in $\varphi$ are used as attribute names and, at the same
  time, we have disregarded their types. An explicit (and rigorous) treatment of
  types can be incorporated but it does not bring new insight into the invariance
  issues and we therefore use this simplification. The role of $\mathbf{L}$-structures
  as database instances is basically the same as in the classic model except for
  the fact that each $\rsym^\mathbf{M}$ represents an RDT instead of
  a classic relation. Indeed, a propositional symbol $\rsym$ may be seen as a name
  and $\rsym^\mathbf{M}$ (the interpretation of $\rsym$ in $\mathbf{M}$) may be
  seen as a current value of $\rsym$ considering $\mathbf{M}$.
\end{remark}

The equality of the considered query systems can be proved by showing that for
a query formulated in one of the systems there is a corresponding equivalent query
in the second one and \emph{vice versa.} The arguments are similar as in the ordinary
non-ranked model and we therefore focus only on the essential differences. 

%%%%%%%%%%%%%%%%%%%%%%%%%%%%%%%%%%%%%%%%%%%%%%%%%%%%%%%%%%%%%%%%%%%%%%%%%%%%%%%%
\paragraph{From Relational Operations to Queries in G\"odel Logic}
Let us assume that $\mathcal{D}$ is a result of a query which uses
RDTs $\mathcal{D}_1,\ldots,\mathcal{D}_n$, restriction conditions
$\rcond_1,\ldots,\rcond_k$, and operations $\bowtie$, $\sigma$,
$\pi$, $\cup$, $\frdiv$, and renaming (in the ordinary sense).
Then, there is a finite $\mathbf{M}$ and a formula $\varphi$
such that $\mathcal{D}$ coincides with $\mathcal{D}_{\mathbf{M},\varphi}$.
The construction of $\mathbf{M}$ and $\varphi$ is straightforward and
goes along the same lines as in the ordinary case except for the fact
that the division is not a derivable operation. First, let $\mathbf{M}$
be an RDT where each RDT $\mathcal{D}_i$ is represented by $\rsym_i^\mathbf{M}$
and each restriction condition $\rcond_j$ is represented by $\rsym[s]_j^\mathbf{M}$.
Observe that since we consider only finitely many input RDTs, the universe of
$\mathbf{M}$ can be considered as a finite set and all $\rsym[s]_j^\mathbf{M}$'s
can be restricted to this finite universe. In case of queries resulting by
$\bowtie$, $\sigma$, $\pi$, and $\cup$, the desired formula is constructed
as in the classic case from formulas corresponding to subqueries.
For instance, let us assume that the query is of the form of
a projection onto $S = \{y_1,\ldots,y_p\}$ for $p \geq 0$ and its
subquery (the argument for the projection) produces an RDT
on $R = \{y_1,\ldots,y_q\}$ for $q \geq p$.
If we assume that a formula $\psi$ is a counterpart to the subquery, then
the counterpart of the projection is
\begin{align}
  (\exists y_{p+1})\cdots(\exists y_q)\psi,
  \label{eqn:cntr_proj}
\end{align}
i.e., the same formula as in the classic case. In case of the division,
which is not a fundamental operation, we proceed analogously. Namely,
we use a formula
\begin{align}
  \vartheta \logand (\forall y_1)\cdots(\forall y_p)(\psi \logimp \chi),
  \label{eqn:cntr_div}
\end{align}
where $\psi$, $\chi$, and $\vartheta$ are formulas corresponding to subqueries,
and $\{y_1,\ldots,y_p\}$ is the set of all attributes which are common
to the results of subqueries corresponding to $\psi$ and $\chi$,
cf.~\eqref{eqn:div}. Altogether, query of arbitrary complexity formulated
in terms of the relational operations with RDTs can equivalently be expressed
by a formula of G\"odel logic.

%%%%%%%%%%%%%%%%%%%%%%%%%%%%%%%%%%%%%%%%%%%%%%%%%%%%%%%%%%%%%%%%%%%%%%%%%%%%%%%%
\paragraph{From Queries in G\"odel Logic to Relational Operations}
Conversely, consider any
finite $\mathbf{L}$-structure $\mathbf{M}$ with a universe $M$
and a formula $\varphi$.
Let $\mathcal{D}_M$ denote an RDT on $\{y\}$ such that
$\{r(y);\, \mathcal{D}_M(r) = 1\} = M$ and $L(\mathcal{D}_M) = \{1\}$.
Since $\mathbf{M}$ is finite, such an RDT always exists. Let $0_R$ denote
an empty RDT on $R$ (i.e., the answer set of $0_R$ is empty). Under this notation,
one can construct a relational expression which involves $\mathcal{D}_M$,
finitely many RDTs $0_R$, RDTs corresponding to all $\rsym^\mathbf{M}$, and
relational operations $\bowtie$, $\sigma$, $\pi$, $\cup$, $\rightarrow$,
and $\frdiv$ such that $\mathcal{D}_{\mathbf{M},\varphi}$ coincides with
the value of the expression. 
Again, the construction is fully analogous to the classic one except for the
fact that we consider two fundamental
quantifiers and fundamental connectives $\False$ and $\logimp$ which cannot be
defined in terms of the other ones. Indeed, if $\varphi$ is $\False$,
the corresponding expression is $0_\emptyset$, i.e., $0_R$ for $R = \emptyset$.
If $\varphi$ is $\rsym(x_1,\ldots,x_n)$, then we can consider the RDT corresponding
to $\rsym^\mathbf{M}$. For $\varphi$ being either of $\psi \logand \chi$ and
$\psi \logimp \chi$, we utilize the relational operations $\bowtie$
and $\rightarrow$ in conjunction with $\mathcal{D}_M$ (and optionally
the renaming of attributes); note here that as in the classic case,
$\psi$ and $\chi$ may have different sets
of variables which appear free in $\psi$ and $\chi$, respectively.
If $\varphi$ is $(\exists x)\psi$, we proceed as in the classic
case using $\pi$ and $\bowtie$. For $\varphi$ being $(\forall x)\psi$,
the expression is built using $\frdiv$ and $\mathcal{D}_M$. Namely,
the expression is of the form $\rdiv{\mathcal{D}}{\mathcal{D}_\psi}{\mathcal{D}_x}$,
where
(i)
$\mathcal{D}_x$ is $\mathcal{D}_M$ with the attribute $y$ renamed to $x$;
(ii)
$\mathcal{D}$ is a join of finitely many $\mathcal{D}_{y_1},\ldots,\mathcal{D}_{y_n}$
such that all free variables in $\psi$ except for $x$ are exactly $y_1,\ldots,y_n$
and each $\mathcal{D}_{y_i}$ results from $\mathcal{D}_M$ by renaming $y$ to $y_i$;
(iii)
$\mathcal{D}_\psi$ results by the expression corresponding to $\psi$.

\medskip
Owing to the correspondence between the query system based on relational operations
with RDTs and the system based on evaluating formulas of G\"odel logic, we conclude
that every query formulated by a formula of G\"odel logic is invariant to ordinal
transformations. This observation is a direct consequence of Theorem~\ref{th:inv1}
and Theorem~\ref{th:inv2} and is summarized in the following corollary.

\begin{corollary}\label{col:inv_calculus}
  Let $\varphi$ be a formula of language $\mathcal{J}$ and $\mathbf{M}$
  be a finite $\mathbf{L}$-structure for $\mathcal{J}$.
  Furthermore, let $f$ be an order embedding such that $f(0) = 0$
  and $f(1) = 1$. Then,
  \begin{align}
    \mathcal{D}_{\mathbf{M},\varphi} \circ f &= 
    \mathcal{D}_{\mathbf{M} \circ f,\varphi},
  \end{align}
  where $\mathbf{M} \circ f$ is a finite $\mathbf{L}$-structure
  for $\mathcal{J}$ such that $\rsym^{\mathbf{M} \circ f} = \rsym^{\mathbf{M}} \circ f$
  for any relation symbol $\rsym$ of the language $\mathcal{J}$.
  \qed
\end{corollary}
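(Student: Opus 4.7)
The plan is to leverage the translation from formulas of G\"odel logic into relational expressions that was sketched just above the corollary statement, and then push the invariance through using Theorem~\ref{th:inv1} and Theorem~\ref{th:inv2}. Since the paper asserts that the corollary is a direct consequence of these two theorems, the natural route is to reduce the semantic equality to a syntactic identity between two instantiations of the \emph{same} relational expression.

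First, I would fix $\varphi$ and $\mathbf{M}$ and invoke the translation ``From Queries in G\"odel Logic to Relational Operations'' to obtain a relational expression $E$ built from the RDTs $\rsym^{\mathbf{M}}$ (one per relation symbol occurring in $\varphi$), the auxiliary RDT $\mathcal{D}_M$ corresponding to the universe, an empty RDT $0_\emptyset$ representing $\False$, and the operations $\bowtie$, $\sigma$, $\pi$, $\cup$, $\fresd$, and $\frdiv$. By construction $\mathcal{D}_{\mathbf{M},\varphi}$ equals the value of $E$ on these inputs. Applying the same translation to $\mathbf{M} \circ f$ yields the same expression $E$ evaluated on the inputs $\rsym^{\mathbf{M} \circ f} = \rsym^{\mathbf{M}} \circ f$, together with the auxiliary RDTs built from the universe $M$ (which is unchanged) and the empty RDT $0_\emptyset$. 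Hence proving the corollary reduces to showing
\begin{align*}
  E\bigl(\rsym^{\mathbf{M}}_1, \ldots, \rsym^{\mathbf{M}}_n, \mathcal{D}_M, 0_\emptyset\bigr) \circ f =
  E\bigl(\rsym^{\mathbf{M}}_1 \circ f, \ldots, \rsym^{\mathbf{M}}_n \circ f, \mathcal{D}_M, 0_\emptyset\bigr).
\end{align*}

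Second, I would verify that the auxiliary RDTs are fixed by post-composition with $f$. Since $\mathcal{D}_M$ takes only the values $0$ and $1$ and $f(0) = 0$, $f(1) = 1$, we have $\mathcal{D}_M \circ f = \mathcal{D}_M$; likewise $0_\emptyset \circ f = 0_\emptyset$. Then I would prove the displayed equality by structural induction on the subexpressions of $E$: each operation appearing in $E$ is handled by exactly one of the clauses \eqref{th:join}--\eqref{th:resd} of Theorem~\ref{th:inv1} and Theorem~\ref{th:inv2}. The hypothesis that $f$ is an order embedding with $f(0) = 0$ and $f(1) = 1$ is strong enough to satisfy the prerequisites of every clause used in the translation (note that Theorem~\ref{th:inv2} is invoked for $\fresd$ and $\frdiv$, which only need $f$ to be an order embedding, while the remaining operations need only that $f$ be order preserving). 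Composing these pointwise identities along the parse tree of $E$ yields the needed equality.

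The only genuine subtlety, which I view as the main obstacle, is the treatment of restriction conditions in the translation: the conditions $\rcond_j$ occurring inside $\sigma$ subexpressions come from formulas, but in the relational translation they must themselves be transformed to $\rcond_j \circ f$ as required by~\eqref{th:restrict}. I would therefore be careful to present $E$ so that each occurrence of a restriction condition is handled either by absorbing it into an auxiliary RDT or by tracking its post-composition with $f$ explicitly; the translation as described in the paper produces expressions whose restriction conditions arise from interpretations of atomic formulas in $\mathbf{M}$, and the corresponding conditions in the translation of the same formula against $\mathbf{M} \circ f$ are precisely those interpretations post-composed with $f$, so~\eqref{th:restrict} applies without modification. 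Once this bookkeeping is settled, the inductive step is immediate and the corollary follows.
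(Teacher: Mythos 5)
Your proposal is correct and follows exactly the route the paper intends: the paper offers no explicit proof beyond asserting that the corollary is a direct consequence of Theorem~\ref{th:inv1} and Theorem~\ref{th:inv2} via the translation of formulas into relational expressions, and your structural induction over that translation, together with the observation that the auxiliary RDTs (and any $\{0,1\}$-valued restriction conditions) are fixed by post-composition with $f$ since $f(0)=0$ and $f(1)=1$, is precisely the argument being alluded to.
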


We now turn our attention to the axiomatization of G\"odel logic and its
consequences for the query systems.
G\"odel logic has a complete Henkin-style axiomatization, i.e.,
a special deductive system. The axiomatization
can be used to find proofs of properties of relational operations owing to
the relationship between the two query systems considered in this section.
The deductive system (for the language $\mathcal{J}$) consists of the
following axioms of logical connectives:
\begin{align}
  & \varphi \logimp (\varphi \logand \varphi), \label{eqn:A1} \\
  & (\varphi \logimp \psi) \logimp
  ((\psi \logimp \chi) \logimp (\varphi \logimp \chi)), \\
  & (\varphi \logand \psi) \logimp \varphi, \\
  & (\varphi \logand \psi) \logimp (\psi \logand \varphi), \\
  & (\varphi \logimp (\psi \logimp \chi)) \logimp
  ((\varphi \logand \psi) \logimp \chi), \\
  & ((\varphi \logand \psi) \logimp \chi) \logimp
  (\varphi \logimp (\psi \logimp \chi)), \\
  & ((\varphi \logimp \psi) \logimp \chi) \logimp
  (((\psi \logimp \varphi) \logimp \chi) \logimp \chi), \\
  & \False \logimp \varphi,
\end{align}
where $\varphi,\psi,\chi$ are arbitrary formulas of $\mathcal{J}$.
In addition to the logical axioms,
we admit the following axioms of substitution
\begin{align}
  &(\forall x)\varphi \logimp \varphi(x/y), \\
  &\varphi(x/y) \logimp (\exists x)\varphi,
\end{align}
where $x$ and $y$ are variables such that $y$ is free for $x$ in $\varphi$
in the usual sense, i.e., no free occurrence of $x$ in $\varphi$ lies within
the scope of a quantifier which binds $y$, see~\cite{Me87}. Furthermore,
we assume the following axioms of the distribution:
\begin{align}
  &(\forall x)(\varphi \logimp \psi) \logimp (\varphi \logimp (\forall x)\psi),
  \\
  &(\forall x)(\psi \logimp \varphi) \logimp ((\exists x)\psi \logimp \varphi),
  \\
  &(\forall x)(\varphi \logor \psi) \logimp (\varphi \logor (\forall x)\psi),
  \label{eqn:AQ3}
\end{align}
where $\varphi$ is an arbitrary formula such that $x$ is not free in $\varphi$.
In addition to the axioms~\eqref{eqn:A1}--\eqref{eqn:AQ3},
the deductive system consists of
deduction rules \emph{modus ponens} ``from $\varphi$ and $\varphi \logimp \psi$
infer $\psi$'' (i.e., the law of detachment) and \emph{generalization}
``from $\varphi$ infer $(\forall x)\varphi$''. As usual, a proof by
a set $\Sigma$ of formulas is a finite sequence $\varphi_1,\ldots,\varphi_n$
where each $\varphi_i$ is a logical axiom or a formula in $\Sigma$ or it
is derived by modus ponens or generalization from preceding formulas in
the sequence; $\varphi$ is provable by $\Sigma$, denoted $\Sigma \vdash \varphi$,
if there is a proof $\varphi_1,\ldots,\varphi_n$ by $\Sigma$ such
that $\varphi = \varphi_n$.

The notion of provability is one paricular notion of (a syntactic) entailment
in the logic. Other notion of entailment---the semantic entailment may be defined
based on the notion of an $\mathbf{L}$-model. In particular, for $\varphi$ and
a safe $\mathbf{L}$-structure $\mathbf{M}$, we put
\begin{align}
  ||\varphi||_\mathbf{M} &=
  \inf\{||\varphi||_{\mathbf{M},v};\, v \text{ is $\mathbf{M}$-valuation}\}.
  \label{eqn:varphi_sem}
\end{align}
Furthermore, a safe $\mathbf{L}$-structure $\mathbf{M}$ is called
a model of $\Sigma$ whenever $||\varphi||_\mathbf{M} = 1$ for all
$\varphi \in \Sigma$. We put $\Sigma \models \varphi$ and say that $\varphi$
is semantically entailed by $\Sigma$ whenever $||\varphi||_{\mathbf{M}} = 1$
for any $\mathbf{L}$-model $\mathbf{M}$ of $\Sigma$ where $\mathbf{L}$
is any totally ordered G\"odel algebra. For convenience, we write
$\vdash \varphi$ and $\models \varphi$ in case of $\Sigma = \emptyset$.
The following completeness theorem is established (recall that
$[0,1]_\mathbf{G}$ denotes the standard G\"odel algebra defined on the real
unit interval).

\begin{theorem}[Completeness of first-order G\"odel logic]\label{th:compl_God}%
  Let $\Sigma$ be any set of formulas of $\mathcal{J}$.
  The following are equivalent:
  \begin{enumlist}\parskip=\smallskipamount%
  \item[\itm{1}]
    $\Sigma \vdash \varphi$;
  \item[\itm{2}]
    $\Sigma \models \varphi$;
  \item[\itm{3}]
    $||\varphi||_\mathbf{M} = 1$ for each $[0,1]_\mathbf{G}$-model of\/ $\Sigma$;
  \item[\itm{4}]
    For each $[0,1]_\mathbf{G}$-structure $\mathbf{M}$ there is $\psi \in \Sigma$
    such that $||\psi||_\mathbf{M} \leq ||\varphi||_\mathbf{M}$;
  \item[\itm{5}]
    For each $[0,1]_\mathbf{G}$-structure $\mathbf{M}$ and each $a \in [0,1]$:
    \\[2pt]
    if $||\psi||_\mathbf{M} \geq a$ for each $\psi \in \Sigma$,
    then $||\varphi||_\mathbf{M} \geq a$.
  \end{enumlist}
\end{theorem}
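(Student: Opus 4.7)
The plan is to establish the equivalences by combining soundness, a Henkin-style completeness argument for G\"odel logic, and a density argument on $[0,1]_\mathbf{G}$ to bootstrap the graded statements. The implications $\itm{1} \Rightarrow \itm{2} \Rightarrow \itm{3}$ are routine, and $\itm{5} \Rightarrow \itm{2}$ is immediate by specializing $a = 1$; the substantive work lies in $\itm{3} \Rightarrow \itm{1}$ (the standard completeness direction) and in refining it to yield $\itm{4}$ and $\itm{5}$.

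First, for $\itm{1} \Rightarrow \itm{2}$, I would verify soundness by induction on the length of a proof: using the adjointness \eqref{eqn:adj} of $\inf$ and $\rightarrow$ together with the G\"odel condition $\sup\{a \rightarrow b, b \rightarrow a\} = 1$, each logical axiom \eqref{eqn:A1}--\eqref{eqn:AQ3} evaluates to $1$ under every $\mathbf{M}$-valuation in any safe $\mathbf{L}$-structure over a totally ordered G\"odel algebra $\mathbf{L}$; and both \emph{modus ponens} and generalization preserve the property of evaluating to $1$ (the latter using the definition of $(\forall x)\varphi$ as an infimum). The step $\itm{2} \Rightarrow \itm{3}$ is trivial because $[0,1]_\mathbf{G}$-models form a subclass of models over arbitrary totally ordered G\"odel algebras.

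The heart of the argument is $\itm{3} \Rightarrow \itm{1}$, which I would prove by contraposition in the Henkin style. Suppose $\Sigma \not\vdash \varphi$. Using the deduction theorem, I extend $\Sigma$ to a theory $T$ in a language enriched with denumerably many new object constants such that $T$ is \emph{linear} (for any formulas $\psi,\chi$ either $T \vdash \psi \logimp \chi$ or $T \vdash \chi \logimp \psi$), \emph{Henkin-complete} (each existential formula has a constant witness), and still $T \not\vdash \varphi$; the linearity is forced precisely using the axiom $((\varphi \logimp \psi) \logimp \chi) \logimp (((\psi \logimp \varphi) \logimp \chi) \logimp \chi)$ which expresses prelinearity. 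The Lindenbaum algebra of $T$, quotiented by provable equivalence $\logequ$, is then a totally ordered G\"odel algebra in which the canonical term model satisfies $||\psi||_{\mathbf{M}} = 1$ for all $\psi \in \Sigma$ and $||\varphi||_{\mathbf{M}} < 1$. A~countable totally ordered G\"odel algebra embeds into $[0,1]_\mathbf{G}$ (a classical result of Horn), and since all G\"odel operations are determined by order together with $0,1$, the embedding transports the countermodel into $[0,1]_\mathbf{G}$, yielding $\itm{3} \Rightarrow \itm{1}$.

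Finally, for $\itm{1} \Rightarrow \itm{4}$ I would exploit the fact that any proof uses only finitely many premises $\psi_1,\ldots,\psi_n \in \Sigma$; soundness applied to the provable formula $(\psi_1 \logand \cdots \logand \psi_n) \logimp \varphi$ gives $||\psi_1||_\mathbf{M} \logand \cdots \logand ||\psi_n||_\mathbf{M} \leq ||\varphi||_\mathbf{M}$ in every $[0,1]_\mathbf{G}$-structure, and since $\mathbf{L}$ is totally ordered this infimum is attained at some $\psi_i$, yielding \itm{4}. The implication $\itm{4} \Rightarrow \itm{5}$ is immediate: if $||\psi||_\mathbf{M} \geq a$ for every $\psi \in \Sigma$, then in particular the witness $\psi$ from $\itm{4}$ satisfies $||\varphi||_\mathbf{M} \geq ||\psi||_\mathbf{M} \geq a$. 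The main obstacle will be the combined linearization and Henkin-witnessing construction in the $\itm{3} \Rightarrow \itm{1}$ step, together with the embeddability of countable totally ordered G\"odel algebras into $[0,1]_\mathbf{G}$; both are standard but technically involved, and the full details can be found in the treatment of G\"odel logic in~\cite{Haj:MFL}.
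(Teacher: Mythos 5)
The paper does not prove this theorem at all: its ``proof'' is a one-line citation to \cite[Theorem 5.2.9 and Corollary 5.3.4]{Haj:MFL}, so there is no in-paper argument to compare against. Your sketch reconstructs exactly the standard argument that the citation points to (soundness by induction on proofs; Henkin-style completeness via extension to a linear, witness-complete theory whose Lindenbaum algebra is a countable totally ordered G\"odel algebra; transfer to $[0,1]_\mathbf{G}$ by an order embedding; and derivation of the graded forms \itm{4} and \itm{5} from the finiteness of proofs plus totality of the order), and the overall structure is correct. Two small points. First, specializing $a=1$ in \itm{5} gives \itm{3}, not \itm{2}, since \itm{5} quantifies only over $[0,1]_\mathbf{G}$-structures; your cycle of implications still closes because you independently establish $\itm{3} \Rightarrow \itm{1} \Rightarrow \itm{2}$, but the claim ``$\itm{5} \Rightarrow \itm{2}$ is immediate'' should be restated as $\itm{5} \Rightarrow \itm{3}$. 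Second, in the transfer step it is not enough that a countable chain order-embeds into $[0,1]$ and that the G\"odel operations are order-definable: because the first-order semantics interprets quantifiers by infima and suprema as in \eqref{eqn:forall} and \eqref{eqn:exists}, the embedding must additionally preserve the particular (witnessed) infima and suprema arising from the term model, which is precisely what the Henkin witnessing property is used to guarantee and is the genuinely delicate part of the argument in \cite{Haj:MFL}; your sketch gestures at this but should make the dependence explicit. Neither issue is a gap in substance, and deferring the technical details to \cite{Haj:MFL} is exactly what the paper itself does.
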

\begin{proof}
  See~\cite[Theorem 5.2.9 and Corollary 5.3.4]{Haj:MFL}.
\end{proof}

\begin{remark}
  Note that the term ``completeness'' in Theorem~\ref{th:compl_God} refers to
  the syntactico-semantical completeness of the logic, cf. also~\cite{CiHa:Tnbpfl},
  and \emph{not} the functional completeness.
  In fact, the system of connectives used in
  the logic cannot be functionally complete because $[0,1]_\mathbf{G}$
  admits uncountably many $n$-ary functions while the language of the logic and,
  therefore, the number of different formulas that can be written in the
  language, is countable. In this sense, the underlying logic of the rank-aware
  model depart from the classic logic where any $n$-ary function on $\{0,1\}$
  is expressible using (the truth functions of) the fundamental
  connectives (e.g., $\logimp$ and $\logneg$). Also note that G\"odel
  logic is indeed weaker than the classic logic. For instance,
  $\logneg \logneg \varphi \logimp \varphi$ is not provable in
  G\"odel logic. As a consequence, the relational operations with RDTs
  considered in our paper do not satisfy all laws that are satisfied in
  the classic relational model. For instance, there are $\mathcal{D}_1$
  and $\mathcal{D}_2$ on the same relation scheme such that 
  $\mathcal{D}_1 \cap \mathcal{D}_2 \ne
  \minus{\mathcal{D}_1}{(\minus{\mathcal{D}_1}{\mathcal{D}_2})}$.
\end{remark}

As an application of the established connection between the relational operations
with RDTs and G\"odel logic, we can introduce a derived operation of a semijoin.
In the classic mode, a semijoin of $\mathcal{D}_1$ and $\mathcal{D}_2$ on $R_1$
and $R_2$, respectively, may be introduced by
$\project{R_1}{\join{\mathcal{D}_1}{\mathcal{D}_2}}$ or, equivalently,
by $\join{\mathcal{D}_1}{\project{R_1 \cap R_2}{\mathcal{D}_2}}$. From the
perspective of G\"odel logic, $\project{R_1}{\join{\mathcal{D}_1}{\mathcal{D}_2}}$
can be represented by an $\mathbf{L}$-stricture $\mathbf{M}$ with $\rsym_1^\mathbf{M}$
and $\rsym_2^\mathbf{M}$ corresponding to $\mathcal{D}_1$ and $\mathcal{D}_2$,
respectively, and a formula
\begin{align}
  (\exists z_1)\cdots(\exists z_k)
  (\rsym_1(x_1,\ldots,x_i,y_1,\ldots,y_j) \logand
  \rsym_2(y_1,\ldots,y_j,z_1,\ldots,z_k)).
  \label{eqn:derived_semijoin_1}
\end{align}
In G\"odel logic, the formula is equivalent to
\begin{align}
  \rsym_1(x_1,\ldots,x_i,y_1,\ldots,y_j) \logand
  (\exists z_1)\cdots(\exists z_k)\rsym_2(y_1,\ldots,y_j,z_1,\ldots,z_k).
  \label{eqn:derived_semijoin_2}
\end{align}
This is a direct consequence of the fact that
\begin{align}
  \vdash
  (\exists x)(\varphi \logand \psi) \logequ (\varphi \logand (\exists x)\psi)
  \label{eqn:ex_and}
\end{align}
provided that $x$ is not free in $\varphi$, \cite[Lemma 5.1.21]{Haj:MFL}.
Observe that~\eqref{eqn:derived_semijoin_2} is in a correspondence with
$\join{\mathcal{D}_1}{\project{R_1 \cap R_2}{\mathcal{D}_2}}$.
Therefore, in our setting, we also have
\begin{align}
  \project{R_1}{\join{\mathcal{D}_1}{\mathcal{D}_2}} =
  \join{\mathcal{D}_1}{\project{R_1 \cap R_2}{\mathcal{D}_2}}
  \label{eqn:semijoin_law}
\end{align}
as in the classic model which allows us to define a semijoin of ranked
data tables by the expression on either side of the equality
in~\eqref{eqn:semijoin_law}. In addition, owing to the observations
in Theorem~\ref{th:inv1}, the semijoin is also invariant
to the ordinal scaling which follows directly by~\eqref{th:join}
and~\eqref{th:project}.

%%%%%%%%%%%%%%%%%%%%%%%%%%%%%%%%%%%%%%%%%%%%%%%%%%%%%%%%%%%%%%%%%%%%%%%%%%%%%%%%
%%%%%   COMPUTATIONAL ISSUES AND RELATIONSHIP TO OTHER APPROACHES
%%%%%%%%%%%%%%%%%%%%%%%%%%%%%%%%%%%%%%%%%%%%%%%%%%%%%%%%%%%%%%%%%%%%%%%%%%%%%%%%
\section{Computational Issues and Relationship to Other Approaches}\label{sec:comp}
The primary interest of our paper is the invariance to ordinary scaling.
In this section, we make a digression and comment on algorithms for evaluating
particular monotone queries and the relationship to other rank-aware approaches.
We show that our observations on the connection of the relational operations
with RDTs and G\"odel logic can be used to derive laws for query transformations.
In addition, we show that the algorithm for computing top-$k$ query
results~\cite{Fa98:CFIfMS} fits well into our formal model. Finally,
we show that the observations on the invariance to ordinal transormations can
also be applied in the approach by~\cite{LiChCh:RQAaOfRTQ}.

One of the crucial aspects of any model of data from the point of view of its
applicability is the possibility to transform general queries to efficient
logical and then physical query plans. In this section, we show that for
a fragment of the discussed relational operations, one can consider similar
transformations of logical query plans, i.e., transformations of expressions
composed of relational operations with RDTs to equivalent expressions which
are more suitable for an efficient execution, as in the ordinary relational model
of data. We focus only on issues which are specific to our model.

First, we consider laws concerning natural join, projections, selections,
and unions and show that our operations with RTDs admit important
transformation laws which are used in the ordinary relational model.

\begin{theorem}\label{th:laws}
  Let $\mathcal{D}_1$ and $\mathcal{D}_2$ be RDTs
  on relation schemes $R_1$ and $R_2$, respectively.
  Then, the following properties hold true.
  \begin{enumlist}\parskip=\smallskipamount%
  \item[\itm{1}]
    If $\rcond\!: \mathrm{Tupl}(R_1 \cup R_2) \to L$ and
    $\rcond_1\!: \mathrm{Tupl}(R_1) \to L$ are restriction conditions
    such that $\rcond_1(r_1) = \rcond(r_1r_2)$ for any $r_1 \in \mathrm{Tupl}(R_1)$
    and $r_2 \in \mathrm{Tupl}(R_2)$ which are joinable, then
    $\restrict{\rcond}{\join{\mathcal{D}_1}{\mathcal{D}_2}} =
    \join{\restrict{\rcond_1\!}{\mathcal{D}_1}}{\mathcal{D}_2}$.
  \item[\itm{2}]
    If $\rcond_1\!: \mathrm{Tupl}(R_1) \to L$ and
    $\rcond\!: \mathrm{Tupl}(R) \to L$ are restriction conditions
    such that $\rcond(r) = \rcond(rr')$ for all $r \in \mathrm{Tupl}(R)$
    and $r' \in \mathrm{Tupl}(R_1 \setminus R)$, then
    $\project{R}{\restrict{\rcond_1\!}{\mathcal{D}_1}} =
    \restrict{\rcond}{\project{R}{\mathcal{D}_1}}$.
  \item[\itm{3}]
    If $R_1 = R_2$ and $R \subseteq R_1$, then
    $\project{R}{\mathcal{D}_1 \cup \mathcal{D}_2} =
    \project{R}{\mathcal{D}_1} \cup \project{R}{\mathcal{D}_2}$.
  \item[\itm{4}]
    If $S \subseteq R \subseteq R_1$,
    then $\project{S}{\project{R}{\mathcal{D}_1}} =
    \project{S}{\mathcal{D}_1}$.
  \end{enumlist}
\end{theorem}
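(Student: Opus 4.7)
My plan is to verify each of the four identities pointwise by unfolding the definitions~\eqref{eqn:join}, \eqref{eqn:restrict}, \eqref{eqn:project}, and~\eqref{eqn:union}, and then simplifying using elementary lattice arithmetic (commutativity and associativity of $\inf$ and $\sup$, together with the distributivity of $\inf$ over arbitrary $\sup$). For items \itm{1}--\itm{3} I would decompose $R_1 = R \cup S$ and $R_2 = S \cup T$ with $S = R_1 \cap R_2$, $R = R_1 \setminus S$, and $T = R_2 \setminus S$, so that the definition of $\bowtie$ applies verbatim.

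For \itm{1}, I would fix an arbitrary tuple $r_1 r_2 \in \mathrm{Tupl}(R_1 \cup R_2)$ (more precisely, a tuple $rst$ coming from joinable $r_1 = rs$ and $r_2 = st$) and compute both sides: the left-hand side equals $\inf\{\mathcal{D}_1(r_1), \mathcal{D}_2(r_2), \rcond(r_1 r_2)\}$, while the right-hand side equals $\inf\{\mathcal{D}_1(r_1), \rcond_1(r_1), \mathcal{D}_2(r_2)\}$. The hypothesis $\rcond_1(r_1) = \rcond(r_1 r_2)$ identifies the two expressions. Item \itm{3} is analogous and even simpler: evaluating at $r \in \mathrm{Tupl}(R)$ both sides collapse to a double supremum over $\mathrm{Tupl}(R_1 \setminus R)$ of $\sup\{\mathcal{D}_1(rr'), \mathcal{D}_2(rr')\}$, and the associativity (and commutativity) of $\sup$ reorders the two suprema into the required form.

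The main technical step sits in \itm{2} (and indeed also shortens \itm{1} if one prefers), namely the distributivity
\begin{align*}
  \inf\{b, \sup K\} &= \sup\{\inf\{b,k\};\, k \in K\}
\end{align*}
in a totally ordered complete lattice. I would prove this as a short separate lemma by cases on whether $b \geq \sup K$ or $b < \sup K$; in the latter case, since every element of $K$ cannot be below $b$ (else $b$ would be an upper bound of $K$ contradicting $b < \sup K$), some $k' \in K$ satisfies $k' > b$, so $\inf\{b,k'\} = b$. Armed with this, the calculation for \itm{2} reads
\begin{align*}
  (\project{R}{\restrict{\rcond_1}{\mathcal{D}_1}})(r)
  &= \sup\{\inf\{\mathcal{D}_1(rr'), \rcond_1(rr')\};\, r' \in \mathrm{Tupl}(R_1 \setminus R)\} \\
  &= \sup\{\inf\{\mathcal{D}_1(rr'), \rcond(r)\};\, r' \in \mathrm{Tupl}(R_1 \setminus R)\} \\
  &= \inf\{\rcond(r), \sup\{\mathcal{D}_1(rr');\, r' \in \mathrm{Tupl}(R_1 \setminus R)\}\} \\
  &= (\restrict{\rcond}{\project{R}{\mathcal{D}_1}})(r),
\end{align*}
where the second equality uses the assumption that $\rcond_1$ does not actually depend on the attributes in $R_1 \setminus R$, and the third uses the distributivity lemma.

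For \itm{4} I would decompose $R_1 \setminus S = (R \setminus S) \cup (R_1 \setminus R)$ as a disjoint union; any $t'' \in \mathrm{Tupl}(R_1 \setminus S)$ factors uniquely as $tt'$ with $t \in \mathrm{Tupl}(R \setminus S)$ and $t' \in \mathrm{Tupl}(R_1 \setminus R)$, so associativity of $\sup$ gives
\begin{align*}
  \sup\{\sup\{\mathcal{D}_1(stt');\, t' \in \mathrm{Tupl}(R_1 \setminus R)\};\, t \in \mathrm{Tupl}(R \setminus S)\}
  = \sup\{\mathcal{D}_1(st'');\, t'' \in \mathrm{Tupl}(R_1 \setminus S)\},
\end{align*}
which is precisely $(\project{S}{\mathcal{D}_1})(s)$. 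No real obstacle arises; the only subtlety worth flagging explicitly is the distributivity lemma, which crucially uses the totality of $\leq$ on $\mathbf{L}$ and fails for a general (residuated) aggregation $\otimes$ in place of $\inf$---the same phenomenon noted in Remark~\ref{rem:general} for the invariance theorems.
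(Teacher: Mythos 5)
Your proof is correct, but it takes a genuinely different route from the paper's. The paper proves Theorem~\ref{th:laws} by invoking the correspondence with G\"odel logic established in Section~\ref{sec:heyting}: each identity is translated into an equivalence of formulas (\itm{1} becomes associativity of $\logand$, \itm{2} and \itm{3} become the quantifier-shift laws $(\exists x)(\varphi \logand \psi) \logequ (\varphi \logand (\exists x)\psi)$ and $(\exists x)(\varphi \logor \psi) \logequ (\varphi \logor (\exists x)\psi)$ for $x$ not free in $\varphi$, and \itm{4} collapses to a single formula), and the equalities then follow from provability in G\"odel logic via the completeness theorem. That approach is short and makes the conceptual point that the classic push-down optimizations survive precisely because the corresponding equivalences are theorems of the weaker logic; it also immediately yields any further law whose logical counterpart is G\"odel-provable. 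Your pointwise verification is instead elementary and self-contained: it does not rely on the (only semi-formally established) equivalence of the two query systems, and it isolates exactly which order-theoretic facts carry the argument---general associativity of $\inf$ and $\sup$ for \itm{1}, \itm{3}, \itm{4}, and the distributivity $\inf\{b,\sup K\} = \sup\{\inf\{b,k\};\, k \in K\}$ for \itm{2}, which your case analysis on $b$ versus $\sup K$ establishes correctly (and which is even easier here since the relevant $K$ is finite and non-empty, so $\sup K$ is attained).

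One small inaccuracy in your closing aside: the distributivity of the aggregation over $\sup$ is \emph{not} what fails for a general residuated $\otimes$---in any complete residuated lattice $a \otimes \sup K = \sup\{a \otimes k;\, k \in K\}$ holds by adjointness, so laws \itm{1}--\itm{4} would in fact survive the replacement of $\inf$ by $\otimes$ as in Remark~\ref{rem:general}. What is specific to $\inf$ (and what Remark~\ref{rem:general} is really about) is the invariance to ordinal transformations of Theorems~\ref{th:inv1} and~\ref{th:inv2}, since an order isomorphism need not commute with a general $\otimes$. This does not affect the validity of your proof of the theorem itself.
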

\begin{proof}
  The assertion can be proved by observing formulas of G\"odel logic corresponding
  to the equalities appearing in \itm{1}--\itm{4} and considering the properties
  of $\vdash$ in G\"odel logic. In case of \itm{1},
  $\restrict{\rcond}{\join{\mathcal{D}_1}{\mathcal{D}_2}}$ can be represented by
  a formula $\theta \logand (\varphi \logand \psi)$ and, analogously,
  $\join{\restrict{\rcond_1\!}{\mathcal{D}_1}}{\mathcal{D}_2}$ can be represented
  by a formula $(\theta_1 \logand \varphi) \logand \psi$ (we have tacitly
  identified restriction conditions with formulas). Therefore, \itm{1} follows
  by the associativity of $\logand$, i.e.,
  \begin{align*}
    \vdash (\theta \logand (\varphi \logand \psi)) \logequ
    ((\theta \logand \varphi) \logand \psi),
  \end{align*}
  see \cite[Lemma 2.2.15]{Haj:MFL}, and the relationship of $\rcond_1$ and $\rcond$.
  Analogously, \itm{2} is a consequence of~\eqref{eqn:ex_and};
  \itm{3} is a consequence of
  \begin{align*}
    \vdash (\exists x)(\varphi \logor \psi) \logequ (\varphi \logor (\exists x)\psi)
  \end{align*}
  provided that $x$ is not free in $\varphi$, see~\cite[Lemma 5.1.21]{Haj:MFL}.
  Finally, \itm{4} follows directly by the fact that the left-hand and right-hand sides
  of the equality in \itm{4} translate into a single formula of G\"odel logic of
  the form $(\exists x_1)\cdots(\exists x_n)\varphi$.
\end{proof}

As a consequence of Theorem~\ref{th:laws} and~\eqref{eqn:semijoin_law},
we may conclude that the usual optimization techniques based on 
pushing down restrictions and projections~\cite{GaMoUlWi:DSCB} still work
in the ranke-aware model because the classic laws on which the optimizations
are based are preserved in G\"odel logic. Therefore, many monotone queries in
the rank-aware model can be transformed into expressions of the form
\begin{align}
  \mathcal{D}_1 \bowtie \mathcal{D}_2 \bowtie \cdots \bowtie \mathcal{D}_n,
  \label{eqn:join_nexpr}
\end{align}
where $\mathcal{D}_i$ for $i=1,\ldots,n$ are RDTs on $R_i$ which result
by computing projections and/or restrictions of RDTs representing base
data (i.e., RDTs bound to relation variables in a database instance).
Clearly, a tuple $r_1r_2\cdots r_n$,
where $r_i \in \mathrm{Tupl}(R_i)$ for each $i=1,\ldots,n$,
belongs to the answer set of~\eqref{eqn:join_nexpr} if{}f
all $r_1,r_2,\ldots,r_n$ are joinable and its score is
\begin{align}
  \inf\{\mathcal{D}_1(r_1),\mathcal{D}_2(r_2),\ldots,\mathcal{D}_n(r_n)\} > 0.
\end{align}
Hence, \eqref{eqn:join_nexpr} may be understood as a query in a similar form
as \eqref{eqn:motiv} with a few minor conceptual differences:
(i) $\mathcal{D}_i$ in may not represent a result of an atomic query
as in~\eqref{eqn:motiv}, (ii) we always consider $\inf$ as the interpretation
of $\mathop{\&}$, and (iii) the objects which match queries are in fact tuples
constructed as joins of joinable tuples considered
on general schemes $R_1,R_2,\ldots,R_n$.
Nevertheless, in case one wants to compute only top $k$ matches, i.e.,
if one wants to compute only a portion of the answer
set of~\eqref{eqn:join_nexpr} consisting only of $k$ tuples with highest scores,
we can directly adopt the Fagin algorithm~\cite{Fa98:CFIfMS}, namely,
its improved version which consideres $\inf$ as the aggregation function,
see~\cite[Theorem 4.4]{Fa98:CFIfMS}, provided that each $\mathcal{D}_i$
allows an efficient ``sorted access'' (tuples in the answer set of
$\mathcal{D}_i$ may be listed sorted by scores in the descending order)
which may be assumed in many natural situations. Except for the technical
difference in using ``joinable tuples'', see (iii) above, the Fagin
algorithm does not need to be further modified. Interested readers are
refered to~\cite{Fa98:CFIfMS} for details and complexity analysis.

We now turn our attention to RankSQL and the extended relational algebra
proposed in~\cite{LiChCh:RQAaOfRTQ} which is arguably one of the most
influential approaches to ranking in relational databases.
Similar observation as in Section~\ref{sec:inv} can be made
in the rank-relational approach described in their paper.
Recall that according to~\cite{LiChCh:RQAaOfRTQ}, the basic structure
which serves as a counterpart of the classic relations on relation schemes
is a \emph{rank-relation} $R_\mathcal{P}$ which is understood as a classic
relation $R$ equipped with \emph{scores} and (strict total)
tuple \emph{order} $<_{R_\mathcal{P}}$ based on the scores. The score
for each tuple $r \in R$ is computed as a result of a general (monotonic)
\emph{scoring function} $\mathcal{F}$ which is applied
to \emph{predicate scores} $p_i[r]$ of the tuple $r$.
The predicate scores represent individual ranking criteria
(like low price, high availability, close distance between locations, etc.)
called predicates and denoted by $p_1,\ldots,p_n$. Note that $\mathcal{P}$
(called the \emph{set of evaluated predicates}) is always a subset
of $\{p_1,\ldots,p_n\}$ and the rank-relational model and its implementation
relies on the \emph{ranking principle} \cite[page 133]{LiChCh:RQAaOfRTQ}
based on \emph{maximal possible scores} of tuples in $R$
given $\mathcal{P}$, i.e., each $p_i[r]$ for which $p_i \not\in \mathcal{P}$
($p_i$ is not evaluated) is considered to have the application-specific
maximal possible value of~$p_i$. Therefore, for general $\mathcal{P}$,
each tuple $r \in R$ has its maximal possible score denoted
$\overline{\mathcal{F}}_\mathcal{P}[r]$ and $<_{R_\mathcal{P}}$
(the tuple order in $R$ given $\mathcal{P}$) is introduced based on
such scores, namely, $r_1 <_{R_\mathcal{P}} r_2$ whenever 
$\overline{\mathcal{F}}_\mathcal{P}[r_1] < \overline{\mathcal{F}}_\mathcal{P}[r_2]$.
Furthermore, queries in RankSQL are transformed into expressions
of rank-relational algebra which introduces operations with rank-relations
including restriction, union, intersection, difference, theta-join,
and \emph{rank}---a new operation which produces $R_{\mathcal{P} \cup \{p\}}$
based on $R_{\mathcal{P}}$ and $p \not\in \mathcal{P}$. Let us stress that
the operations with rank-relations indeed operate on rank-relations, i.e.,
based on scores and tuple orders of the input arguments, the operations
define scores and tuple order of the result. For instance, in case of
the union of $R_{\mathcal{P}_1}$ and $S_{\mathcal{P}_2}$, the result is
a rank-relation $(R \cup S)_{\mathcal{P}_1 \cup \mathcal{P}_2}$ in which
$r_1 <_{(R \cup S)_{\mathcal{P}_1 \cup \mathcal{P}_2}} r_2$ whenever
$\overline{\mathcal{F}}_{\mathcal{P}_1 \cup \mathcal{P}_2}[r_1] <
\overline{\mathcal{F}}_{\mathcal{P}_1 \cup \mathcal{P}_2}[r_2]$.

From our perspective, we may view an important special case of the
rank-relational approach in~\cite{LiChCh:RQAaOfRTQ} as follows:
We consider $\mathbf{L}$ (the structure of scores) as a totally
ordered complete lattice and we let $\mathcal{F}$ be $\inf$. That is,
the scoring function always computes the minimum of given predicate scores
and $1$ represents the maximal possible score. In this setting,
rank-relations may be viewed as RDTs with the order of tuples given
implicitly by the scores; predicates $p_i$ may be viewed as general
restriction conditions, and the rank operator may be seen as
a general restriction~\eqref{eqn:restrict}. Moreover, for the
rank-relational querying, we may ask the same question as before:
Does an ordinal transformation of the input ranking criteria yield
the same results? The answer is positive. In a more detail,
let $f\!: L \to L$ be an order embedding which preserves $0$ and $1$.
Then, for each $p_i$ (which is in fact a map from the set of all tuples
on the scheme of $R$ to $L$) we may consider the composed map $p_i \circ f$
and put $\mathcal{P} \circ f = \{p \circ f;\, p \in \mathcal{P}\}$.
With respect to the above-mentioned interpretation of evaluated predicates,
$\mathcal{P} \circ f$ may be seen as a set of ordinally transformed
evaluated predicates. Moreover,$f$ is an order embedding and for $\cup$
defined as above, we have
\begin{align*}
  &
  r_1 <_{(R \cup S)_{\mathcal{P}_1 \cup \mathcal{P}_2}} r_2
  \text{ if{}f}
  \\
  &
  \overline{\mathcal{F}}_{\mathcal{P}_1 \cup \mathcal{P}_2}[r_1] <
  \overline{\mathcal{F}}_{\mathcal{P}_1 \cup \mathcal{P}_2}[r_2]
  \text{ if{}f}
  \\
  &
  f\bigl(\overline{\mathcal{F}}_{\mathcal{P}_1 \cup \mathcal{P}_2}[r_1]\bigr) <
  f\bigl(\overline{\mathcal{F}}_{\mathcal{P}_1 \cup \mathcal{P}_2}[r_2]\bigr).
\end{align*}
Now, using the fact that $\mathcal{F}$ is $\inf$, it follows that
$f\bigl(\overline{\mathcal{F}}_{\mathcal{P}_1 \cup \mathcal{P}_2}[r]\bigr) =
\overline{\mathcal{F}}_{(\mathcal{P}_1 \circ f) \cup (\mathcal{P}_2 \circ f)}[r]$
for all $r \in R$. Hence, by the definition of
$<_{(R \cup S)_{\mathcal{P}_1 \cup \mathcal{P}_2}}$ and 
$<_{(R \cup S)_{(\mathcal{P}_1 \circ f) \cup (\mathcal{P}_2 \circ f)}}$,
\begin{align*}
  &
  r_1 <_{(R \cup S)_{\mathcal{P}_1 \cup \mathcal{P}_2}} r_2
  \text{ if{}f }
  \\
  &
  \overline{\mathcal{F}}_{(\mathcal{P}_1 \circ f) \cup (\mathcal{P}_2 \circ f)}[r_1] <
  \overline{\mathcal{F}}_{(\mathcal{P}_1 \circ f) \cup (\mathcal{P}_2 \circ f)}[r_2]
  \text{ if{}f}
  \\
  &
  r_1 <_{(R \cup S)_{(\mathcal{P}_1 \circ f) \cup (\mathcal{P}_2 \circ f)}} r_2,
\end{align*}
proving that $(R \cup S)_{\mathcal{P}_1 \cup \mathcal{P}_2}$ and
$(R \cup S)_{(\mathcal{P}_1 \circ f) \cup (\mathcal{P}_2 \circ f)}$
are ordinally equivalent. One may proceed the same way for the other
operations of the rank-relation algebra, see~\cite[page 134]{LiChCh:RQAaOfRTQ}.
As a consequence, ordinal transformations do not have any effect on
the results of top-$k$ queries---scores of tuples may be different,
however, the order in which tuples appear in the result is the same.

Finally, let us note that the approach in~\cite{LiChCh:RQAaOfRTQ} is conceptually
similar to ours in that both are capable to answer queries by relations
with tuples annotated by scores which indicate degrees of matches of user
preferences. It should be noted, however, that the approaches are technically
different (even if we consider $\mathcal{F}$ as $\inf$). More detailed on the
technical differences can be found in~\cite{VaVy:Rdrad}.

%%%%%%%%%%%%%%%%%%%%%%%%%%%%%%%%%%%%%%%%%%%%%%%%%%%%%%%%%%%%%%%%%%%%%%%%%%%%%%%%
%%%%%   CONCLUSION
%%%%%%%%%%%%%%%%%%%%%%%%%%%%%%%%%%%%%%%%%%%%%%%%%%%%%%%%%%%%%%%%%%%%%%%%%%%%%%%%
\section{Conclusion}
Notions of ordinal containment and ordinal equivalence of relations consisting of
tuples annotated by scores have been proposed. The ordinal containment and
equivalence have been characterized in terms of the existence of suitable
order-preserving functions and order isomorphisms between subsets of scores.
It has been shown that infima-based algebraic operations with ranked relations
are invariant to ordinal transformations: Queries applied to original
and transformed data yield results which are equivalent in terms of the order
given by scores. We have demonstrated that this property is not preserved if
one considers algebraic operations with ranked relations based on general
aggregation functions like triangular norms (other than the minimum triangular
norm). As a result of our observation, we have concluded that under infima-based
algebraic operations, the scores in ranked tables have no quantitative meaning.
Generality of the result has been demonstrated by applying the observations
in an alternative calculus-based query system grounded in G\"odel logic.
Furthermore, relationship to other approaches has been investigated with
the intention to show the connection to existing algorithms for monotone
query evaluation and conceptually similar approaches to ranking in databases.

%%%%%%%%%%%%%%%%%%%%%%%%%%%%%%%%%%%%%%%%%%%%%%%%%%%%%%%%%%%%%%%%%%%%%%%%%%%%%%%%
\subsubsection*{Acknowledgment}
Supported by grant no. \verb|P202/14-11585S| of the Czech Science Foundation.

%%%%%%%%%%%%%%%%%%%%%%%%%%%%%%%%%%%%%%%%%%%%%%%%%%%%%%%%%%%%%%%%%%%%%%%%%%%%%%%%
%%%%%   BIBLIOGRAPHY
%%%%%%%%%%%%%%%%%%%%%%%%%%%%%%%%%%%%%%%%%%%%%%%%%%%%%%%%%%%%%%%%%%%%%%%%%%%%%%%%
\footnotesize
\bibliographystyle{amsplain}
\bibliography{iotrad}

\providecommand{\bysame}{\leavevmode\hbox to3em{\hrulefill}\thinspace}
\providecommand{\MR}{\relax\ifhmode\unskip\space\fi MR }
% \MRhref is called by the amsart/book/proc definition of \MR.
\providecommand{\MRhref}[2]{%
  \href{http://www.ams.org/mathscinet-getitem?mr=#1}{#2}
}
\providecommand{\href}[2]{#2}
\begin{thebibliography}{10}

\bibitem{AmDeTa:PAQ}
Yael Amsterdamer, Daniel Deutch, and Val Tannen, \emph{Provenance for aggregate
  queries}, Proc. ACM PODS 2011 (New York, NY, USA), ACM, 2011, pp.~153--164.

\bibitem{Be:Oed}
Radim Belohlavek, \emph{Ordinally equivalent data: {A} measurement-theoretic
  look at formal concept analysis of fuzzy attributes}, Int. J. Approx.
  Reasoning \textbf{54} (2013), no.~9, 1496--1506.

\bibitem{BeUrVy:Sadrqlor}
Radim Belohlavek, Lucie Urbanova, and Vilem Vychodil, \emph{Sensitivity
  analysis for declarative relational query languages with ordinal ranks},
  Proc. INAP/WLP 2011 (Hans Tompits, Salvador Abreu, Johannes Oetsch, Jörg
  Pührer, Dietmar Seipel, Masanobu Umeda, and Armin Wolf, eds.), LNAI, vol.
  7773, Springer Berlin Heidelberg, 2013, pp.~58--76.

\bibitem{BeVy:Qssbd}
Radim Belohlavek and Vilem Vychodil, \emph{Query systems in similarity-based
  databases}, Proc. ACM SAC 2010, 2010, pp.~1648--1655.

\bibitem{Bir:LT}
Garrett Birkhoff, \emph{Lattice theory}, American Mathematical Society, 1940.

\bibitem{PhChZh:BigData}
L.~Philip~Chen C.\ and Chun-Yang Zhang, \emph{Data-intensive applications,
  challenges, techniques and technologies: {A} survey on {B}ig {D}ata},
  Information Sciences \textbf{275} (2014), 314--347.

\bibitem{CiHa:Tnbpfl}
Petr Cintula and Petr H{\'{a}}jek, \emph{Triangular norm based predicate fuzzy
  logics}, Fuzzy Sets and Systems \textbf{161} (2010), no.~3, 311--346.

\bibitem{Co:Armodflsdb}
Edgar~F. Codd, \emph{A relational model of data for large shared data banks},
  Commun. ACM \textbf{13} (1970), 377--387.

\bibitem{DaReSu:Pdditd}
Nilesh Dalvi, Christopher R\'{e}, and Dan Suciu, \emph{Probabilistic databases:
  diamonds in the dirt}, Commun. ACM \textbf{52} (2009), 86--94.

\bibitem{DaDa:DTRM}
Christopher~J. Date and Hugh Darwen, \emph{Databases, types, and the relational
  model: The third manifesto}, 3rd ed., Addison-Wesley, 2006.

\bibitem{DaDa:imrel}
\bysame, \emph{Database explorations: Essays on the third manifesto and related
  topics}, ch.~14: Image Relations, pp.~237--272, Trafford Publishing, 2010.

\bibitem{DaDa:divop}
\bysame, \emph{Database explorations: Essays on the third manifesto and related
  topics}, ch.~12: A Brief History of the Relational Divide Operator,
  pp.~169--198, Trafford Publishing, 2010.

\bibitem{DDL:TRT}
Christopher~J. Date, Hugh Darwen, and Nikos~A. Lorentzos, \emph{Time and
  relational theory: Temporal databases in the relational model and {SQL}},
  Morgan Kaufmann, 2014.

\bibitem{EsGo:MTL}
Francesc Esteva and Llu{\'\i}s Godo, \emph{Monoidal t-norm based logic:
  {T}owards a logic for left-continuous t-norms}, Fuzzy Sets and Systems
  \textbf{124} (2001), no.~3, 271--288.

\bibitem{Fa98:CFIfMS}
Ronald Fagin, \emph{Combining fuzzy information from multiple systems}, J.
  Comput. Syst. Sci. \textbf{58} (1999), no.~1, 83--99.

\bibitem{FoGrTa:AXQP}
J.~Nathan Foster, Todd~J. Green, and Val Tannen, \emph{Annotated xml: Queries
  and provenance}, Proc. ACM PODS 2008 (New York, NY, USA), ACM, 2008,
  pp.~271--280.

\bibitem{GaJiKoOn:RL}
Nikolaos Galatos, Peter Jipsen, Tomacz Kowalski, and Hiroakira Ono,
  \emph{{R}esiduated {L}attices: {A}n {A}lgebraic {G}limpse at {S}ubstructural
  {L}ogics, {V}olume 151}, 1st ed., Elsevier Science, San Diego, USA, 2007.

\bibitem{GaWi:FCA}
Bernhard Ganter and Rudolf Wille, \emph{Formal concept analysis: Mathematical
  foundations}, 1st ed., Springer-Verlag New York, Inc., Secaucus, NJ, USA,
  1997.

\bibitem{GaMoUlWi:DSCB}
Hector Garcia-Molina, Jeffrey~D. Ullman, and Jennifer Widom, \emph{Database
  systems: The complete book}, 2 ed., Prentice Hall Press, Upper Saddle River,
  NJ, USA, 2008.

\bibitem{Gog:Lic}
Joseph~A. Goguen, \emph{The logic of inexact concepts}, Synthese \textbf{19}
  (1979), 325--373.

\bibitem{Gr:CCQAR}
Todd~J. Green, \emph{Containment of conjunctive queries on annotated
  relations}, Proc. ICDT 2009 (New York, NY, USA), ACM, 2009, pp.~296--309.

\bibitem{GrKaTa:PS}
Todd~J. Green, Grigoris Karvounarakis, and Val Tannen, \emph{Provenance
  semirings}, Proc. ACM PODS 2007 (New York, NY, USA), ACM, 2007, pp.~31--40.

\bibitem{Haj:MFL}
Petr H\'ajek, \emph{Metamathematics of {F}uzzy {L}ogic}, Kluwer Academic
  Publishers, 1998.

\bibitem{He}
Arend Heyting, \emph{Die formalen {R}egeln der intuitionistischen {L}ogik},
  Sitzungsberichte der Preu{\ss}ischen Akademie der Wissenschaften zu Berlin
  (1930), 42--65.

\bibitem{Il:ASOTQPTiRDS}
Ihab~F. Ilyas, George Beskales, and Mohamed~A. Soliman, \emph{A survey of top-k
  query processing techniques in relational database systems}, ACM Comp. Surv.
  \textbf{40} (2008), no.~4, 11:1--11:58.

\bibitem{ImLi:Iird}
Tomasz Imieli\'{n}ski and Witold Lipski, Jr., \emph{Incomplete information in
  relational databases}, J. ACM \textbf{31} (1984), no.~4, 761--791.

\bibitem{KMP:TN}
Erich~Peter Klement, Radko Mesiar, and Endre Pap, \emph{Triangular {N}orms},
  Springer, 2000.

\bibitem{LiChCh:RQAaOfRTQ}
Chengkai Li, Kevin Chen-Chuan Chang, Ihab~F. Ilyas, and Sumin Song,
  \emph{Ranksql: query algebra and optimization for relational top-k queries},
  Proc. ACM SIGMOD 2005, 2005, pp.~131--142.

\bibitem{Li:Ptkdqud}
Xiang Lian and Lei Chen, \emph{Probabilistic top-k dominating queries in
  uncertain databases}, Information Sciences \textbf{226} (2013), 23--46.

\bibitem{Mai:TRD}
David Maier, \emph{Theory of {R}elational {D}atabases}, Computer Science Press,
  Rockville, MD, USA, 1983.

\bibitem{Me87}
Elliott Mendelson, \emph{Introduction to {M}athematical {L}ogic}, Chapman and
  Hall, 1987.

\bibitem{Ng:Tatkqubddm}
H.T.H. Nguyen and J.~Cao, \emph{Trustworthy answers for top-k queries on
  uncertain {B}ig {D}ata in decision making}, Information Sciences \textbf{318}
  (2015), 73--90.

\bibitem{OrRa:Rmcs}
Ewa Or\l{}owska and Anna~Maria Radzikowska, \emph{Double residuated lattices
  and their applications}, Relational Methods in Computer Science (Harrie C.~M.
  Swart, ed.), LNCS, vol. 2561, Springer Berlin Heidelberg, 2002, pp.~171--189.

\bibitem{Ra:SBataildo}
Cecylia Rauszer, \emph{Semi-boolean algebras and their applications to
  intuitionistic logic with dual operations}, Fund. Math. \textbf{83} (1973),
  219--249.

\bibitem{VaVy:Rdrad}
Ondrej Vaverka and Vilem Vychodil, \emph{Relational division in rank-aware
  databases}, CoRR
  \textbf{\href{http://arxiv.org/abs/1507.00541}{abs/1507.00541}} (2015).

\bibitem{Wa:TkqRDF}
Dong Wang, Lei Zou, and Dongyan Zhao, \emph{Top-k queries on {RDF} graphs},
  Information Sciences \textbf{316} (2015), 201--217.

\end{thebibliography}

\end{document}